\definecolor{newcolor}{rgb}{0.5,0,1}
\newcommand{\rightt}{\ding{52}}
\newcommand{\cmarkk}{\ding{56}}
\newcolumntype{I}{!{\vrule width 2pt}}
\newtheorem{Theorem}{Theorem}
\newtheorem{Definition}{Definition}
\newtheorem{Lemma}{Lemma}
\theoremstyle{remark}
\newtheorem{Remark}{Remark}
\newtheorem{Example}{Example}
\DeclareMathAlphabet{\mathpzc}{OT1}{pzc}{m}{it}
\definecolor{newcolor}{rgb}{0.5,0,1}
\def\BibTeX{{\rm B\kern-.05em{\sc i\kern-.025em b}\kern-.08em
    T\kern-.1667em\lower.7ex\hbox{E}\kern-.125emX}}
\let\emptyset\varnothing
\begin{document}

\title{A General Coding Framework for Adaptive Private Information Retrieval}
\author{Jinbao Zhu and Xiaohu Tang
\thanks{J. Zhu and X. Tang are 
with the Information Coding and Transmission (ICT) Key Laboratory of Sichuan Province,
Southwest Jiaotong University, Chengdu 611756, China
(email: jinbaozhu@swjtu.edu.cn, xhutang@swjtu.edu.cn).}}
\maketitle

\begin{abstract}
The problem of $T$-colluding private information retrieval (PIR) enables the user to retrieve one out of $M$ files from a distributed storage system with $N$ servers without revealing anything about the index of the desired file to any group of up to $T$ colluding servers. In the considered storage system, the $M$ files are stored across the $N$ distributed servers in an $X$-secure $K$-coded manner such that any group of up to $X$ colluding servers learns nothing about the files; the storage overhead at each server is reduced by a factor of $\frac{1}{K}$ compared to the total size of the files; and the files can be reconstructed from any $K+X$ servers.
However, in practical scenarios, when the user retrieves the desired file from the distributed system, some servers may respond to the user very slowly or not respond at all. These servers are referred to as \emph{stragglers}, and particularly their identities and numbers are unknown in advance and may change over time. 
This paper considers the adaptive PIR problem that can be capable of tolerating the presence of a varying number of stragglers. 
We propose a general coding method for designing adaptive PIR schemes by introducing the concept of a \emph{feasible PIR coding framework}. We demonstrate that any \emph{feasible PIR coding framework} over a finite field $\mathbb{F}_q$ with size $q$ can be used to construct an adaptive PIR scheme that achieves a retrieval rate of $1-\frac{K+X+T-1}{N-S}$ simultaneously for all numbers of stragglers $0\leq S\leq N-(K+X+T)$ over the same finite field.
Additionally, we provide an implementation of the \emph{feasible PIR coding framework}, ensuring that the adaptive PIR scheme operates over any finite field $\mathbb{F}_q$ with size $q\geq N+\max\{K, N-(K+X+T-1)\}$.
\end{abstract}
\begin{IEEEkeywords}
Private information retrieval, adaptive stragglers, coding framework, secure distributed storage, finite field.
\end{IEEEkeywords}

\section{Introduction}\label{Introduction}
In the current technological era, the rapid advancement of emerging technologies has created immense potential 
while also introducing new ethical and moral challenges. Among these, safeguarding data privacy and security remains a pressing concern.
Motivated by the demand to protect the privacy of downloaded information from public servers, private information retrieval (PIR) was introduced by Chor \emph{et al.} in 1995 \cite{ChorFOCS,chor1998private} and has garnered widespread attention over the past few decades. 

In the standard information-theoretic setting, PIR allows a user to retrieve a desired file from a set of $M$ files stored across $N$ distributed servers, while revealing nothing to any individual server about which file is being retrieved. To this end, the user sends private queries to each server, which generates corresponding responses based on the available data. The user then reconstructs the desired file from the collected responses. As in the information-theoretic formulation, the size of files is allowed to be arbitrary large, and the queries are typically designed without scaling with the file size \cite{sun2017capacity,sun2017optimal,tian2019capacity}. Therefore, the upload cost for sending queries is negligible compared to the download cost from the servers. The efficiency of a PIR scheme is measured in terms of the retrieval rate, defined as the ratio of the file size to the download cost from the servers. In other words, the retrieval rate means the number of bits that the user can privately retrieve per bit of downloaded data, and thus it should be maximized. 
The maximum achievable retrieval rate was first characterized as $\big(1+\frac{1}{N}+\ldots+\frac{1}{N^{M-1}}\big)^{-1}$ by Sun and Jafar in \cite{sun2017capacity} for the PIR problem with replicated storage. Since then, numerous intriguing variants and extensions of the fundamental PIR problem have been investigated to characterize their information-theoretic limits. 
We refer readers to \cite{ulukus2022private,vithana2023privateReview,d2024guided} for a more 
comprehensive survey. 

In practical scenarios, due to reasons such as server failures and the dynamic network environment, there are some servers that may respond to the user's queries very slowly or not respond at all, causing significant delays for the user in retrieving the desired file.
We refer to these servers as \emph{stragglers} \cite{dean2013tail,yadwadkar2016multi
}, particularly whose identities and numbers are unknown in advance and may change over time. While the problem of PIR in the presence of stragglers has been extensively studied in 
\cite{sun2017capacityColluding,tajeddine2019private,jia2020x,zhu2022multi,jia2022xFSL}, they all share a restrictive assumption that the user needs to know the exact number of stragglers in advance. 
Indeed, a PIR scheme can be designed for the maximal number of stragglers, but it does not achieve high retrieval rate when there are less stragglers. 
This motivates the study of the adaptive PIR problem, which aims to tolerate the presence of stragglers while simultaneously achieving a high retrieval rate based on the actual number of stragglers, even if their identities and numbers are unknown in advance and may change over time.
In the adaptive PIR problem, the user sends private queries to each server. Each server keeps computing and sending responses to the user until it receives sufficient responses from a varying number of the fastest servers to decode the desired file.

The adaptive PIR problem was initially explored by Tajeddine and Rouayheb in \cite{tajeddine2017robust} for coded storage systems, where the files are stored across the distributed servers according to $(N,K)$ MDS codes over a finite field $\mathbb{F}_q$ with size $q$. The adaptive PIR scheme presented in \cite{tajeddine2017robust} achieves the asymptotically optimal retrieval rate of $1-\frac{K}{N-S}$ (as the number of files $M$ approaches infinity) over the same finite field in which the MDS storage codes are defined, for all numbers of stragglers $0\leq S\leq N-K-1$, under the assumption that the identities of stragglers remain unchanged over time. 
Soon afterwards, the $T$-colluding adaptive PIR problem with replicated storage was considered by Bitar and Rouayheb in \cite{bitar2018staircase}, where any group of up to $T$ colluding servers may collude to deduce the index of desired file. 
They proposed an adaptive PIR scheme that achieves the asymptotically optimal retrieval rate of $1-\frac{T}{N-S}$ over the finite field $\mathbb{F}_q$ with size $q>N$, for all numbers of stragglers $0\leq S\leq N-T-1$, by utilizing the staircase secret sharing codes \cite{bitar2017staircase}. Furthermore, Jia and Jafar investigated the $T$-colluding PIR problem with $X$-secure $K$-coded storage in \cite{jia2020x}, where the $M$ files are stored across the $N$ servers such that any group of up to $X$ colluding servers learn nothing about the files; the storage overhead at each server is reduced by a factor of $\frac{1}{K}$ compared to the total size of the files; and the files can be reconstructed from any $K+X$ servers. 
They proposed a PIR scheme based on the idea of cross-subspace alignment \cite{jia2019cross} under the assumption that the number of stragglers is known in advance.
This scheme achieves a retrieval rate of $1-\frac{K+X+T-1}{N-S}$ over the finite field $\mathbb{F}_q$ with size $q\geq 2N-(K+X+T-1)$, for a known number of stragglers $S$ with $0\leq S\leq N-(K+X+T)$. 
Recently, the problem of federated submodel learning (FSL) has garnered attention \cite{jia2022xFSL,vithana2023private}. The FSL problem involves two distinct phases: the reading phase and the writing phase, with the reading phase being equivalently viewed as the PIR problem.
The problem of $T$-colluding FSL with $X$-secure $K$-coded storage was considered by Jia and Jafar in \cite{jia2022xFSL}. 
In the proposed FSL scheme \cite{jia2022xFSL}, the reading phase introduces a PIR scheme that achieves a retrieval rate of $1-\frac{K+X+T-1}{N-S}$ over the finite field $\mathbb{F}_q$ with size $q\geq N+\max\{K, N-(K+X+T-1)\}$ simultaneously for all numbers of stragglers $0\leq S\leq N-(K+X+T)$. However, it implicitly requires the user to know the actual number of stragglers before retrieving the desired file.

In this paper, we revisit the $T$-colluding PIR problem from an $X$-secure $K$-coded storage system. 
As far as we know, the current best retrieval rate for this PIR problem is $1-\frac{K+X+T-1}{N-S}$ in the presence of a known number of stragglers $S$ with $0\leq S\leq N-(K+X+T)$. This retrieval rate is also conjectured to be asymptotically optimal by Jia and Jafar \cite{jia2020x,jia2022xFSL}.
Given the more practical scenario where the identities and numbers of stragglers are unknown in advance and may change over time, we are interested in investigating the following question.

\noindent\textbf{\emph{Would it be possible to design an efficient adaptive PIR scheme that achieves a retrieval rate of $\mathbf{1-\frac{K+X+T-1}{N-S}}$ simultaneously for all numbers of stragglers $\mathbf{0\leq S\leq N-(K+X+T)}$ over a small finite field $\mathbf{\mathbb{F}_q}$?}}

We answer the question in the affirmative, and propose a general coding framework to design this efficient adaptive PIR scheme. 
The main contributions of this paper include the following two aspects.
\begin{enumerate}
\item We construct a general adaptive PIR scheme by introducing the concept of a \emph{feasible PIR coding framework}, demonstrating that any \emph{feasible PIR coding framework} over a finite field $\mathbb{F}_q$ of size $q$ can be utilized to construct an adaptive PIR scheme that achieves a retrieval rate of $1-\frac{K+X+T-1}{N-S}$ simultaneously for all numbers of stragglers $0\leq S\leq N-(K+X+T)$ over the same finite field $\mathbb{F}_q$.
\item We implement the \emph{feasible PIR coding framework} over the finite field $\mathbb{F}_q$ for any prime power $q$ of size $q\geq N+\max\{K,N-(K+X+T-1)\}$.
This framework is associated with an adaptive PIR scheme that achieves a retrieval rate of $1-\frac{K+X+T-1}{N-S}$ simultaneously for all numbers of stragglers $0\leq S\leq N-(K+X+T)$ over the finite field $\mathbb{F}_q$ of size $q\geq N+\max\{K,N-(K+X+T-1)\}$.
\end{enumerate}

In general, the adaptive PIR problem considered in this paper generalizes previously known related problems. It includes the adaptive PIR problem with coded storage \cite{tajeddine2017robust} and the $T$-colluding adaptive PIR problem \cite{bitar2018staircase} as special cases by setting $X=0,T=1$ and $X=0,K=1$, respectively. Additionally, it extends the $T$-colluding PIR problem with an $X$-secure $K$-coded storage system \cite{jia2020x,jia2022xFSL} to the more practical adaptive scenario. Table \ref{Model:comparision} provides a detailed comparison with these related PIR schemes.

\begin{table*}[htbp!]
\centering
\caption{Comparison of the adaptive PIR scheme proposed in this paper with previously related PIR schemes} \label{Model:comparision}
\begin{threeparttable}
  \begin{tabular}{|@{}c@{}|@{}c@{}|@{}c@{}|@{}c@{}|@{}c@{}|@{}c@{}|@{}c@{}|@{}c@{}|}
  \hline
\multirow{2}{*}{\diagbox[width=12.22em]{PIR Schemes}{PIR Settings}}   & \multirow{2}{*}{\shortstack{MDS Coded \\ Storage over $\mathbb{F}_q$}} & \multirow{2}{*}{\shortstack{$T$-Colluding\\ Privacy}} & \multirow{2}{*}{\shortstack{$X$-Secure\\ Storage}} & \multicolumn{2}{c|}{Straggler Scenarios} & \multirow{2}{*}{Retrieval Rate}  & \multirow{2}{*}{Finite Field Size} \\ \Xcline{5-6}{0.5pt}
& & & & Variable Identities$^{\dag}$  & Variable Number$^{\ddag}$ & & \\\hline
  Tajeddine-Rouayheb PIR \cite{tajeddine2017robust} & \rightt & \cmarkk & \cmarkk & \cmarkk & \rightt & $1-\frac{K}{N-S}$ & $q$ \\ \hline
  Bitar-Rouayheb PIR \cite{bitar2018staircase} & \cmarkk & \rightt & \cmarkk & \rightt & \rightt & $1-\frac{T}{N-S}$ & $N+1$ \\ \hline
  Jia-Jafar PIR \cite{jia2020x} & \rightt & \rightt & \rightt & \rightt & \cmarkk & $1-\frac{K+X+T-1}{N-S}$ & $N+\lambda^\S$ \\ \hline
 Jia-Jafar FSL \cite{jia2022xFSL} & \rightt & \rightt & \rightt & \rightt & \cmarkk & $1-\frac{K+X+T-1}{N-S}$ & $N+\max\{K, \lambda\}$ \\ \hline
  Our Adaptive PIR & \rightt & \rightt & \rightt & \rightt & \rightt & $1-\frac{K+X+T-1}{N-S}$ & $N+\max\{K, \lambda\}$ \\ \hline
\end{tabular}
\begin{tablenotes}
\footnotesize
\item[$^{\dag}$]Stragglers with variable identities mean that the identities of stragglers can change over time during the execution of the PIR scheme.
\item[$^{\ddag}$]A variable number of stragglers means that the number of stragglers can change over time during the execution of the PIR scheme.
\item[$^\S$]$\lambda=N-(K+X+T-1)$.
\end{tablenotes}
\end{threeparttable}
\end{table*}

The works most relevant to ours are \cite{jia2020x} and \cite{jia2022xFSL}, which consider the same $T$-colluding PIR problem with an $X$-secure $K$-coded storage system in the presence of $S$ stragglers for $0\leq S\leq N-(K+X+T)$. To enhance communication efficiency, reference \cite{jia2020x} partitions each file into $(N-S-(K+X+T-1)) \times K$ subfiles and proposes a PIR scheme that achieves a retrieval rate of $1-\frac{K+X+T-1}{N-S}$ over a finite field with size $q\geq 2N-(K+X+T-1)$, under the assumption of a known number of stragglers $S$ with $0\leq S\leq N-(K+X+T)$.
Compared to \cite{jia2020x}, both the PIR scheme in \cite{jia2022xFSL} and ours achieve the same retrieval rate of $1-\frac{K+X+T-1}{N-S}$ over a finite field of size $q\geq N+\max\{K,N-(K+X+T-1)\}$, simultaneously for all numbers of stragglers $0\leq S\leq N-(K+X+T)$. 
The most important difference is that reference \cite{jia2022xFSL} implicitly assumes that the actual number of stragglers is known before retrieving the desired file, whereas our scheme imposes no such restriction. 
More specifically, to simultaneously tolerate all numbers of stragglers $0\leq S\leq N-(K+X+T)$, reference \cite{jia2022xFSL} constructs the PIR scheme by further partitioning each file into $\mathrm{lcm}(1,2,\ldots,N-(K+X+T-1))\times K$ subfiles and then repeating the $S$-straggler scheme from \cite{jia2020x} $\frac{\mathrm{lcm}(1,2,\ldots,N-(K+X+T-1))}{N-S-(K+X+T-1)}$ times. The feasibility of this PIR scheme fundamentally relies on the prior knowledge of the actual number of stragglers $S$.
However, when the number of stragglers is unknown in advance, the core challenge lies in how to fully exploit the responses from all non-straggler servers to improve communication efficiency.
To address this, we design the adaptive PIR scheme by introducing a query array consisting of $N-(K+X+T-1)$ subarrays. The $0$-th subarray is used to design private queries for retrieving the desired file in the presence of $S=0$ straggler. For any given $h=1,2,\ldots,N-(K+X+T)$, the $h$-th subarray is used to design private queries to compensate for the missing responses required to decode the first $h$ subarray in the presence of $S=h$ stragglers. This ensures that all the responses associated with the $0$-th, $1$-th,$\ldots$, and $S$-th subarrays can be exploited to decode the desired file associated with the $0$-th subarray, even if the actual number of stragglers $S$ is unknown simultaneously for all $0\leq S\leq N-(K+X+T)$. In particular, by carefully designing the elements of the query array, our adaptive PIR scheme operates over the same finite field size as in \cite{jia2022xFSL}, demonstrating that providing an adaptive guarantee does not incur any increase in the finite field size.


The remainder of this paper is organized as follows. In Section \ref{Problem:formulation}, we formally formulate the adaptive PIR problem.  
In Section \ref{Section:PIR:framework}, we describe the \emph{feasible PIR coding framework} and provide its implementation.
In Section \ref{scheme:APIR}, we establish a connection between the \emph{feasible PIR coding framework} and the adaptive PIR scheme by showing that any \emph{feasible PIR coding framework} can be utilized to construct an adaptive PIR scheme.
Finally, this paper is concluded in Section \ref{conclusion}.

\subsubsection*{Notation}Throughout this paper, instead of the more conventional indexing of starting at $1$, all indices start from $0$, including both matrix and array indices.
Let boldface capital letters represent matrices or arrays, e.g., $\mathbf{W}$ and $\mathbf{U}$. 
Let cursive capital letters denote sets, such as $\mathcal{S}$, while $|\mathcal{S}|$ denotes the cardinality of the set $\mathcal{S}$.
For any non-negative integers $m,n$ such that $m\leq n$, $[m,n)$ and $[m:n]$ denote the sets $\{m,m+1,\ldots,n-1\}$ and $\{m,m+1,\ldots,n\}$, respectively.
Define $\mathcal{A}_{\mathcal{S}}$ as $\{\mathcal{A}_{s_0},\mathcal{A}_{s_1},\ldots,\mathcal{A}_{s_{n-1}}\}$ for any index set $\mathcal{S}=\{s_0,s_1,\ldots,s_{n-1}\}$ of size $n$.
For a matrix $\mathbf{W}$ of dimensions $m\times n$, the notation $\mathbf{W}(\mathcal{S},:)$ represents the submatrix of $\mathbf{W}$ formed by retaining only the rows corresponding to the elements in the set $\mathcal{S}$ for any subset $\mathcal{S}\subseteq[0:m)$.
For an array $\mathbf{U}$ of dimensions $m\times n$, the notation $u_{i,j}$ represents 
the element in the $i$-th row and $j$-th column of the array $\mathbf{U}$ for any given $i\in[0:m),j\in[0:n)$.
Denote by $(\cdot)_n$ the modulo $n$ operation.

\section{Problem Formulation}\label{Problem:formulation}
Consider a dataset containing $M$ independent files $\mathbf{W}^{(0)},\mathbf{W}^{(1)},\ldots,\mathbf{W}^{(M-1)}$, each of which consists of $L$ symbols drawn independently and uniformly from a finite field $\mathbb{F}_q$ for a prime power $q$, i.e.,
\begin{IEEEeqnarray*}{rCl}
H(\mathbf{W}^{(0)},\ldots,\mathbf{W}^{(M-1)})&=&\sum_{m=0}^{M-1}H(\mathbf{W}^{(m)}), \label{model:file inden} \\
H(\mathbf{W}^{(0)})&=&\ldots=H(\mathbf{W}^{(M-1)})=L, \label{infor indenpe}
\end{IEEEeqnarray*}
where the entropy function $H(\cdot)$ is measured with logarithm $q$.

The dataset is secretly and reliably stored in a distributed storage system with $N$ servers, while minimizing the storage overhead on each server as much as possible.
Denote the random variables stored at server $n$ by $\mathcal{Y}_n$ for all $n\in[0:N)$. Formally, the storage system needs to satisfy the following constraints.\footnote{The storage schemes that satisfy the constraints in \eqref{security}-\eqref{reconstruction} belong to a class of widely studied non-perfect secret sharing schemes known as \emph{ramp} schemes \cite{blakley1985security,jackson1996combinatorial}. It has been proven in \cite{capocelli1993size,csirmaz1997size} that any perfect secret sharing scheme \cite{shamir1979share,blakley1979safeguarding} requires the storage overhead at each server to be at least as large as the total size of the dataset. The primary motivation for defining ramp schemes is to reduce the storage overhead. A ramp scheme is considered \emph{optimal} if it meets the storage constraint in \eqref{storage:head} \cite{jackson1996combinatorial}. In general, we focus on the distributed storage system where the dataset is stored in an optimal ramp manner.
}
\begin{itemize}
    \item\textbf{Secrecy:} Any $X$ colluding servers learn nothing about the dataset. This ensures that the dataset is secure even if an eavesdropper listens to up to $X$ servers.
    \begin{IEEEeqnarray}{rCl}\label{security}
    I(\mathbf{W}^{(0)},\ldots,\mathbf{W}^{(M-1)};\mathcal{Y}_{\mathcal{X}})=0, \quad\forall\,\mathcal{X}\subseteq[0:N),|\mathcal{X}|\leq X. \IEEEeqnarraynumspace
    \end{IEEEeqnarray}
    \item \textbf{Reliability:} The dataset can be reconstructed by connecting to at least $K+X$ out of the $N$ servers. This provides data reliability against up to $N-(K+X)$ server failures.
    \begin{IEEEeqnarray}{c}\label{reconstruction}
    H(\mathbf{W}^{(0)},\ldots,\mathbf{W}^{(M-1)}|\mathcal{Y}_{\mathcal{K}})=0,\quad\forall\,\mathcal{K}\subseteq[0:N),|\mathcal{K}|\geq K+X.
    \end{IEEEeqnarray}
    \item \textbf{Storage Overhead:}  The storage overhead at each server is constrained to ${H(\mathbf{W}^{(0)},\ldots,\mathbf{W}^{(M-1)})}/{K}$, which is reduced by a factor of ${1}/{K}$ compared to the total size of the dataset. 
    \begin{IEEEeqnarray}{rCl}\label{storage:head}
    H(\mathcal{Y}_n)=\frac{H(\mathbf{W}^{(0)},\ldots,\mathbf{W}^{(M-1)})}{K}=\frac{ML}{K}, \quad\forall\,n\in[0:N).
    \end{IEEEeqnarray}
\end{itemize}
Clearly, the storage system includes $(N,K)$ MDS coded storage and $(N,X)$ perfect threshold secret share storage as special cases, obtained by setting $X=0$ and $K=1$, respectively.

In the private information retrieval (PIR) problem, the user privately selects an index $\theta$ from the range $0$ to $M-1$, and wishes to retrieve the file $\mathbf{W}^{(\theta)}$ from the distributed system without revealing any information about the index to any group of up to $T$ colluding servers. For this purpose, the user sends private queries to each server, and the sever accordingly generates responses based on the information it stores.
The user waits for the responses from the servers until it can decode the desired file. 
However, in practical scenarios, there are some straggler servers that may respond to the user very slowly or not respond at all, and particularly their identities and numbers are unknown in advance and may change over time. 

It would be preferable for a PIR scheme to \emph{adaptively} tolerate the presence of a varying number of stragglers, meaning that the user does not need to know the presence of stragglers a priori and  only waits for a sufficient number of responses from a varying number of fastest servers in order to recover the desired file. We formally present the definition of adaptive PIR schemes as follows.
\begin{Definition}[Adaptive Private Information Retrieval]
A PIR scheme is said to be adaptive if it can
tolerate the presence of a varying number of stragglers such that the user can privately retrieve the desired file $\mathbf{W}^{(\theta)}$ even in the presence of $S$ stragglers simultaneously for all $0\leq S\leq S_{\max}$, where $S_{\max}$ represents the maximum number of tolerated stragglers for the adaptive PIR scheme. More specifically, an adaptive PIR scheme comprises the following three phases.
\begin{enumerate}
\item \emph{Query Phase:} The user sends the $F$ queries $\mathcal{Q}^{(\theta)}_{n,0},\mathcal{Q}^{(\theta)}_{n,1},\ldots,\mathcal{Q}^{(\theta)}_{n,F-1}$ to each server $n$ for all $n\in[0:N)$. Indeed, all the queries are generated by the user without any prior knowledge of the data content stored at the servers, i.e.,
\begin{IEEEeqnarray}{c}\notag
    I(\mathcal{Y}_{[0:N)};\mathcal{Q}^{(\theta)}_{[0:N),[0:F)})=0.
\end{IEEEeqnarray}
\item \emph{Answer Phase:} Upon receiving the $F$ queries $\mathcal{Q}^{(\theta)}_{n,0},\mathcal{Q}^{(\theta)}_{n,1},\ldots,\mathcal{Q}^{(\theta)}_{n,F-1}$, the server $n$ sequentially generates the $F$ responses $\mathcal{A}_{n,0}^{(\theta)},\mathcal{A}_{n,1}^{(\theta)},\ldots,\mathcal{A}_{n,F-1}^{(\theta)}$ in the given order and sends $\mathcal{A}_{n,\ell}^{(\theta)}$ back to the user once its computation is completed for any $\ell\in[0:F)$. These responses are a determined function of the queries and the data stored at server $n$, i.e.,
\begin{IEEEeqnarray}{c}\notag
H(\mathcal{A}_{n,[0:F)}^{(\theta)}|\mathcal{Q}_{n,[0:F)}^{(\theta)},\mathcal{Y}_n)=0,\quad \forall\,n\in[0:N).
\end{IEEEeqnarray}
\item \emph{Decoding Phase:} In the presence of $S$ stragglers for all $0\leq S\leq S_{\max}$, the user recovers the file $\mathbf{W}^{(\theta)}$ after waiting for the first $F_S$ responses from each of the fastest $N-S$ servers. It is natural to require that  $1\leq F_0\leq F_1\leq\ldots\leq F_{S_{\max}}\leq F$ to enhance the communication efficiency of the adaptive PIR scheme.
\end{enumerate}
\end{Definition}
In general, the user sends $F$ queries to each server. Each server continues to calculate and sequentially sends $F$ responses to the user. The user can decode the desired file as long as each of the fastest $N-S$ servers successfully returns the first $F_S$ responses for all $0\leq S\leq S_{\max}$. 
This enables the PIR scheme to adaptively tolerate the presence of $S$ stragglers simultaneously for all $0\leq S\leq S_{\max}$.


The adaptive PIR scheme must satisfy the following two constraints.
\begin{itemize}
    \item \textbf{Correctness:} The user must be able to correctly decode $\mathbf{W}^{(\theta)}$ 
    after collecting the first $F_S$ responses from each of the fastest $N-S$ servers for all $0\leq S\leq S_{\max}$, i.e.,
\begin{IEEEeqnarray}{c}\notag
H(\mathbf{W}^{(\theta)}|\mathcal{A}_{[0:N)\backslash\mathcal{S},[0:F_S)}^{(\theta)},\mathcal{Q}_{[0:N),[0:F)}^{(\theta)})=0,\quad\forall\,\mathcal{S}\subseteq[0:N),|\mathcal{S}|=S.
\end{IEEEeqnarray}
    \item \textbf{Privacy:} The strategies for retrieving any two files $\mathbf{W}^{(\theta)}$ and $\mathbf{W}^{(\theta')}$ must be indistinguishable from the perspective of any up to $T$ colluding servers, i.e.,
\begin{IEEEeqnarray}{c}
(\mathcal{Q}_{\mathcal{T},[0:F)}^{(\theta)},\mathcal{A}_{\mathcal{T},[0:F)}^{(\theta)},\mathcal{Y}_{\mathcal{T}})\sim (\mathcal{Q}_{\mathcal{T},[0:F)}^{(\theta')},\mathcal{A}_{\mathcal{T},[0:F)}^{(\theta')},\mathcal{Y}_{\mathcal{T}}),\quad\forall\, \theta,\theta'\in[0:M), \forall\, \mathcal{T}\subseteq[0:N), |\mathcal{T}|\leq T,\notag
\end{IEEEeqnarray}
where $X\sim Y$ indicates that the random variables $X$ and $Y$ have identical distributions. Equivalently, all possible  information available to any $T$ colluding servers is independent of the index $\theta$ of the desired file, i.e.,
\begin{IEEEeqnarray}{c}\label{Infor:priva cons}
I(\theta;\mathcal{Q}_{\mathcal{T},[0:F)}^{(\theta)},\mathcal{A}_{\mathcal{T},[0:F)}^{(\theta)},\mathcal{Y}_{\mathcal{T}})=0, \quad \forall\, \mathcal{T}\subseteq[0:N), |\mathcal{T}|\leq T.
\end{IEEEeqnarray}
\end{itemize}

The performance of an adaptive PIR scheme is evaluated by the following two metrics.
\begin{enumerate}
  \item[1.] The retrieval rate $R_S$, which is the number of desired bits that the user can privately retrieve per downloaded bit in the presence of $S$ stragglers for all $0\leq S\leq S_{\max}$, defined as\footnote{To simplify the definition of retrieval rate, we assume that straggler servers will not provide any information, while at the same time the remaining non-straggler servers have identical response capabilities (i.e., if one non-straggler server sends a certain number of responses back to the user, then another non-straggler server will also send the same number of responses).
  } 
  \begin{IEEEeqnarray}{c}\label{def:PPC rate}
  R_S\triangleq\frac{H(\mathbf{W}^{(\theta)})}{\max\limits _{\mathcal{S}\subseteq[0:N),|\mathcal{S}|=S}\sum\limits_{n\in[0:N)\backslash\mathcal{S}}\sum\limits_{\ell\in[0:F_S)}H(\mathcal{A}_{n,\ell}^{(\theta)})}=\frac{L}{D_S},
  \end{IEEEeqnarray}
  where $D_S=\max\limits_{\mathcal{S}\subseteq[0:N),|\mathcal{S}|=S}\sum\limits_{n\in[0:N)\backslash\mathcal{S}}\sum\limits_{\ell\in[0:F_S)}H(\mathcal{A}_{n,\ell}^{(\theta)})$ is the total download cost from the  servers $[0:N)\backslash\mathcal{S}$ in the presence of $S$ stragglers, maximized over $\mathcal{S}\subseteq[0:N)$ with $|\mathcal{S}|=S$. 
  \item[2.] The finite field size $q$, which ensures the achievability of the 
  storage system and the adaptive PIR scheme.
\end{enumerate}

The objective of this paper is to design adaptive PIR schemes that maximize the retrieval rate $R_S$ simultaneously for all $0\leq S\leq S_{\max}$ while minimizing the finite field size $q$.

\section{Feasible PIR Coding Framework and Its Implementation}\label{Section:PIR:framework}
In this section, we present a \emph{feasible PIR coding framework} and provide its implementation. 
The feasible PIR coding framework serves as a building block for constructing the adaptive PIR scheme. To establish this connection, both the PIR coding framework and the adaptive PIR adopt the same notation, with the only difference being that the notation in the PIR framework is marked with a tilde ($\sim$). This notation convention highlights that, for an encoding function using the same notation in both the adaptive PIR and the feasible PIR framework, the encoding function in the adaptive PIR can be constructed in a manner analogous to its counterpart in the feasible PIR framework. 
For clarity, the main notations used in the feasible PIR coding framework and the adaptive PIR scheme are listed in Table \ref{tab:parameters}, along with accompanying explanations that further illustrate how the feasible PIR framework can be applied to construct the adaptive PIR scheme.

\begin{table*}[htbp]
\extrarowheight=4pt
\centering
\caption{Notations and Explanations in the Feasible PIR Coding Framework and the Adaptive PIR}
\begin{tabular}{|@{\;}c@{\;}|@{\;}c@{\;}|l@{\;}||@{\;}c@{\;}|l@{\;}|}
\hline  
\multirowcell{2}[0pt][c]{\textbf{PIR Phase}} & \multicolumn{2}{c||}{\textbf{Feasible PIR Coding Framework}} &  \multicolumn{2}{c|}{\textbf{Adaptive PIR}}   \\ \Xcline{2-5}{0.5pt} 
 & \textbf{Notation} & \textbf{Explanation} & \textbf{Notation} & \textbf{Explanation} \\ \hline 
\multirowcell{7}[0pt][c]{Storage} & \multirowcell{2}[0pt][c]{$\widetilde{\mathbf{W}}^{(m)}$} & \multirowcell{2}[0pt][l]{the $m$-th file of dimensions $\lambda\times K$ \\  for $m\in[0:M)$} & \multirowcell{2}[0pt][c]{$\mathbf{W}^{(m)}$} & \multirowcell{2}[0pt][l]{the $m$-th file of dimensions $P\times K$ for $m\in[0:M)$} \\   
&  & & & \\ \Xcline{2-5}{0.5pt}
&  \multirowcell{3}[0pt][c]{$\tilde{f}_i^{(m)}(x)$} & \multirowcell{3}[0pt][l]{storage function used to encode \\ the $i$-th row of  the $m$-th file for \\ $m\in[0:M)$ and $i\in[0:\lambda)$}
  & \multirowcell{3}[0pt][c]{$f_i^{(m)}(x)$} & \multirowcell{3}[0pt][l]{storage function used to encode the $i$-th row of \\ the $m$-th file for $m\in[0:M)$ and $i\in[0:P)$, \\ designed in a manner similar to $\tilde{f}_{(i)_{\lambda}}^{(m)}(x)$}  \\
&  & & & \\  
& & & & \\ \Xcline{2-5}{0.5pt}
& \multirowcell{2}[0pt][c]{$\widetilde{\mathcal{Y}}_n$} & \multirowcell{2}[0pt][l]{encoded data stored at server $n$ \\ for $n\in[0:N)$} & \multirowcell{2}[0pt][c]{$\mathcal{Y}_n$} & \multirowcell{2}[0pt][l]{encoded data stored at server $n$ for $n\in[0:N)$, \\ designed in a manner similar to $\widetilde{\mathcal{Y}}_n$} \\ 
&  & & & \\ \hline
\multirowcell{12}[0pt][c]{Query} &  \multirowcell{6}[0pt][c]{$\mathcal{R}$} & \multirowcell{6}[0pt][l]{row index of the desired partial file \\ for any nonempty subset $\mathcal{R}\!\subseteq\![0\!:\!\lambda)$}   & $\mathbf{U}$ & \multirowcell{1}[0pt][l]{query array consisting of $\lambda$ subarrays} \\
  \Xcline{4-5}{0.5pt}
 & & & $\mathbf{U}^{h}$ & \multirowcell{1}[0pt][l]{the $h$-th query subarray of dimensions $\lambda\times\Gamma^h$} \\ \Xcline{4-5}{0.5pt}
 & & & \multirowcell{2}[0pt][c]{$\mathcal{R}_j^{h}$} & \multirowcell{2}[0pt][l]{the set consisting of the integer elements in the $j$-th \\  column of the subarray $\mathbf{U}^{h}$ for $h\!\in\![0\!:\!\lambda),j\!\in\![0\!:\!\Gamma^h)$} \\
 & & & & \\ \Xcline{4-5}{0.5pt}
 & & & \multirowcell{2}[0pt][c]{$\widetilde{\mathcal{R}}^{h}_j$} & \multirowcell{2}[0pt][l]{the set consisting of the integer elements \\ in $\mathcal{R}^{h}_j$ modulo $\lambda$} \\
 & & & & \\ \Xcline{2-5}{0.5pt} 
 & \multirowcell{3}[0pt][c]{$\widetilde{q}_{i,k}^{(m,\mathcal{R})}(x)$}  & \multirowcell{3}[0pt][l]{query function used to retrieve \\ the $k$-th column of the partial \\ file $\widetilde{\mathbf{W}}^{(\theta)}(\mathcal{R},:)$ for $k\!\in\![0\!:\!K)$} & \multirowcell{3}[0pt][c]{$q_{i,k}^{(m,\mathcal{R}_j^h)}(x)$} & \multirowcell{3}[0pt][l]{query function used to retrieve the $k$-th column \\ of the partial file $\mathbf{W}^{(\theta)}(\mathcal{R}_j^h,:)$, designed in a \\ encoding manner similar to $\widetilde{q}_{(i)_{\lambda},k}^{(m,\widetilde{\mathcal{R}}_j^h)}(x)$} \\
 & & & & \\ & & & & \\   \Xcline{2-5}{0.5pt}
 & \multirowcell{3}[0pt][c]{$\widetilde{\mathcal{Q}}_n^{(\theta,\mathcal{R})}$} & \multirowcell{3}[0pt][l]{query sent to server $n$ for retrieving \\ the partial file $\widetilde{\mathbf{W}}^{(\theta)}(\mathcal{R},:)$} & \multirowcell{3}[0pt][c]{$\mathcal{Q}_n^{(\theta,\mathcal{R}_j^h)}$} & \multirowcell{3}[0pt][l]{query sent to server $n$ for retrieving the partial \\ file ${\mathbf{W}}^{(\theta)}(\mathcal{R}_j^h,:)$, designed in a encoding  \\  manner similar to $\widetilde{\mathcal{Q}}_n^{(\theta,\widetilde{\mathcal{R}}_j^h)}$} \\
 & & & & \\ & & & & \\   \hline
\multirowcell{2}[0pt][c]{Answer} & \multirowcell{2}[0pt][c]{$\widetilde{\mathcal{A}}_n^{(\theta,\mathcal{R})}$} & \multirowcell{2}[0pt][l]{response to the query $\widetilde{\mathcal{Q}}_n^{(\theta,\mathcal{R})}$} & \multirowcell{2}[0pt][c]{$\mathcal{A}_n^{(\theta,\mathcal{R}_j^h)}$} & \multirowcell{2}[0pt][l]{response to the query $\mathcal{Q}_n^{(\theta,\mathcal{R}_j^h)}$, designed in a encoding \\ manner similar to $\widetilde{\mathcal{A}}_n^{(\theta,\widetilde{\mathcal{R}}_j^h)}$} \\ 
 & & & & \\ \hline  
\multirowcell{2}[0pt][c]{Decoding} & \multirowcell{2}[0pt][c]{$\widetilde{\mathbf{W}}^{(\theta)}(\mathcal{R},:)$}  & \multirowcell{2}[0pt][l]{the partial file decoded from \\ the received responses} & \multirowcell{2}[0pt][c]{$\mathbf{W}^{(\theta)}(\mathcal{R}_j^h,:)$}  & \multirowcell{2}[0pt][l]{the partial file decoded from the received responses, \\ following a decoding method similar to $\widetilde{\mathbf{W}}^{(\theta)}(\widetilde{\mathcal{R}}_j^h,:)$} \\ 
 & & & & \\ \hline
  \end{tabular}
  \label{tab:parameters}
\end{table*}

\subsection{Feasible PIR Coding Framework}\label{Subsection:PIR:framework}
The PIR coding framework focuses on solving the following PIR problem. Assume that a dataset contains $M$ files, each of length $\widetilde{L}$, denoted as $\widetilde{\mathbf{W}}^{(0)},\widetilde{\mathbf{W}}^{(1)},\ldots,\widetilde{\mathbf{W}}^{(M-1)}$. Let $\widetilde{L}=\lambda K$, i.e., each file $\widetilde{\mathbf{W}}^{(m)}$ consists of $\lambda K$ symbols for any $m\in[0:M)$, where $\lambda$ is defined as
\begin{IEEEeqnarray}{c}\label{define:lambda}
\lambda\triangleq N-(K+X+T-1)
\end{IEEEeqnarray}
with $N>K+X+T-1$. Without loss of generality, we rearrange each file into a $\lambda\times K$ matrix, given by
\begin{IEEEeqnarray}{rCl}\label{Franmework:file:symbols}
\widetilde{\mathbf{W}}^{(m)} =\left[
  \begin{array}{@{\;}cccc@{\;}}
\widetilde{w}^{(m)}_{0,0} &  \widetilde{w}^{(m)}_{0,1} & \cdots & \widetilde{w}^{(m)}_{0,K-1}  \\
\widetilde{w}^{(m)}_{1,0} &  \widetilde{w}^{(m)}_{1,1} & \cdots & \widetilde{w}^{(m)}_{1,K-1}  \\
 \vdots &  \vdots & \vdots & \vdots \\
\widetilde{w}^{(m)}_{\lambda-1,0} &  \widetilde{w}^{(m)}_{\lambda-1,1} & \cdots &\widetilde{w}^{(m)}_{\lambda-1,K-1}\\
\end{array}
\right],\IEEEeqnarraynumspace \forall\,m\in[0:M).
\end{IEEEeqnarray}
The $M$ files are stored in a secure distributed system under the constraints similar to those of \eqref{security}, \eqref{reconstruction}, and \eqref{storage:head}. 

Unlike traditional PIR problems that focus on privately retrieving an \emph{entire file}, the PIR coding framework requires a more general retrieval scenario in which the user wishes to retrieve the \emph{partial file} $\widetilde{\mathbf{W}}^{(\theta)}(\mathcal{R},:)$ from the distributed storage system for any subset $\mathcal{R}\subseteq[0:\lambda)$ of size $r$ satisfying $1\leq r\leq\lambda$. This retrieval must be resilient to the presence of $\lambda-r$ stragglers, while keeping the index $\theta$ of the desired file private from any $T$ colluding servers. 
To address this PIR problem, we provide the following coding framework.



\subsubsection*{Secure Distributed Storage} To provide secure distributed storage, we encode each row of data in each file using random noise as a mask. Specifically, for any $m\in[0:M)$ and $i\in[0:\lambda)$, let $\widetilde{z}_{i,K}^{(m)},\ldots,\widetilde{z}_{i,K+X-1}^{(m)}$ be $X$ random noises distributed independently and uniformly on the finite field $\mathbb{F}_q$, 
and then define a storage function $\tilde{f}_i^{(m)}(x)$ as
\begin{IEEEeqnarray}{c}\label{PIRFramework:storage:polynomial}
\tilde{f}_i^{(m)}(x)=\sum\limits_{k=0}^{K-1}\widetilde{w}_{i,k}^{(m)}\cdot b_{i,k}(x)+\sum\limits_{k=K}^{K+X-1}\widetilde{z}_{i,k}^{(m)}\cdot b_{i,k}(x),\quad\forall\,m\in[0:M),i\in[0:\lambda),
\end{IEEEeqnarray}
where $\widetilde{w}_{i,0}^{(m)},\ldots,\widetilde{w}_{i,K-1}^{(m)}$ are the $K$ elements in the $i$-th row of the file $\widetilde{\mathbf{W}}^{(m)}$ by \eqref{Franmework:file:symbols}, and $b_{i,0}(x),\ldots,b_{i,K+X-1}(x)$ are deterministic functions of $x$ on $\mathbb{F}_q$. In particular, the functions $b_{i,0}(x),\ldots,b_{i,K+X-1}(x)$ are designed independently of the file index $m$, which means that all files are stored in the distributed system using the same encoding method.

Then, the evaluations of the storage functions $\{\tilde{f}_i^{(m)}(x):m\!\in\![0:M),i\!\in\![0:\lambda)\}$ at the point $x=\alpha_n$ are distributedly stored at the server $n$, given by
\begin{IEEEeqnarray}{c}\label{PIRFramework:storage:evaluation}
\widetilde{\mathcal{Y}}_n=\left\{\tilde{f}_i^{(m)}(\alpha_n):m\in[0:M),i\in[0:\lambda)\right\},\quad\forall\,n\in[0:N),
\end{IEEEeqnarray}
where $\alpha_0,\alpha_1,\ldots,\alpha_{N-1}$ are $N$ elements from $\mathbb{F}_q$, serving as the encoding parameters for this PIR framework.

\subsubsection*{Query Phase} To privately retrieve the desired symbols in $\widetilde{\mathbf{W}}^{(\theta)}(\mathcal{R},:)$ for any nonempty subset $\mathcal{R}\subseteq[0:\lambda)$, the user generates $T$ random noises $\widetilde{z}_{i,k,0}^{(m,\mathcal{R})},\ldots,\widetilde{z}_{i,k,T-1}^{(m,\mathcal{R})}$ and then constructs the private query function $\widetilde{q}_{i,k}^{(m,\mathcal{R})}(x)$ for any $m\in[0:M),i\in\mathcal{R}$, and $k\in[0:K)$, given by
\begin{IEEEeqnarray}{c}\label{Framework:query:polynomial}
\widetilde{q}_{i,k}^{(m,\mathcal{R})}(x)=\sum\limits_{t=0}^{T-1}\widetilde{z}_{i,k,t}^{(m,\mathcal{R})}\cdot v_{i,k,t}^{(\mathcal{R})}(x)
+\left\{
\begin{array}{@{}l@{\;\;}l}
v_{i,k,T}^{(\mathcal{R})}(x),
&\mathrm{if}\,\, m=\theta \\
0, & \mathrm{otherwise}
\end{array} \right.,
\end{IEEEeqnarray}
where $v_{i,k,0}^{(\mathcal{R})}(x),\ldots,v_{i,k,T}^{(\mathcal{R})}(x)$ are deterministic functions of $x$ on $\mathbb{F}_q$, designed independently of the file index $m$ since all files are stored using the same encoding method.

The query sent to the server $n$ is obtained by evaluating these query functions at $x=\alpha_n$, i.e.,
\begin{IEEEeqnarray}{c}\label{Framework:query:design}
\widetilde{\mathcal{Q}}_n^{(\theta,\mathcal{R})}=\left\{\widetilde{q}_{i,k}^{(m,\mathcal{R})}(\alpha_n):m\in[0:M),i\in\mathcal{R},k\in[0:K)\right\},\quad\forall\,n\in[0:N).
\end{IEEEeqnarray}

\subsubsection*{Answer Phase} The server $n$ will generate a response composed of $K$ sub-responses:
\begin{IEEEeqnarray}{c}\label{Framework:answers:1}
\widetilde{\mathcal{A}}_n^{(\theta,\mathcal{R})}=\left\{\widetilde{A}_{n,k}^{(\theta,\mathcal{R})}:k\in[0:K)\right\}, \quad\forall\,n\in[0:N),
\end{IEEEeqnarray}
where the sub-response $\widetilde{A}_{n,k}^{(\theta,\mathcal{R})}$ is determined based on the received queries $\{\widetilde{q}_{i,k}^{(m,\mathcal{R})}(\alpha_n)\}_{m\in[0:M),i\in\mathcal{R}}$ and the stored encoding data $\{\tilde{f}_{i}^{(m)}(\alpha_n)\}_{m\in[0:M),i\in\mathcal{R}}$, given by
\begin{IEEEeqnarray}{c}\label{Framework:answers:2}
\widetilde{A}_{n,k}^{(\theta,\mathcal{R})}=\sum\limits_{m\in[0:M)}\sum\limits_{i\in\mathcal{R}}\widetilde{q}_{i,k}^{(m,\mathcal{R})}(\alpha_n)\cdot \tilde{f}_{i}^{(m)}(\alpha_n),\quad\forall\,k\in[0:K).
\end{IEEEeqnarray}

\subsubsection*{Decoding Phase}
The user can decode the partial file $\widetilde{\mathbf{W}}^{(\theta)}(\mathcal{R},:)$ from the received responses, as described in the decoding conditions $\widetilde{\mathrm{F}}$2 and $\widetilde{\mathrm{F}}$3 of the \emph{feasible PIR coding framework} below.

Obviously, the PIR coding framework is determined by the encoding parameters  $\alpha_0,\alpha_1,\ldots,\alpha_{N-1}$ along with the encoding functions $\{b_{i,0}(x),\ldots,b_{i,K+X-1}(x)\}_{i\in[0:\lambda)}$ and $\{v_{i,k,0}^{(\mathcal{R})}(x),\ldots,v_{i,k,T}^{(\mathcal{R})}(x)\}_{i\in\mathcal{R},k\in[0:K)}$ for any nonempty subset $\mathcal{R}\subseteq[0:\lambda)$. 
We provide the formal definition of the \emph{feasible PIR coding framework} as follows.

\begin{Definition}[Feasible PIR Coding Framework]\label{Definition:feasible:PIR:framwork}
The described PIR coding framework is considered feasible over a finite field $\mathbb{F}_q$ of size $q$ if there exist a group of encoding parameters $\alpha_0,\alpha_1,\ldots,\alpha_{N-1}$ along with the encoding functions $\{b_{i,0}(x),\ldots,b_{i,K+X-1}(x)\}_{i\in[0:\lambda)}$ and $\{v_{i,k,0}^{(\mathcal{R})}(x),\ldots,v_{i,k,T}^{(\mathcal{R})}(x)\}_{i\in\mathcal{R},k\in[0:K)}$ over the finite field $\mathbb{F}_q$, such that the following four conditions are satisfied for any subset $\mathcal{R}\subseteq[0:\lambda)$ of size $r$ with $1\leq r\leq \lambda$:\footnote{The storage overhead constraint is satisfied because the storage overhead for each server is reduced by a factor of $K$ relative to the total size of all the files by \eqref{PIRFramework:storage:polynomial}-\eqref{PIRFramework:storage:evaluation}.}
\begin{enumerate}
\item[$\widetilde{\mathrm{F}}$0:] 
For any $i\in[0:\lambda)$ and any subset $\mathcal{K}=\{n_0,n_1,\ldots,n_{K+X-1}\}\subseteq[0:N)$ of size $K+X$, the following matrix $\widetilde{\mathbf{B}}_{i,\mathcal{K}}$ of dimensions $(K+X)\times (K+X)$ is non-singular over $\mathbb{F}_q$.
\begin{IEEEeqnarray}{c}\notag
\widetilde{\mathbf{B}}_{i,\mathcal{K}}=
\left[
\begin{array}{cccc}
b_{i,0}(\alpha_{n_0}) & b_{i,1}(\alpha_{n_0}) & \cdots & b_{i,K+X-1}(\alpha_{n_0}) \\
b_{i,0}(\alpha_{n_1}) & b_{i,1}(\alpha_{n_1}) & \cdots & b_{i,K+X-1}(\alpha_{n_1}) \\
    \vdots & \vdots & \vdots & \vdots \\
b_{i,0}(\alpha_{n_{K+X-1}}) & b_{i,1}(\alpha_{n_{K+X-1}}) & \cdots & b_{i,K+X-1}(\alpha_{n_{K+X-1}}) \\
  \end{array}
\right].
\end{IEEEeqnarray}

This ensures that all the files can be reconstructed by connecting to any $K+X$ servers. 
\item[$\widetilde{\mathrm{F}}$1:] The matrices $\{\widetilde{\mathbf{B}}_{i,\mathcal{X}}\}_{i\in[0:\lambda)}$ and $\{\widetilde{\mathbf{V}}_{i,k,\mathcal{T}}^{(\mathcal{R})}\}_{i\in\mathcal{R},k\in[0:K)}$ are all non-singular over $\mathbb{F}_q$ for any subset $\mathcal{X}=\{n_0,n_1,\ldots,n_{X-1}\}\subseteq[0:N)$ of size $X$ and any subset $\mathcal{T}=\{n_0,n_1,\ldots,n_{T-1}\}\subseteq[0:N)$ of size $T$, where 
\begin{IEEEeqnarray}{rCl}
\widetilde{\mathbf{B}}_{i,\mathcal{X}}&=&
\left[
\begin{array}{cccc}
b_{i,K}(\alpha_{n_0}) & b_{i,K+1}(\alpha_{n_0}) & \cdots & b_{i,K+X-1}(\alpha_{n_0}) \\
b_{i,K}(\alpha_{n_1}) & b_{i,K+1}(\alpha_{n_1}) & \cdots & b_{i,K+X-1}(\alpha_{n_1}) \\
    \vdots & \vdots & \vdots & \vdots \\
b_{i,K}(\alpha_{n_{X-1}}) & b_{i,K+1}(\alpha_{n_{X-1}}) & \cdots & b_{i,K+X-1}(\alpha_{n_{X-1}}) \\
  \end{array}
\right]_{X\times X},\notag\\
\widetilde{\mathbf{V}}_{i,k,\mathcal{T}}^{(\mathcal{R})}&=&
\left[
\begin{array}{cccc}
v_{i,k,0}^{(\mathcal{R})}(\alpha_{n_0}) & v_{i,k,1}^{(\mathcal{R})}(\alpha_{n_0}) & \cdots & v_{i,k,T-1}^{(\mathcal{R})}(\alpha_{n_0}) \\
v_{i,k,0}^{(\mathcal{R})}(\alpha_{n_1}) & v_{i,k,1}^{(\mathcal{R})}(\alpha_{n_1}) & \cdots & v_{i,k,T-1}^{(\mathcal{R})}(\alpha_{n_1}) \\
\vdots & \vdots & \vdots & \vdots \\
v_{i,k,0}^{(\mathcal{R})}(\alpha_{n_{T-1}}) & v_{i,k,1}^{(\mathcal{R})}(\alpha_{n_{T-1}}) & \cdots & v_{i,k,T-1}^{(\mathcal{R})}(\alpha_{n_{T-1}}) \\
\end{array}
\right]_{T\times T}.\notag
\end{IEEEeqnarray}
This ensures that the storage system is secure against any $X$ colluding servers and the queries sent to the servers are private against any $T$ colluding servers.
\item[$\widetilde{\mathrm{F}}$2:] The user can recover the partial file $\widetilde{\mathbf{W}}^{(\theta)}(\mathcal{R},:)$ from any $N-(\lambda-r)$ out of the $N$ responses $\widetilde{\mathcal{A}}_{[0:N)}^{(\theta,\mathcal{R})}$. This ensures that the desired partial file can be decoded even in the presence of $\lambda-r$ stragglers.
\item[$\widetilde{\mathrm{F}}$3:] The partial file $\widetilde{\mathbf{W}}^{(\theta)}(\mathcal{R},:)$ can be recovered from any $N-(\lambda-r+d)$ out of the $N$ responses $\widetilde{\mathcal{A}}_{[0:N)}^{(\theta,\mathcal{R})}$ after obtaining any $d$ rows of    $\widetilde{\mathbf{W}}^{(\theta)}(\mathcal{R},:)$ for any $d\!\in\![1:r)$, i.e., for any subset $\Delta\!\subseteq\![0:N)$ of size $N-(\lambda-r+d)$ and any subset $\mathcal{D}\!\subseteq\!\mathcal{R}$ of size $d$,
\begin{IEEEeqnarray}{c}
H\big(\widetilde{\mathbf{W}}^{(\theta)}(\mathcal{R},:)|\widetilde{\mathcal{A}}_{\Delta}^{(\theta,\mathcal{R})},\widetilde{\mathbf{W}}^{(\theta)}(\mathcal{D},:)\big)=0.\notag
\end{IEEEeqnarray}
This ensures that the desired partial file can be decoded even in the presence of $\lambda-r+d$ stragglers, as long as any $d$ rows of the partial file are available. This is a crucial property that enables the adaptive PIR scheme, constructed using the coding framework, to successfully decode the desired file.
\end{enumerate}
\end{Definition} 

It is important to emphasize that the encoding parameters and encoding functions of any PIR coding framework are designed independently of the values of the files $\widetilde{\mathbf{W}}^{(0)},\widetilde{\mathbf{W}}^{(1)},\ldots,\widetilde{\mathbf{W}}^{(M-1)}$.
Therefore, any PIR coding framework satisfies the conditions $\widetilde{\mathrm{F}}$0--$\widetilde{\mathrm{F}}$3, regardless of the files $\widetilde{\mathbf{W}}^{(0)},\widetilde{\mathbf{W}}^{(1)},\ldots,\widetilde{\mathbf{W}}^{(M-1)}$.



Next, we illustrate the \emph{feasible PIR coding framework} through an example. This example will be extended into the adaptive scenario in the next section to demonstrate the connection between the \emph{feasible PIR coding framework} and the proposed adaptive PIR scheme.
\begin{Example}\label{Flexible:example}
Consider a PIR problem with system parameters $N=8,K=X=T=2,M=3$. Here, we have $\lambda=3$. Then, each of the $3$ files $\widetilde{\mathbf{W}}^{(0)},\widetilde{\mathbf{W}}^{(1)},\widetilde{\mathbf{W}}^{(2)}$ can be represented as a matrix of dimensions $3\times 2$, given by
\begin{IEEEeqnarray}{c}\label{example:files}
\widetilde{\mathbf{W}}^{(0)} =\left[
\begin{array}{@{\;}cc@{\;}}
\widetilde{w}^{(0)}_{0,0} & \widetilde{w}^{(0)}_{0,1}  \\
\widetilde{w}^{(0)}_{1,0} & \widetilde{w}^{(0)}_{1,1}  \\
\widetilde{w}^{(0)}_{2,0} & \widetilde{w}^{(0)}_{2,1}  \\
\end{array}
\right], \quad
\widetilde{\mathbf{W}}^{(1)} =\left[
\begin{array}{@{\;}cc@{\;}}
\widetilde{w}^{(1)}_{0,0} & \widetilde{w}^{(1)}_{0,1}  \\
\widetilde{w}^{(1)}_{1,0} & \widetilde{w}^{(1)}_{1,1}  \\
\widetilde{w}^{(1)}_{2,0} & \widetilde{w}^{(1)}_{2,1}  \\
\end{array}
\right], \quad
\widetilde{\mathbf{W}}^{(2)} =\left[
\begin{array}{@{\;}cc@{\;}}
\widetilde{w}^{(2)}_{0,0} & \widetilde{w}^{(2)}_{0,1}  \\
\widetilde{w}^{(2)}_{1,0} & \widetilde{w}^{(2)}_{1,1}  \\
\widetilde{w}^{(2)}_{2,0} & \widetilde{w}^{(2)}_{2,1}  \\
\end{array}
\right].
\end{IEEEeqnarray}

For any given \emph{feasible PIR coding framework}, let $\alpha_0,\alpha_1,\ldots,\alpha_{7}$ represent its encoding parameters, while $\{b_{i,0}(x),b_{i,1}(x),b_{i,2}(x),b_{i,3}(x)\}_{i\in[0:3)}$ and $\{v_{i,k,0}^{(\mathcal{R})}(x),v_{i,k,1}^{(\mathcal{R})}(x),v_{i,k,2}^{(\mathcal{R})}(x)\}_{i\in\mathcal{R},k\in[0:2)}$ denote its encoding functions for any subset $\mathcal{R}\!\subseteq\![0\!:\!3)$ of size $1\!\leq\! r\!\leq\! 3$. Next, we utilize this feasible PIR framework to provide a PIR construction for privately retrieving the partial file $\widetilde{\mathbf{W}}^{(\theta)}(\mathcal{R},:)$, for any $\theta\!\in\![0\!:\!3)$ and any nonempty subset $\mathcal{R}\!\subseteq\![0\!:\!3)$.

\subsubsection*{Secure Distributed Storage}
In the storage phase, we design secure storage functions to encode the data in each row of the three files using $\{b_{i,0}(x),\ldots,b_{i,3}(x)\}_{i\in[0:3)}$, i.e., for any $m\in[0:3)$,
\begin{IEEEeqnarray}{rCl}
\tilde{f}_0^{(m)}(x)&=&\widetilde{w}_{0,0}^{(m)}\cdot b_{0,0}(x)+\widetilde{w}_{0,1}^{(m)}\cdot b_{0,1}(x)+\widetilde{z}_{0,2}^{(m)}\cdot b_{0,2}(x)+\widetilde{z}_{0,3}^{(m)}\cdot b_{0,3}(x), \label{Example:framework:storage:1}\\
\tilde{f}_1^{(m)}(x)&=&\widetilde{w}_{1,0}^{(m)}\cdot b_{1,0}(x)+\widetilde{w}_{1,1}^{(m)}\cdot b_{1,1}(x)+\widetilde{z}_{1,2}^{(m)}\cdot b_{1,2}(x)+\widetilde{z}_{1,3}^{(m)}\cdot b_{1,3}(x), \label{Example:framework:storage:2}\\
\tilde{f}_2^{(m)}(x)&=&\widetilde{w}_{2,0}^{(m)}\cdot b_{2,0}(x)+\widetilde{w}_{2,1}^{(m)}\cdot b_{2,1}(x)+\widetilde{z}_{2,2}^{(m)}\cdot b_{2,2}(x)+\widetilde{z}_{2,3}^{(m)}\cdot b_{2,3}(x),\label{Example:framework:storage:3}
\end{IEEEeqnarray}
where $\{\widetilde{z}_{i,2}^{(m)},\widetilde{z}_{i,3}^{(m)}\}_{i\in[0:3)}$ are random noises over the finite field $\mathbb{F}_q$ and are used to provide secure storage guarantee.
Then the data stored at server $n,n\in[0:8)$ is given by
\begin{IEEEeqnarray}{c}
\mathcal{Y}_n=\left\{
\begin{array}{@{}ccc@{}}
\tilde{f}_{0}^{(0)}(a_n), & \tilde{f}_{0}^{(1)}(a_n), & \tilde{f}_{0}^{(2)}(a_n) \\
\tilde{f}_{1}^{(0)}(a_n), & \tilde{f}_{1}^{(1)}(a_n), & \tilde{f}_{1}^{(2)}(a_n) \\
\tilde{f}_{2}^{(0)}(a_n), & \tilde{f}_{2}^{(1)}(a_n), & \tilde{f}_{2}^{(2)}(a_n) \\
\end{array}
\right\}.\notag
\end{IEEEeqnarray}

\subsubsection*{Query Phase}
To privately retrieve the partial file $\widetilde{\mathbf{W}}^{(\theta)}(\mathcal{R},:)$, we choose the random noises $\widetilde{z}_{i,k,0}^{(m,\mathcal{R})},\widetilde{z}_{i,k,1}^{(m,\mathcal{R})}$ and design the query function $\widetilde{q}_{i,k}^{(m,\mathcal{R})}(x)$ using $\{v_{i,k,0}^{(\mathcal{R})}(x),v_{i,k,1}^{(\mathcal{R})}(x),v_{i,k,2}^{(\mathcal{R})}(x)\}$
for any $m\!\in\![0:3),i\!\in\!\mathcal{R}$, and $k\!\in\![0:2)$, given by
\begin{IEEEeqnarray}{c}\label{Example:framework:query:polynomial}
\widetilde{q}_{i,k}^{(m,\mathcal{R})}(x)=
\widetilde{z}_{i,k,0}^{(m,\mathcal{R})}\cdot v_{i,k,0}^{(\mathcal{R})}(x)+
\widetilde{z}_{i,k,1}^{(m,\mathcal{R})}\cdot v_{i,k,1}^{(\mathcal{R})}(x)
+\left\{
\begin{array}{@{}l@{\;\;}l}
v_{i,k,2}^{(\mathcal{R})}(x),
&\mathrm{if}\,\, m=\theta \\
0, & \mathrm{otherwise}
\end{array} \right..
\end{IEEEeqnarray}
Then the query sent to the server $n,n\in[0:8)$ is given by
\begin{IEEEeqnarray}{c}
\widetilde{\mathcal{Q}}_n^{(\theta,\mathcal{R})}=\left\{\widetilde{q}_{i,k}^{(m,\mathcal{R})}(\alpha_n):m\in[0:3),i\in\mathcal{R},k\in[0:2)\right\}.\notag
\end{IEEEeqnarray}

\subsubsection*{Answer Phase}
The server $n,n\in[0:8)$ will generate the response
\begin{IEEEeqnarray}{c}\label{Example:PIR:answer:1}
\widetilde{\mathcal{A}}_n^{(\theta,\mathcal{R})}=\left\{\widetilde{A}_{n,k}^{(\theta,\mathcal{R})}:k\in[0:2)\right\},
\end{IEEEeqnarray}
where $\widetilde{A}_{n,k}^{(\theta,\mathcal{R})}$ is given by 
\begin{IEEEeqnarray}{c}\label{Example:PIR:answer:2}
\widetilde{A}_{n,k}^{(\theta,\mathcal{R})}=\sum\limits_{m\in[0:3)}\sum\limits_{i\in\mathcal{R}}\widetilde{q}_{i,k}^{(m,\mathcal{R})}(\alpha_n)\cdot \tilde{f}_{i}^{(m)}(\alpha_n).
\end{IEEEeqnarray}

\subsubsection*{Decoding Phase}
Since the \emph{feasible PIR coding framework} satisfies the conditions $\widetilde{\mathrm{F}}$2 and $\widetilde{\mathrm{F}}$3, 
the user can decode the partial file $\widetilde{\mathbf{W}}^{(\theta)}(\mathcal{R},:)$ from any $5+r-d$ out of the $8$ responses $\widetilde{\mathcal{A}}_{0}^{(\theta,\mathcal{R})},\widetilde{\mathcal{A}}_{1}^{(\theta,\mathcal{R})},\ldots,\widetilde{\mathcal{A}}_{7}^{(\theta,\mathcal{R})}$ after obtaining any $d$ rows of $\widetilde{\mathbf{W}}^{(\theta)}(\mathcal{R},:)$, for any subset $\mathcal{R}\subseteq[0:3)$ of size $1\leq r\leq 3$ and any $d\in[0:r)$.
\end{Example}

\subsection{Implementation of Feasible PIR Coding Framework}\label{subsection:implementation:PIR}


Lagrange codes \cite{yu2019lagrange,zhu2022generalized,zhu2024generalized} and cross subspace alignment codes \cite{jia2019cross,jia2020x,jia2022xFSL,jia2020asymptotic} are two distinct types of codes that can effectively address various PIR problems, such as the multi-user blind PIR problem \cite{lu2021double,zhu2022multi} and the private function computation problem \cite{zhu2022symmetric,raviv2019private222,jia2024asymptotic}. Both types of codes can be used to separately implement the \emph{feasible PIR coding framework}. Here, to simplify the discussion, we only provide the construction of a \emph{feasible PIR coding framework} based on Lagrange codes, following the ideas from \cite[Section III]{zhu2022symmetric} and \cite[Section III]{zhu2022multi}.


Before proceeding, we identify a group of elements from the finite field $\mathbb{F}_q$ that will serve as the encoding parameters for the \emph{feasible PIR coding framework} implemented using Lagrange codes.


Let $\{\beta_{i,k}:i\in[0:\lambda),k\in[0:K+X)\}$ be the elements from the finite field $\mathbb{F}_q$ of size $q$. These elements can be denoted by a matrix $\boldsymbol{\beta}$ with dimensions $\lambda\times(K+X)$, given by
\begin{IEEEeqnarray}{rCl}\notag
\boldsymbol{\beta}\triangleq\left[
  \begin{array}{ccc;{2pt/2pt}ccc}
    \beta_{0,0} & \cdots & \beta_{0,K-1} & \beta_{0,K} & \cdots & \beta_{0,K+X-1}  \\
    \vdots & \vdots & \vdots & \vdots & \vdots & \vdots \\
    \beta_{\lambda-1,0} & \cdots &\beta_{\lambda-1,K-1} & \beta_{\lambda-1,K} & \cdots &\beta_{\lambda-1,K+X-1} \\
\end{array}
\right].
\end{IEEEeqnarray}

In order to ensure the feasibility of the PIR coding framework constructed using Lagrange codes, we require that the elements $\{\beta_{i,k}\!:\!i\!\in\![0\!:\!\lambda),k\!\in\![0\!:\!K\!+\!X)\}$, along with $\{\alpha_n\!:\!n\!\in\![0\!:\!N)\}$, satisfy the following four constraints.
\begin{enumerate}
  \item[P0.] All elements in each row of the matrix $\boldsymbol{\beta}$ are pairwise distinct, i.e., for any given $i\in[0:\lambda)$, $\beta_{i,k}\neq\beta_{i,k'}$ for all $k, k'\in[0:K+X)$ with $k\neq k'$;
  \item[P1.] Each column in the first $K$ columns of the matrix $\boldsymbol{\beta}$ is composed of distinct elements, i.e., for any given $k\in[0:K)$, $\beta_{i,k}\neq\beta_{i',k}$ for all $i,i'\in[0:\lambda)$ with $i\neq i'$;
  \item[P2.] The elements $\alpha_0,\ldots,\alpha_{N-1}$ are distinct of each other, i.e., $\alpha_n\neq\alpha_{n'}$ for all $n,n'\in[0:N)$ with $n\neq n'$;
  \item[P3.] The elements $\alpha_0,\ldots,\alpha_{N-1}$ are distinct from the elements in the first $K$ columns of the matrix $\boldsymbol{\beta}$, i.e., $\{\alpha_n:n\in[0:N)\}\cap\{\beta_{i,k}:i\in[0:\lambda),k\in[0:K)\}=\emptyset$.
\end{enumerate}

There exists a group of elements $\{\beta_{i,k},\alpha_n:i\!\in\![0:\lambda),k\!\in\![0:K+X),n\!\in\![0:N)\}$ that satisfies the constraints P0--P3 in a finite field $\mathbb{F}_q$, for any prime power $q$ with size $q\geq N+\max\{K, \lambda\}$. In particular, the optimal size of the finite field $\mathbb{F}_q$ that satisfies the constraints P0--P3 is $N+\max\{K,\lambda\}$. We summarize this result in the following lemma, and its proof is placed in Appendix-\ref{Proof:finite:field:size}.
\begin{Lemma}\label{theorem:tield size}
The optimal size of the finite field $\mathbb{F}_q$ that satisfies the constraints P0--P3 is $q=N+\max\{K,\lambda\}$.
\end{Lemma}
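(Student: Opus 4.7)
The plan is to establish the lemma via matching lower and upper bounds on the field size.

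\textbf{Lower bound.} I will show that any collection of elements satisfying P0--P3 forces $q\geq N+\max\{K,\lambda\}$ by a simple counting argument. Constraint P2 requires $N$ distinct values $\alpha_0,\ldots,\alpha_{N-1}$, and P3 forces these $N$ values to be disjoint from every entry in the first $K$ columns of $\boldsymbol{\beta}$. Within the first $K$ columns, restricting P0 to any single fixed row yields at least $K$ distinct elements, while restricting P1 to any single fixed column yields at least $\lambda$ distinct elements; hence the first $K$ columns must collectively use at least $\max\{K,\lambda\}$ distinct elements of $\mathbb{F}_q$. Combining with the $N$ disjoint $\alpha$-values gives $q\geq N+\max\{K,\lambda\}$.

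\textbf{Upper bound.} For achievability whenever $q\geq N+\max\{K,\lambda\}$ (and in particular when $q=N+\max\{K,\lambda\}$ is a prime power), I will give an explicit cyclic construction. Set $L=\max\{K,\lambda\}$, choose any $N$ distinct elements of $\mathbb{F}_q$ to serve as $\alpha_0,\ldots,\alpha_{N-1}$, and enumerate $L$ further elements disjoint from them as $\gamma_0,\ldots,\gamma_{L-1}$ (possible since $q\geq N+L$). Define $\beta_{i,k}=\gamma_{(k+i)_L}$ for $k\in[0:K)$ and $i\in[0:\lambda)$. Because $K\leq L$, for each fixed row $i$ the indices $(k+i)_L$, $k\in[0:K)$, are pairwise distinct, giving P0 on the first $K$ columns; because $\lambda\leq L$, for each fixed column $k$ the indices $(k+i)_L$, $i\in[0:\lambda)$, are pairwise distinct, giving P1. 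For the remaining columns $k\in[K:K+X)$, fill $\beta_{i,k}$ row by row with any element of $\mathbb{F}_q$ not yet appearing in row $i$; this greedy choice always succeeds because each row needs only $K+X$ distinct values and the field provides $q\geq N+L\geq K+X$ candidates. Finally, P3 is automatic, since it constrains only the first $K$ columns, whose entries lie in $\{\gamma_0,\ldots,\gamma_{L-1}\}$, which is disjoint from $\{\alpha_0,\ldots,\alpha_{N-1}\}$ by construction.

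I do not anticipate any substantial obstacle. The essential observation behind the lower bound is that P0 and P1 act on orthogonal directions of the $\lambda\times K$ subarray, so its first $K$ columns require only $\max\{K,\lambda\}$ rather than $K\lambda$ distinct symbols. The only subtlety in the upper bound is to satisfy P0 and P1 simultaneously on the first $K$ columns with exactly $L$ symbols, which the cyclic-shift rule resolves in the manner of a Latin rectangle; the last $X$ columns are constrained only by P0 and can be filled greedily from the remaining field elements.
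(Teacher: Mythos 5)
Your proposal is correct and follows essentially the same route as the paper: the identical counting lower bound (row-distinctness from P0 gives $K$ symbols, column-distinctness from P1 gives $\lambda$ symbols in the first $K$ columns, all disjoint from the $N$ values $\alpha_n$), and a cyclic-shift construction of the first $K$ columns from $\max\{K,\lambda\}$ elements disjoint from the $\alpha_n$ for achievability. The only cosmetic differences are that your single formula $\beta_{i,k}=\gamma_{(k+i)_L}$ unifies the paper's two cases $K>\lambda$ and $K\leq\lambda$, and you fill the last $X$ columns greedily whereas the paper simply reuses $\alpha_0,\ldots,\alpha_{X-1}$ there; both choices are valid since P3 constrains only the first $K$ columns.
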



Next, we implement the \emph{feasible PIR coding framework} based on Lagrange codes,  utilizing the group of elements $\{\beta_{i,k},\alpha_n:i\in[0:\lambda),k\in[0:K+X),n\in[0:N)\}$ that satisfies the constraints P0--P3 as encoding parameters. 
It is sufficient to provide the construction of the encoding functions $\{b_{i,0}(x),\ldots,b_{i,K+X-1}(x)\}_{i\in[0:\lambda)}$ and $\{v_{i,k,0}^{(\mathcal{R})}(x),\ldots,v_{i,k,T}^{(\mathcal{R})}(x)\}_{i\in\mathcal{R},k\in[0:K)}$ for any nonempty subset $\mathcal{R}\subseteq[0:\lambda)$.
The key idea behind designing these encoding functions is to construct the secure storage functions and private query functions using Lagrange interpolating polynomials.

More formally, for any $m\in[0:M)$ and $i\in[0:\lambda)$, the storage function $\tilde{f}_i^{(m)}(x)$ is constructed as a Lagrange interpolating polynomial of degree $K+X-1$, given by
\begin{IEEEeqnarray}{c}
\tilde{f}_i^{(m)}(x)=\sum\limits_{k=0}^{K-1}\widetilde{w}_{i,k}^{(m)}\cdot\underbrace{\prod_{j\in[0:K+X)\backslash\{k\}}\frac{x-\beta_{i,j}}{\beta_{i,k}-\beta_{i,j}}}_{=b_{i,k}(x)}+\sum\limits_{k=K}^{K+X-1}\widetilde{z}_{i,k}^{(m)}\cdot\underbrace{\prod_{j\in[0:K+X)\backslash\{k\}}\frac{x-\beta_{i,j}}{\beta_{i,k}-\beta_{i,j}}}_{=b_{i,k}(x)}. \notag 
\end{IEEEeqnarray}
For any given subset $\mathcal{R}\subseteq[0:\lambda)$ of size $r$ with $1\leq r\leq\lambda$, the query function $\widetilde{q}_{i,k}^{(m,\mathcal{R})}(x)$ is constructed as a Lagrange interpolating polynomial of degree $r+T-1$ for any $m\in[0:M),i\in\mathcal{R},$ and $k\in[0:K)$, given by
\begin{IEEEeqnarray}{c}
\widetilde{q}_{i,k}^{(m,\mathcal{R})}\!(x)\!=\!\!\sum\limits_{t=0}^{T-1}\widetilde{z}_{i,k,t}^{(m,\mathcal{R})}\!\!\cdot\!\!\underbrace{\bigg(\!\prod\limits_{j\in[0:T)\backslash\{t\}}\!\!\frac{x\!-\!\alpha_{j}}{\alpha_{t}\!-\!\alpha_{j}}\!\bigg)\!\!\bigg(\!\prod\limits_{j\in\mathcal{R}}\!\!\frac{x\!-\!\beta_{j,k}}{\alpha_{t}\!-\!\beta_{j,k}}\!\bigg)}_{=v_{i,k,t}^{(\mathcal{R})}(x)}
\!+\!\left\{
\begin{array}{@{}l@{\;\,}l}
\!\!\underbrace{\bigg(\!\prod\limits_{j\in\mathcal{R}\backslash\{i\}}\!\!\frac{x\!-\!\beta_{j,k}}{\beta_{i,k}\!-\!\beta_{j,k}}\!\bigg)\!\!\bigg(\!\prod\limits_{j\in[0:T)}\!\!\frac{x\!-\!\alpha_{j}}{\beta_{i,k}\!-\!\alpha_{j}}\!\bigg)}_{=v_{i,k,T}^{(\mathcal{R})}(x)},
&\mathrm{if}\,\, m\!=\!\theta \\
0, & \mathrm{otherwise}
\end{array} \right.\!\!. \notag
\end{IEEEeqnarray}

The feasibility of this PIR coding framework can be readily established by following arguments similar to those in \cite[Sections III-C and IV]{zhu2022symmetric} and \cite[Section III-D and Appendix]{zhu2022multi}.
At a high level, the conditions $\widetilde{\mathrm{F}}$0 and $\widetilde{\mathrm{F}}$1 can be proved by the Lagrange interpolation theorem and the constraints P0--P3. In addition, the conditions $\widetilde{\mathrm{F}}$2 and $\widetilde{\mathrm{F}}$3 are satisfied because the server responses can be interpreted as evaluations of the product of the query and storage polynomials, such that the desired partial file can be recovered through interpolation of the resulting product polynomial.
A detailed example is provided in Appendix-\ref{example:lagrange:codes} to demonstrate the feasibility of the PIR coding framework based on Lagrange codes.

By Lemma \ref{theorem:tield size}, we obtain the following results regarding the \emph{feasible PIR coding framework}.
\begin{Lemma}\label{lemma:feasible:PIR}
The \emph{feasible PIR coding framework} can be implemented over the finite field $\mathbb{F}_q$ for any prime power $q$ with size $q\geq N+\max\{K, \lambda\}$.
\end{Lemma}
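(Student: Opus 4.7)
The plan is to instantiate the Lagrange-codes construction described just above the lemma, using encoding parameters guaranteed by Lemma 1, and then to verify that the four conditions $\widetilde{\mathrm{F}}$0--$\widetilde{\mathrm{F}}$3 of Definition 1 are met. By Lemma 1, for any prime power $q$ with $q\geq N+\max\{K,\lambda\}$ there exists a collection $\{\beta_{i,k},\alpha_n:i\in[0:\lambda),k\in[0:K+X),n\in[0:N)\}\subseteq\mathbb{F}_q$ satisfying P0--P3. I take these as the encoding parameters of the framework, and define the encoding functions $\{b_{i,k}(x)\}$ and $\{v_{i,k,t}^{(\mathcal{R})}(x)\}$ exactly by the Lagrange interpolating formulas displayed in Section \ref{subsection:implementation:PIR}.

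To establish $\widetilde{\mathrm{F}}$0 and $\widetilde{\mathrm{F}}$1, I use a standard polynomial-nullity argument. For $\widetilde{\mathrm{F}}$0, if a column vector $\mathbf{c}=(c_0,\ldots,c_{K+X-1})^{\top}$ lies in the kernel of $\widetilde{\mathbf{B}}_{i,\mathcal{K}}$, then $g(x)=\sum_{k=0}^{K+X-1}c_k b_{i,k}(x)$ has degree at most $K+X-1$ and vanishes at the $K+X$ pairwise distinct points $\{\alpha_{n_0},\ldots,\alpha_{n_{K+X-1}}\}$ (distinct by P2), forcing $g\equiv 0$; evaluating at $x=\beta_{i,k}$ and using the Lagrange-basis identity $b_{i,j}(\beta_{i,k})=\delta_{j,k}$ then gives $c_k=0$. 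For $\widetilde{\mathrm{F}}$1 the same scheme works: a kernel vector of $\widetilde{\mathbf{B}}_{i,\mathcal{X}}$ yields a polynomial of degree at most $K+X-1$ with $X$ zeros among $\{\alpha_{n_t}\}$ and $K$ additional built-in zeros at $\{\beta_{i,0},\ldots,\beta_{i,K-1}\}$, which are disjoint by P0 and P3; similarly a kernel vector of $\widetilde{\mathbf{V}}_{i,k,\mathcal{T}}^{(\mathcal{R})}$ yields a polynomial of degree at most $r+T-1$ with $T$ zeros on $\mathcal{T}$ and $r$ zeros at $\{\beta_{j,k}:j\in\mathcal{R}\}$, which are pairwise distinct by P1 and disjoint from $\{\alpha_n\}$ by P3. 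In both cases the polynomial must vanish identically, and Lagrange orthonormality forces all coefficients to be zero.

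For $\widetilde{\mathrm{F}}$2 and $\widetilde{\mathrm{F}}$3 the key observation is that each sub-response is a polynomial evaluation: $\widetilde{A}_{n,k}^{(\theta,\mathcal{R})}=h_k(\alpha_n)$ where
\begin{IEEEeqnarray*}{c}
h_k(x)=\sum_{m\in[0:M)}\sum_{i\in\mathcal{R}}\widetilde{q}_{i,k}^{(m,\mathcal{R})}(x)\,\tilde f_i^{(m)}(x)
\end{IEEEeqnarray*}
has degree at most $(r+T-1)+(K+X-1)=r+K+X+T-2$. Because every $v_{i,k,t}^{(\mathcal{R})}(x)$ contains the factor $\prod_{j\in\mathcal{R}}(x-\beta_{j,k})$, all non-desired ($m\neq\theta$) contributions vanish at $x=\beta_{i',k}$ for $i'\in\mathcal{R}$; meanwhile $v_{i,k,T}^{(\mathcal{R})}$ acts as a Lagrange basis at $\{\beta_{j,k}\}_{j\in\mathcal{R}}$, so a direct computation yields $h_k(\beta_{i',k})=\tilde f_{i'}^{(\theta)}(\beta_{i',k})=\widetilde{w}_{i',k}^{(\theta)}$ for every $i'\in\mathcal{R}$. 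Hence recovering the partial file reduces to interpolating $h_k(x)$ and reading its values at the $r$ points $\{\beta_{i,k}:i\in\mathcal{R}\}$. In the $\widetilde{\mathrm{F}}$2 setting, any $N-(\lambda-r)=K+X+T+r-1$ server evaluations suffice because they are distinct (P2) and the number equals $\deg h_k+1$. In the $\widetilde{\mathrm{F}}$3 setting, the $d$ already-known rows supply the $d$ extra evaluations $h_k(\beta_{i,k})=\widetilde{w}_{i,k}^{(\theta)}$ for $i\in\mathcal{D}$, which, combined with the $N-(\lambda-r+d)$ server evaluations, again form a pool of $K+X+T+r-1$ distinct evaluations (distinctness of the $\alpha$'s from the $\beta$'s is ensured by P3), enabling interpolation.

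The main technical point to be careful about is the \emph{distinctness of the evaluation pool} in $\widetilde{\mathrm{F}}$3: one must check that $\{\beta_{i,k}:i\in\mathcal{D}\}$ is disjoint from $\{\alpha_n:n\in\Delta\}$ and internally pairwise distinct for every column $k\in[0:K)$, which is exactly what constraints P1 and P3 of Lemma 1 guarantee. All remaining verifications are routine Lagrange-interpolation manipulations, closely mirroring \cite[Sections III-C and IV]{zhu2022symmetric} and \cite[Section III-D and Appendix]{zhu2022multi}, so once the four conditions are checked the lemma follows from Lemma 1.
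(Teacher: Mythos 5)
Your proposal is correct and follows essentially the same route as the paper: instantiate the Lagrange-code construction with parameters from Lemma~\ref{theorem:tield size}, verify $\widetilde{\mathrm{F}}$0--$\widetilde{\mathrm{F}}$1 from the non-singularity of the Lagrange evaluation matrices under P0--P3, and verify $\widetilde{\mathrm{F}}$2--$\widetilde{\mathrm{F}}$3 by interpolating the degree-$(r+K+X+T-2)$ product polynomial $h_k(x)$ and reading off $h_k(\beta_{i,k})=\widetilde{w}_{i,k}^{(\theta)}$. The only cosmetic difference is that the paper invokes the generalized Cauchy-matrix lemma of Yu \emph{et al.} for the non-singularity claims (and works the argument out only in an appendix example), whereas you give the equivalent kernel/root-counting argument directly; your write-up is in fact more self-contained on the distinctness bookkeeping for the $\widetilde{\mathrm{F}}$3 evaluation pool.
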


\begin{Remark}
The \emph{feasible PIR coding framework} can also be implemented based on cross subspace alignment (CSA) codes, which were initially introduced in \cite{jia2019cross} and have since been widely applied to various PIR problems \cite{jia2020x,jia2022xFSL,jia2020asymptotic,lu2021double}, as well as to coded distributed matrix multiplication problems \cite{jia2021cross,chen2021gcsa,zhu2021improved}. 
The CSA codes create a novel form of interference alignment inspired by the Cauchy-Vandermonde structure, where the desired data is aligned along the Cauchy terms, while all other components are aligned along the Vandermonde terms. In particular, the matrix formed by these Cauchy and/or Vandermonde terms is full rank, i.e., for any non-negative integers $k,t$ and any $2k+t$ pairwise distinct elements $\alpha_0,\ldots,\alpha_{k+t-1},\beta_0,\ldots,\beta_{k-1}$ from the finite field $\mathbb{F}_{q}$, the following Cauchy-Vandermonde matrix $\mathbf{C}$ is invertible over $\mathbb{F}_q$  \cite[Lemma 1]{jia2020x}.
\begin{IEEEeqnarray}{rCl}\label{C-V matrix}
\mathbf{C}=
\substack{
\left[
  \begin{array}{ccccccccccc}
    \frac{1}{\alpha_0-\beta_0} & \frac{1}{\alpha_0-\beta_1} & \ldots & \frac{1}{\alpha_0-\beta_{k-1}} & 1 & \alpha_0 & \ldots & \alpha_0^{t-1} \\
    \frac{1}{\alpha_1-\beta_0} & \frac{1}{\alpha_1-\beta_1} & \ldots & \frac{1}{\alpha_1-\beta_{k-1}} & 1 & \alpha_1 & \ldots & \alpha_1^{t-1} \\
    \vdots & \vdots & \vdots & \vdots & \vdots & \vdots & \vdots & \vdots \\
    \frac{1}{\alpha_{k+t-1}-\beta_0} & \frac{1}{\alpha_{k+t-1}-\beta_1} & \ldots & \frac{1}{\alpha_{k+t-1}-\beta_{k-1}} & 1 & \alpha_{k+t-1} & \ldots & \alpha_{k+t-1}^{t-1} \\
  \end{array}
\right]_{(k+t)\times(k+t)}.\\
~\underbrace{~~~~~~~~~~~~~~~~~~~~~~~~~~~~~~~~~~~~~~~~~~~~~~~~~~~~}_{\text{\footnotesize{Cauchy terms}}}~~\,
\underbrace{~~~~~~~~~~~~~~~~~~~~~~~~~~~~~~~}_{\text{\footnotesize{Vandermonde terms}}}~~~~~~~~~~~~\,
} 
\end{IEEEeqnarray}
More precisely, for any $m\in[0:M)$ and $i\in[0:\lambda)$, the CSA-encoded storage function $\tilde{f}_i^{(m)}(x)$ is constructed as
\begin{IEEEeqnarray}{c}
\tilde{f}_i^{(m)}(x)=\sum\limits_{k=0}^{K-1}\widetilde{w}_{i,k}^{(m)}\cdot\underbrace{\frac{1}{x-\beta_{i,k}}}_{=b_{i,k}(x)}+\sum\limits_{k=K}^{K+X-1}\widetilde{z}_{i,k}^{(m)}\cdot\underbrace{x^{k-K}}_{=b_{i,k}(x)}. \label{CSA:codes:storage}
\end{IEEEeqnarray}
Additionally, for any nonempty subset $\mathcal{R}\subseteq[0:\lambda)$ and any $m\in[0:M),i\in\mathcal{R},k\in[0:K)$, the CSA-encoded query polynomial $\widetilde{q}_{i,k}^{(m,\mathcal{R})}(x)$ is constructed as
\begin{IEEEeqnarray}{c}
\widetilde{q}_{i,k}^{(m,\mathcal{R})}(x)=\sum\limits_{t=0}^{T-1}\widetilde{z}_{i,k,t}^{(m,\mathcal{R})}\cdot \underbrace{\bigg(x^t\cdot\prod\limits_{k\in[0:K)}\left(x-\beta_{i,k}\right)\bigg)}_{=v_{i,k,t}^{(\mathcal{R})}(x)}
+\left\{
\begin{array}{@{}l@{\;\;}l}
\underbrace{\prod\limits_{j\in[0:K)\backslash\{k\}}\frac{x-\beta_{i,j}}{\beta_{i,k}-\beta_{i,j}}}_{=v_{i,k,T}^{(\mathcal{R})}(x)},
&\mathrm{if}\,\, m=\theta \\
0, & \mathrm{otherwise}
\end{array} \right..  \IEEEeqnarraynumspace\label{CSA:codes:query}
\end{IEEEeqnarray}
The PIR coding framework based on CSA codes satisfies the conditions $\widetilde{\mathrm{F}}$0--$\widetilde{\mathrm{F}}$3 due to the following two  factors: 
\begin{itemize}
    \item The storage and query functions are carefully designed such that the confidential information (i.e., the data symbols $\widetilde{w}_{i,k}^{(m)}$ and the desired file index $\theta$) is aligned along the Cauchy terms (i.e., the first term of \eqref{CSA:codes:storage} and the second term of \eqref{CSA:codes:query}), while the random noise used to ensure privacy is aligned along the Vandermonde terms (i.e., the second term of \eqref{CSA:codes:storage} and the first term of \eqref{CSA:codes:query}). This ensures that the conditions $\widetilde{\mathrm{F}}$0 and $\widetilde{\mathrm{F}}$1 are satisfied due to the invertibility of the Cauchy-Vandermonde matrix in \eqref{C-V matrix}.
    \item The server responses generated by the CSA-encoded storage and queries also exhibit the form of interference alignment inspired by the Cauchy-Vandermonde structure. Specifically, by \eqref{CSA:codes:storage} and \eqref{CSA:codes:query}, the sub-response $\widetilde{A}_{n,k}^{(\theta,\mathcal{R})}$ of server $n\in[0:N)$ in \eqref{Framework:answers:2} can be expressed in the following form:
\begin{IEEEeqnarray}{rCl}
\widetilde{A}_{n,k}^{(\theta,\mathcal{R})}&=&\sum\limits_{m\in[0:M)}\sum\limits_{i\in\mathcal{R}}\widetilde{q}_{i,k}^{(m,\mathcal{R})}(\alpha_n)\cdot \tilde{f}_{i}^{(m)}(\alpha_n) \notag\\
&=&\sum\limits_{i\in\mathcal{R}}\widetilde{q}_{i,k}^{(\theta,\mathcal{R})}(\alpha_n)\cdot \tilde{f}_{i}^{(\theta)}(\alpha_n)+\sum\limits_{m\in[0:M)\backslash\{\theta\}}\sum\limits_{i\in\mathcal{R}}\widetilde{q}_{i,k}^{(m,\mathcal{R})}(\alpha_n)\cdot \tilde{f}_{i}^{(m)}(\alpha_n) \label{framework:answer:456}\\
&=&\sum\limits_{i\in\mathcal{R}} \widetilde{w}_{i,k}^{(\theta)}\cdot\frac{1}{\alpha_n-\beta_{i,k}}+\sum\limits_{j=0}^{K+X+T-2}I_j\cdot \alpha_n^j, \label{framework:answer:789}
\end{IEEEeqnarray}
where $I_j$ represents various combinations of interference, which can be precisely determined by expanding the equation in \eqref{framework:answer:456}, though its exact form is not crucial.
From the sub-response $\widetilde{A}_{n,k}^{(\theta,\mathcal{R})}$ in \eqref{framework:answer:789} for any $n\in[0:N)$ and $k\in[0:K)$, we can find that the desired symbol $\widetilde{w}^{(\theta)}_{i,k}$ is aligned along the dimension corresponding to the Cauchy term $\frac{1}{\alpha_n-\beta_{i,k}}$ for any $i\in\mathcal{R}$, and the interference $I_j$ is aligned along the dimension corresponding to the Vandermonde term $\alpha_n^j$ for any $j\in[0:K+X+T-2]$. This ensures that the conditions $\widetilde{\mathrm{F}}$2 and $\widetilde{\mathrm{F}}$3 are satisfied by the invertibility of the Cauchy-Vandermonde matrix in \eqref{C-V matrix} again.
\end{itemize}


\end{Remark}

\section{General Adaptive PIR Scheme}\label{scheme:APIR}


In this section, we construct a general adaptive PIR scheme using the \emph{feasible PIR coding framework} as a building block. 
To provide the adaptive guarantee, we design the adaptive PIR scheme with $\lambda$ layers:
\begin{enumerate}
\item 
The $0$-th layer is used to privately retrieve the desired file in the presence of $S=0$ straggler. This ensures that the user can recover the desired file by decoding the responses associated with the $0$-th layer received from all the $N$ servers. 
\item 
Due to the straggler effect, some responses are missing for decoding the $0$-th layer, as the number of stragglers increases. For any given $h=1,2,\ldots,\lambda-1$, the $h$-th layer is used to compensate for the missing responses required to decode the first $h$ layers in the presence of $S=h$ stragglers. Note that the $h$-th layer is designed to compensate for the missing responses required to decode all previous $h$ layers, with the purpose of improving communication efficiency. This is because before receiving the responses associated with the $h$-th layer, the user would have already received the responses associated with  the $0$-th, $1$-th,$\ldots$, and $(h-1)$-th layers. These responses can also be exploited to compensate for the missing responses required to decode the $0$-th layer.
\end{enumerate}
This enables the adaptive PIR scheme to tolerate the presence of $S$ stragglers simultaneously for all $0\leq S\leq \lambda-1$. 
Specifically, in the proposed adaptive PIR scheme, we design the $\lambda$ layers by introducing a query array composed of $\lambda$ subarrays, each corresponding exactly to a layer of the scheme. The construction of the adaptive PIR scheme is then completed by iteratively invoking the \emph{feasible PIR coding framework} via the query array.

Assume that the \emph{feasible PIR coding framework} operates over a finite field $\mathbb{F}_q$ of size $q$. Let $\alpha_0,\alpha_1,\ldots,\alpha_{N-1}$ represent the encoding parameters of the \emph{feasible PIR coding framework}, while $\{b_{i,0}(x),\!\ldots\!,b_{i,K+X-1}(x)\}_{i\in[0:\lambda)}$ and $\{v_{i,k,0}^{(\mathcal{R})}(x),\!\ldots\!,v_{i,k,T}^{(\mathcal{R})}(x)\}_{i\in\mathcal{R},k\in[0:K)}$ represent its encoding functions for any nonempty subset $\mathcal{R}\subseteq[0:\lambda)$. 
Next, we present the formal construction of the adaptive PIR scheme based on the \emph{feasible PIR coding framework}. 

\subsection{Secure Distributed Storage}\label{section:storage:system}
In this subsection, we describe the data encoding procedures using the parameters $\alpha_0,\alpha_1,\ldots,\alpha_{N-1}$ and the functions $\{b_{i,0}(x),\ldots,b_{i,K+X-1}(x)\}_{i\in[0:\lambda)}$ from the \emph{feasible PIR coding framework}, which ensures that all the $M$ files $\mathbf{W}^{(0)},\mathbf{W}^{(1)},\ldots,\mathbf{W}^{(M-1)}$ are stored secretly and reliably in the distributed system. 

Since the number of stragglers is unknown in advance, to tolerate the presence of $S$ stragglers simultaneously for all $0\leq S\leq\lambda-1$, the file size $L$ needs to be sufficiently large to enhance the flexibility in designing the adaptive PIR scheme.
Here, we set $L=K\lambda\cdot\mathrm{lcm}(1,2,\ldots,\lambda)$, where $\lambda$ is defined in \eqref{define:lambda} and $\mathrm{lcm}(1,2,\ldots,\lambda)$ is the least common multiple of $1,2,\ldots,\lambda$. That is, each file $\mathbf{W}^{(m)}$ consists of $K\lambda\cdot\mathrm{lcm}(1,2,\ldots,\lambda)$ symbols, which can be arranged in a $P\times K$ matrix, given by
\begin{IEEEeqnarray}{rCl}\label{file:symbols}
\mathbf{W}^{(m)} =\left[
  \begin{array}{@{\;}cccc@{\;}}
w^{(m)}_{0,0} &  w^{(m)}_{0,1} & \cdots & w^{(m)}_{0,K-1}  \\
w^{(m)}_{1,0} &  w^{(m)}_{1,1} & \cdots & w^{(m)}_{1,K-1}  \\
 \vdots &  \vdots & \vdots & \vdots \\
w^{(m)}_{P-1,0} &  w^{(m)}_{P-1,1} & \cdots &w^{(m)}_{P-1,K-1}\\
\end{array}
\right],\IEEEeqnarraynumspace \forall\,m\in[0:M),
\end{IEEEeqnarray}
where 
\begin{IEEEeqnarray}{c}\label{number:value}
    P\triangleq\lambda\cdot\mathrm{lcm}(1,2,\ldots,\lambda).
\end{IEEEeqnarray}

The submatrix $\mathbf{W}^{(m)}(\{i+j\cdot\lambda:i\in[0:\lambda)\},:)$ of dimensions $\lambda\times K$ is stored in the distributed system using the same encoding method as the matrix $\widetilde{\mathbf{W}}^{(m)}$ in \eqref{Franmework:file:symbols}, for all $j\in[0:\frac{P}{\lambda})$ and $m\in[0:M)$. More specifically, similar to \eqref{PIRFramework:storage:polynomial}, we select the $X$ random noises $z_{i,K}^{(m)},\ldots,z_{i,K+X-1}^{(m)}$ and then construct a function $f_i^{(m)}(x)$ to encode the $K$ elements $w_{i,0}^{(m)},\ldots,w_{i,K-1}^{(m)}$ in the $i$-th row of the file $\mathbf{W}^{(m)}$ for any $m\in[0:M)$ and $i\in[0:P)$, where 
\begin{IEEEeqnarray}{c}\label{Adaptive:PIR:storage:function}
f_i^{(m)}(x)=\sum\limits_{k=0}^{K-1}w_{i,k}^{(m)}\cdot b_{(i)_{\lambda},k}(x)+\sum\limits_{k=K}^{K+X-1}z_{i,k}^{(m)}\cdot b_{(i)_{\lambda},k}(x),\quad\forall\,m\in[0:m),i\in[0:P).
\end{IEEEeqnarray}
The encoded data stored at the server $n$ is given by evaluating the functions $\{f_i^{(m)}(x):m\in[0:M),i\in[0:P)\}$ at $x=\alpha_n$, i.e.,
\begin{IEEEeqnarray}{c}\label{storage:data}
\mathcal{Y}_n=\left\{f_i^{(m)}(\alpha_n):m\in[0:M),i\in[0:P)\right\},\quad\forall\,n\in[0:N).
\end{IEEEeqnarray}

Since the encoding functions $\{b_{i,0}(x),\ldots,b_{i,K+X-1}(x)\}_{i\in[0:\lambda)}$ of the \emph{feasible PIR coding framework} meet the condition $\widetilde{\mathrm{F}}$0, the coded storage system satisfies the reliability constraint in \eqref{reconstruction}. Moreover, the storage overhead at each server is limited to $MP=ML/K$, satisfying the storage overhead constraint in \eqref{storage:head}. We also prove in Section \ref{APIR:secrecy:Privacy} that the storage system satisfies the secrecy constraint in \eqref{security}.

We now aim to design an adaptive PIR scheme to privately retrieve the desired file from the secure distributed storage system. 
As is well known, the key to the adaptive PIR scheme lies in designing private queries that can adaptively tolerate the presence of stragglers. 
However, this inevitably presents a significant challenge, as the number of stragglers is unknown in advance.
We address this challenge by introducing the notion of a query array, which indicates how to design private queries using the \emph{feasible PIR coding framework}.
We start with a simple example to illustrate the key idea behind the proposed adaptive PIR scheme.
In this example, we consider the same system parameters as in Example \ref{Flexible:example} to demonstrate how the feasible PIR framework in Example \ref{Flexible:example} can be combined with the query array to construct an adaptive PIR scheme.

\subsection{Illustrated Example for Adaptive PIR Scheme}

Consider the PIR problem with system parameters $N=8,K=X=T=2,M=3$, i.e., the user wishes to privately retrieve the file $\mathbf{W}^{(\theta)}$ from a secure distributed system that stores the $3$ files $\mathbf{W}^{(0)},\mathbf{W}^{(1)},\mathbf{W}^{(2)}$ across $8$ servers in a $2$-secure $2$-coded manner, while keeping the index $\theta$ private from any $2$ colluding servers. Here, we have $\lambda=3$ and $P=18$.
Therefore, 
each of files can be denoted by a matrix of dimensions $18\times 2$, given by
\begin{IEEEeqnarray}{rCl}\label{Example2:file:symbols}
\mathbf{W}^{(0)} =\left[
  \begin{array}{@{\;}cc@{\;}}
w^{(0)}_{0,0} &  w^{(0)}_{0,1}  \\
w^{(0)}_{1,0} &  w^{(0)}_{1,1}  \\
 \vdots &  \vdots \\
w^{(0)}_{17,0} &  w^{(0)}_{17,1} \\
\end{array}
\right], \quad
\mathbf{W}^{(1)} =\left[
  \begin{array}{@{\;}cc@{\;}}
w^{(1)}_{0,0} &  w^{(1)}_{0,1}  \\
w^{(1)}_{1,0} &  w^{(1)}_{1,1}  \\
 \vdots &  \vdots \\
w^{(1)}_{17,0} &  w^{(1)}_{17,1} \\
\end{array}
\right], \quad
\mathbf{W}^{(2)} =\left[
  \begin{array}{@{\;}cc@{\;}}
w^{(2)}_{0,0} &  w^{(2)}_{0,1}  \\
w^{(2)}_{1,0} &  w^{(2)}_{1,1}  \\
 \vdots &  \vdots \\
w^{(2)}_{17,0} &  w^{(2)}_{17,1} \\
\end{array}
\right].
\end{IEEEeqnarray}

Similar to Example \ref{Flexible:example}, 
let $\alpha_0,\alpha_1,\ldots,\alpha_{7}$ denote the encoding parameters of the \emph{feasible PIR coding framework}, and $\{b_{i,0}(x),b_{i,1}(x),b_{i,2}(x),b_{i,3}(x)\}_{i\in[0:3)}$ and $\{v_{i,k,0}^{(\mathcal{R})}(x),v_{i,k,1}^{(\mathcal{R})}(x),v_{i,k,2}^{(\mathcal{R})}(x)\}_{i\in\mathcal{R},k\in[0:2)}$ denote its encoding functions for any subset $\mathcal{R}\subseteq[0:3)$ of size $1\leq r\leq 3$.
We will utilize the \emph{feasible PIR coding framework} to construct an adaptive PIR scheme that can simultaneously tolerate the presence of $S=0,1,2$ stragglers.

\subsubsection*{Secure Distributed Storage} Following a similar encoding method to \eqref{Example:framework:storage:1}--\eqref{Example:framework:storage:3},
we choose the random noises $\{{z}_{i,2}^{(m)},{z}_{i,3}^{(m)}\}_{i\in[0:18)}$ and encode the file $\mathbf{W}^{(m)}$ using the functions $\{b_{i,0}(x),b_{i,1}(x),b_{i,2}(x),b_{i,3}(x)\}_{i\in[0:3)}$ of the \emph{feasible PIR coding framework} 
for any $m\in[0:3)$, given by
\begin{IEEEeqnarray}{rCl}
f_0^{(m)}(x)&=&{w}_{0,0}^{(m)}\cdot b_{0,0}(x)+{w}_{0,1}^{(m)}\cdot b_{0,1}(x)+{z}_{0,2}^{(m)}\cdot b_{0,2}(x)+{z}_{0,3}^{(m)}\cdot b_{0,3}(x), \label{Aexample:storage:1}\\
f_1^{(m)}(x)&=&{w}_{1,0}^{(m)}\cdot b_{1,0}(x)+{w}_{1,1}^{(m)}\cdot b_{1,1}(x)+{z}_{1,2}^{(m)}\cdot b_{1,2}(x)+{z}_{1,3}^{(m)}\cdot b_{1,3}(x), \\
f_2^{(m)}(x)&=&{w}_{2,0}^{(m)}\cdot b_{2,0}(x)+{w}_{2,1}^{(m)}\cdot b_{2,1}(x)+{z}_{2,2}^{(m)}\cdot b_{2,2}(x)+{z}_{2,3}^{(m)}\cdot b_{2,3}(x), \\ 
f_3^{(m)}(x)&=&{w}_{3,0}^{(m)}\cdot b_{0,0}(x)+{w}_{3,1}^{(m)}\cdot b_{0,1}(x)+{z}_{3,2}^{(m)}\cdot b_{0,2}(x)+{z}_{3,3}^{(m)}\cdot b_{0,3}(x), \\
f_4^{(m)}(x)&=&{w}_{4,0}^{(m)}\cdot b_{1,0}(x)+{w}_{4,1}^{(m)}\cdot b_{1,1}(x)+{z}_{4,2}^{(m)}\cdot b_{1,2}(x)+{z}_{4,3}^{(m)}\cdot b_{1,3}(x), \\
f_5^{(m)}(x)&=&{w}_{5,0}^{(m)}\cdot b_{2,0}(x)+{w}_{5,1}^{(m)}\cdot b_{2,1}(x)+{z}_{5,2}^{(m)}\cdot b_{2,2}(x)+{z}_{5,3}^{(m)}\cdot b_{2,3}(x), \\ 
f_6^{(m)}(x)&=&{w}_{6,0}^{(m)}\cdot b_{0,0}(x)+{w}_{6,1}^{(m)}\cdot b_{0,1}(x)+{z}_{6,2}^{(m)}\cdot b_{0,2}(x)+{z}_{6,3}^{(m)}\cdot b_{0,3}(x), \\
f_7^{(m)}(x)&=&{w}_{7,0}^{(m)}\cdot b_{1,0}(x)+{w}_{7,1}^{(m)}\cdot b_{1,1}(x)+{z}_{7,2}^{(m)}\cdot b_{1,2}(x)+{z}_{7,3}^{(m)}\cdot b_{1,3}(x), \\
f_8^{(m)}(x)&=&{w}_{8,0}^{(m)}\cdot b_{2,0}(x)+{w}_{8,1}^{(m)}\cdot b_{2,1}(x)+{z}_{8,2}^{(m)}\cdot b_{2,2}(x)+{z}_{8,3}^{(m)}\cdot b_{2,3}(x), \\  
f_9^{(m)}(x)&=&{w}_{9,0}^{(m)}\cdot b_{0,0}(x)+{w}_{9,1}^{(m)}\cdot b_{0,1}(x)+{z}_{9,2}^{(m)}\cdot b_{0,2}(x)+{z}_{9,3}^{(m)}\cdot b_{0,3}(x), \\
f_{10}^{(m)}(x)&=&{w}_{10,0}^{(m)}\cdot b_{1,0}(x)+{w}_{10,1}^{(m)}\cdot b_{1,1}(x)+{z}_{10,2}^{(m)}\cdot b_{1,2}(x)+{z}_{10,3}^{(m)}\cdot b_{1,3}(x), \\
f_{11}^{(m)}(x)&=&{w}_{11,0}^{(m)}\cdot b_{2,0}(x)+{w}_{11,1}^{(m)}\cdot b_{2,1}(x)+{z}_{11,2}^{(m)}\cdot b_{2,2}(x)+{z}_{11,3}^{(m)}\cdot b_{2,3}(x), \\  
f_{12}^{(m)}(x)&=&{w}_{12,0}^{(m)}\cdot b_{0,0}(x)+{w}_{12,1}^{(m)}\cdot b_{0,1}(x)+{z}_{12,2}^{(m)}\cdot b_{0,2}(x)+{z}_{12,3}^{(m)}\cdot b_{0,3}(x), \\
f_{13}^{(m)}(x)&=&{w}_{13,0}^{(m)}\cdot b_{1,0}(x)+{w}_{13,1}^{(m)}\cdot b_{1,1}(x)+{z}_{13,2}^{(m)}\cdot b_{1,2}(x)+{z}_{13,3}^{(m)}\cdot b_{1,3}(x), \\
f_{14}^{(m)}(x)&=&{w}_{14,0}^{(m)}\cdot b_{2,0}(x)+{w}_{14,1}^{(m)}\cdot b_{2,1}(x)+{z}_{14,2}^{(m)}\cdot b_{2,2}(x)+{z}_{14,3}^{(m)}\cdot b_{2,3}(x), \\  
f_{15}^{(m)}(x)&=&{w}_{15,0}^{(m)}\cdot b_{0,0}(x)+{w}_{15,1}^{(m)}\cdot b_{0,1}(x)+{z}_{15,2}^{(m)}\cdot b_{0,2}(x)+{z}_{15,3}^{(m)}\cdot b_{0,3}(x), \\
f_{16}^{(m)}(x)&=&{w}_{16,0}^{(m)}\cdot b_{1,0}(x)+{w}_{16,1}^{(m)}\cdot b_{1,1}(x)+{z}_{16,2}^{(m)}\cdot b_{1,2}(x)+{z}_{16,3}^{(m)}\cdot b_{1,3}(x), \\
f_{17}^{(m)}(x)&=&{w}_{17,0}^{(m)}\cdot b_{2,0}(x)+{w}_{17,1}^{(m)}\cdot b_{2,1}(x)+{z}_{17,2}^{(m)}\cdot b_{2,2}(x)+{z}_{17,3}^{(m)}\cdot b_{2,3}(x). \label{Aexample:storage:2}
\end{IEEEeqnarray}
Then, the data stored at the server $n,n\in[0:8)$ is given by evaluating these encoding polynomials at $x=\alpha_n$, i.e.,
\begin{IEEEeqnarray}{c}\label{Example:storage}
\mathcal{Y}_n=\left\{
  \begin{array}{@{}ccc@{}}
f_{0}^{(0)}(a_n), & f_{0}^{(1)}(a_n), & f_{0}^{(2)}(a_n) \\
f_{1}^{(0)}(a_n), & f_{1}^{(1)}(a_n), & f_{1}^{(2)}(a_n) \\
\vdots  & \vdots & \vdots \\
f_{17}^{(0)}(a_n), & f_{17}^{(1)}(a_n), & f_{17}^{(2)}(a_n) \\
\end{array}
\right\}.
\end{IEEEeqnarray}

\subsubsection*{Query Phase} 
To design private queries that provide the adaptive guarantee simultaneously for all $S=0,1,2$ stragglers,
we construct a query array of dimensions $3\times 18$, composed of three subarrays $\mathbf{U}^{0},\mathbf{U}^{1}$, and $\mathbf{U}^2$, given by
\begin{IEEEeqnarray}{c}\label{Example2:query:array}
\mathbf{U}=\left[
  \begin{array}{c;{2pt/2pt}c;{2pt/2pt}c}
  \underbrace{
  \begin{array}{cccccc}
   0 & 3 & 6 & 9 &  12 & 15 \\
   1 & 4 & 7 & 10 & 13 & 16 \\
   2 & 5 & 8 & 11 & 14 & 17 \\
  \end{array}
  }_{=\mathbf{U}^{0}} & 
    \underbrace{
  \begin{array}{ccc}
  * & 0 & 9 \\
  4 & * & 13 \\
  8 & 17 & * \\
   \end{array}
  }_{=\mathbf{U}^{1}}
   &
   \underbrace{
  \begin{array}{ccccccccc}
  * & * & 3 & * & * & 12 & * & * & 0 \\
  7 & * & * & 16 & * & * & 13 & * & * \\
  * & 2 & * & * & 11 & * & * & 8 & * \\
  \end{array}
  }_{=\mathbf{U}^{2}}
 \end{array}
\right].    
\end{IEEEeqnarray}

For convenience, let $\Gamma^h$ denote the number of columns in the subarray $\mathbf{U}^{h}$ for any $h\in[0:3)$, then we have $\Gamma^0=6,\Gamma^1=3$, and $\Gamma^2=9$. 
Clearly, the integer elements in each column of the query array form a subset of $[0:18)$, and under the modulo $\lambda=3$ operation, they form a subset of $[0:3)$, i.e., ${\mathcal{R}}^{h}_j\subseteq[0:18)$ and $\widetilde{\mathcal{R}}^{h}_j\subseteq[0:3)$ for any $h\in[0:3)$ and $j\in[0:\Gamma^h)$, where $\mathcal{R}^{h}_j$ denotes the set consisting of the integer elements in the $j$-th column of the subarray $\mathbf{U}^{h}$, and $\widetilde{\mathcal{R}}^{h}_j$ denotes the set consisting of the integer elements in $\mathcal{R}^{h}_j$ modulo $\lambda$.
Moreover, for any $h\in[0:3)$ and $j\in[0:\Gamma^h)$, we observe from \eqref{Example:framework:storage:1}--\eqref{Example:framework:storage:3} and \eqref{Aexample:storage:1}--\eqref{Aexample:storage:2} that the partial file $\mathbf{W}^{(m)}(\mathcal{R}_j^h,:)$ is encoded in exactly the same way as $\widetilde{\mathbf{W}}^{(m)}(\widetilde{\mathcal{R}}^{h}_j,:)$ in \eqref{example:files} and the only difference between them lies in the values of the files $\mathbf{W}^{(m)}(\mathcal{R}_j^h,:)$ and $\widetilde{\mathbf{W}}^{(m)}(\widetilde{\mathcal{R}}^{h}_j,:)$, for any $m\in[0:3)$. Recall that the \emph{feasible PIR coding framework} is constructed regardless of the values of the data files $\widetilde{\mathbf{W}}^{(0)}(\widetilde{\mathcal{R}}^{h}_j,:),\widetilde{\mathbf{W}}^{(1)}(\widetilde{\mathcal{R}}^{h}_j,:),\widetilde{\mathbf{W}}^{(2)}(\widetilde{\mathcal{R}}^{h}_j,:)$.
This ensures that the user can privately retrieve the desired partial file $\mathbf{W}^{(\theta)}(\mathcal{R}_j^h,:)$ by following the query, response, and decoding steps similar to those in Example \ref{Flexible:example}.


More precisely, to privately retrieve the partial file $\mathbf{W}^{(\theta)}(\mathcal{R}_j^h,:)$ for any $h\in[0:3)$ and $j\in[0:\Gamma^h)$, similar to \eqref{Example:framework:query:polynomial}, the user chooses the random noises $z_{(i)_{\lambda},k,0}^{(m,\widetilde{\mathcal{R}}^{h}_j)},z_{(i)_{\lambda},k,1}^{(m,\widetilde{\mathcal{R}}^{h}_j)}$ and designs the query function $q_{i,k}^{(m,\mathcal{R}^{h}_j)}(x)$ using the encoding functions $v_{(i)_{\lambda},k,0}^{(\widetilde{\mathcal{R}}^{h}_j)}(x),v_{(i)_{\lambda},k,1}^{(\widetilde{\mathcal{R}}^{h}_j)}(x),v_{(i)_{\lambda},k,2}^{(\widetilde{\mathcal{R}}^{h}_j)}(x)$
for any $m\in[0:3),i\in\mathcal{R}^{h}_j$, and $k\in[0:2)$, given by
\begin{IEEEeqnarray}{c}
q_{i,k}^{(m,\mathcal{R}^{h}_j)}(x)=
z_{(i)_{\lambda},k,0}^{(m,\widetilde{\mathcal{R}}^{h}_j)}\cdot v_{(i)_{\lambda},k,0}^{(\widetilde{\mathcal{R}}^{h}_j)}(x)+
z_{(i)_{\lambda},k,1}^{(m,\widetilde{\mathcal{R}}^{h}_j)}\cdot v_{(i)_{\lambda},k,1}^{(\widetilde{\mathcal{R}}^{h}_j)}(x)
+\left\{
\begin{array}{@{}l@{\;\;}l}
v_{(i)_{\lambda},k,2}^{(\widetilde{\mathcal{R}}^{h}_j)}(x),
&\mathrm{if}\,\, m=\theta \\
0, & \mathrm{otherwise}
\end{array} \right.. \notag
\end{IEEEeqnarray}
Then, the query sent to the server $n,n\in[0:8)$ for retrieving the partial file $\mathbf{W}^{(\theta)}(\mathcal{R}_j^h,:)$ is given by
\begin{IEEEeqnarray}{c}
\mathcal{Q}_n^{(\theta,{\mathcal{R}}_j^h)}=\left\{q_{i,k}^{(m,\mathcal{R}^{h}_j)}(\alpha_n):m\in[0:3),i\in\mathcal{R}^{h}_j,k\in[0:2)\right\}. \notag
\end{IEEEeqnarray}

In order to adaptively retrieve the desired file, 
the user combines the queries across all the columns of the query array in \eqref{Example2:query:array} and sends the following query $\mathcal{Q}^{(\theta)}_n$ to the server $n,n\in[0:8)$: 
\begin{IEEEeqnarray}{c}\label{Example2:query}
\mathcal{Q}_n^{(\theta)}=\left\{
  \begin{array}{@{}lll@{}}
\mathcal{Q}_n^{(\theta,\{0,1,2\})}, & \mathcal{Q}_n^{(\theta,\{3,4,5\})}, & \mathcal{Q}_n^{(\theta,\{6,7,8\})} \\
\mathcal{Q}_n^{(\theta,\{9,10,11\})}, & \mathcal{Q}_n^{(\theta,\{12,13,14\})}, & \mathcal{Q}_n^{(\theta,\{15,16,17\})} \\
\mathcal{Q}_n^{(\theta,\{4,8\})}, & \mathcal{Q}_n^{(\theta,\{0,17\})}, & \mathcal{Q}_n^{(\theta,\{9,13\})} \\
\mathcal{Q}_n^{(\theta,\{7\})}, & \mathcal{Q}_n^{(\theta,\{2\})}, & \mathcal{Q}_n^{(\theta,\{3\})} \\
\mathcal{Q}_n^{(\theta,\{16\})}, & \mathcal{Q}_n^{(\theta,\{11\})}, & \mathcal{Q}_n^{(\theta,\{12\})} \\
\mathcal{Q}_n^{(\theta,\{13\})}, & \mathcal{Q}_n^{(\theta,\{8\})}, & \mathcal{Q}_n^{(\theta,\{0\})}
\end{array}
\right\}.
\end{IEEEeqnarray}

\subsubsection*{Answer Phase} 
For any given query $\mathcal{Q}_n^{(\theta,{\mathcal{R}}_j^h)}$ with $h\in[0:3)$ and $j\in[0:\Gamma^h)$, similar to \eqref{Example:PIR:answer:1}-\eqref{Example:PIR:answer:2}, the server $n,n\in[0:8)$ can generate the following response using the local storage in \eqref{Example:storage}:
\begin{IEEEeqnarray}{c}
\mathcal{A}_n^{(\theta,\mathcal{R}_j^h)}=\left\{ A_{n,k}^{(\theta,\mathcal{R}_j^h)}:k\in[0:2)\right\},\notag
\end{IEEEeqnarray}
where $A_{n,k}^{(\theta,\mathcal{R}_j^h)}$ is given by 
\begin{IEEEeqnarray}{c}
A_{n,k}^{(\theta,\mathcal{R}_j^h)}=\sum\limits_{m\in[0:3)}\sum\limits_{i\in\mathcal{R}_{j}^{h}}q_{i,k}^{(m,{\mathcal{R}}^{h}_j)}(\alpha_n)\cdot {f}_{i}^{(m)}(\alpha_n),\quad\forall\,k\in[0:2).\notag
\end{IEEEeqnarray}

Upon receiving the query in \eqref{Example2:query},
the server $n,n\in[0:8)$ will sequentially return the following responses to the user in the given order until it recovers the desired file:
\begin{IEEEeqnarray}{c}
  \begin{array}{@{}lll@{}}
\mathcal{A}_n^{(\theta,\{0,1,2\})}, & \mathcal{A}_n^{(\theta,\{3,4,5\})}, & \mathcal{A}_n^{(\theta,\{6,7,8\})} \\
\mathcal{A}_n^{(\theta,\{9,10,11\})}, & \mathcal{A}_n^{(\theta,\{12,13,14\})}, & \mathcal{A}_n^{(\theta,\{15,16,17\})} \\
\mathcal{A}_n^{(\theta,\{4,8\})}, & \mathcal{A}_n^{(\theta,\{0,17\})}, & \mathcal{A}_n^{(\theta,\{9,13\})} \\
\mathcal{A}_n^{(\theta,\{7\})}, & \mathcal{A}_n^{(\theta,\{2\})}, & \mathcal{A}_n^{(\theta,\{3\})} \\
\mathcal{A}_n^{(\theta,\{16\})}, & \mathcal{A}_n^{(\theta,\{11\})}, & \mathcal{A}_n^{(\theta,\{12\})} \\
\mathcal{A}_n^{(\theta,\{13\})}, & \mathcal{A}_n^{(\theta,\{8\})}, & \mathcal{A}_n^{(\theta,\{0\})}.
\end{array}\notag
\end{IEEEeqnarray}

\subsubsection*{Decoding Phase} 
Since $|{\mathcal{R}}^{h}_j|=|\widetilde{\mathcal{R}}^{h}_j|=3-h$ for any given $h\in[0:3)$ and $j\in[0:\Gamma^h)$,
similar to the decoding phase in Example \ref{Flexible:example}, we can obtain the decoding property of the partial files.
Specifically, the user can decode the partial file $\mathbf{W}^{(\theta)}(\mathcal{R}_j^h,:)$ from any $8-h-d$ out of the $8$ responses ${\mathcal{A}}_{[0:8)}^{(\theta,\mathcal{R}_j^h)}$ after obtaining any $d$ rows of $\mathbf{W}^{(\theta)}(\mathcal{R}_j^h,:)$ for any $d\in[0:3-h)$. This property will be employed to adaptively decode the desired file, as shown in the following decoding steps.

\textbf{In the initial scenario,} assume that there are no stragglers. 
Note from the query array in \eqref{Example2:query:array} that the integer elements in the $0$-th subarray $\mathbf{U}^{0}$ contain all elements in the range $0$ to $17$. This means that the user can recover the file $\mathbf{W}^{(\theta)}$ once decoding the desired partial files associated with the integer elements in each column of the subarray $\mathbf{U}^{0}$ by \eqref{Example2:file:symbols}.
Therefore, in order to retrieve the desired file, the user will wait for the first $6$ responses from each server $n,n\in[0:8)$, given by 
\begin{IEEEeqnarray}{c}
\left\{
\begin{array}{@{}lll@{}}
\mathcal{A}_n^{(\theta,\{0,1,2\})}, & \mathcal{A}_n^{(\theta,\{3,4,5\})}, & \mathcal{A}_n^{(\theta,\{6,7,8\})} \\
\mathcal{A}_n^{(\theta,\{9,10,11\})}, & \mathcal{A}_n^{(\theta,\{12,13,14\})}, & \mathcal{A}_n^{(\theta,\{15,16,17\})}
\end{array}
\right\}_{n\in[0:8)}.\notag
\end{IEEEeqnarray}
By the decoding property of the partial files, the partial file $\mathbf{W}^{(\theta)}(\{0,1,2\},:)$ can be decoded from the received responses $\{\mathcal{A}_n^{(\theta,\{0,1,2\})}\}_{n\in[0:8)}$. Similarly, the user can decode  
$\mathbf{W}^{(\theta)}(\{3,\!4,\!5\},:),\!\!\mathbf{W}^{(\theta)}(\{6,\!7,\!8\},:),\!\!\mathbf{W}^{(\theta)}(\{9,\!10,\!11\},:),\!\!$ $\mathbf{W}^{(\theta)}(\{12,\!13,\!14\},:),\!\mathbf{W}^{(\theta)}(\{15,\!16,\!17\},:)$ and then recover the entire file $\mathbf{W}^{(\theta)}$.

\textbf{In the presence of $S=1$ straggler}, without loss of generality, assume that server $0$ is the straggler.
By \eqref{Example2:query:array} again, we find that for each column in the subarray $\mathbf{U}^{0}$, there exists an integer element that appears in the subsequent subarray $\mathbf{U}^{1}$. 
This ensures that when there is $S=1$ straggler, the missing responses required for decoding the partial files associated with the subarray $\mathbf{U}^{0}$ can be compensated by utilizing the partial files associated with the subarray $\mathbf{U}^{1}$. Therefore, in this straggler case, the user will continue to wait for the other three responses $\mathcal{A}_n^{(\theta,\{4,8\})},\mathcal{A}_n^{(\theta,\{0,17\})},\mathcal{A}_n^{(\theta,\{9,13\})}$ corresponding to the queries designed using the subarray $\mathbf{U}^{1}$  from the remaining non-straggler server $n,n\in[0:8)\backslash\{0\}$. In total, the user obtains the following responses:
\begin{IEEEeqnarray}{c}
\left\{
\begin{array}{@{}lll@{}}
\mathcal{A}_n^{(\theta,\{0,1,2\})}, & \mathcal{A}_n^{(\theta,\{3,4,5\})}, & \mathcal{A}_n^{(\theta,\{6,7,8\})} \\
\mathcal{A}_n^{(\theta,\{9,10,11\})}, & \mathcal{A}_n^{(\theta,\{12,13,14\})}, & \mathcal{A}_n^{(\theta,\{15,16,17\})} \\
\mathcal{A}_n^{(\theta,\{4,8\})}, & \mathcal{A}_n^{(\theta,\{0,17\})}, & \mathcal{A}_n^{(\theta,\{9,13\})}
\end{array}
\right\}_{n\in[0:8)\backslash\{0\}}.\notag
\end{IEEEeqnarray}
Following the decoding property of the partial files again, the user can first decode the partial files $\mathbf{W}^{(\theta)}(\{4,8\},:),$ $\mathbf{W}^{(\theta)}(\{0,17\},:),\mathbf{W}^{(\theta)}(\{9,13\},:)$ from the responses received, and then decode the partial file $\mathbf{W}^{(\theta)}(\{0,1,2\},:)$ from the received responses $\{\mathcal{A}_n^{(\theta,\{0,1,2\})}\}_{n\in[0:8)\backslash\{0\}}$ along with the previously decoded data $\mathbf{W}^{(\theta)}(\{0\},:)$. Similarly, the partial files $\mathbf{W}^{(\theta)}(\{3,\!4,\!5\},:),\!\!\mathbf{W}^{(\theta)}(\{6,\!7,\!8\},:),\!\!\mathbf{W}^{(\theta)}(\{9,\!10,\!11\},:),\!\!\mathbf{W}^{(\theta)}(\{12,\!13,\!14\},:),\!\!$ $\mathbf{W}^{(\theta)}(\{15,16,17\},:)$ can be decoded. Therefore, the user recovers the desired file $\mathbf{W}^{(\theta)}$.

\textbf{In the presence of $S=2$ stragglers}, without loss of generality, assume that server $1$ is another straggler. Similar to the above case of $S=1$, we further find from \eqref{Example2:query:array} that for each column in the first two subarrays $\mathbf{U}^{0},\mathbf{U}^{1}$, there exists another integer element that appears in the subsequent subarray $\mathbf{U}^{2}$. This ensures that when there are $S=2$ stragglers, the user can decode the desired file by continuing to wait for the responses corresponding to the queries designed using the subarray $\mathbf{U}^{2}$.
More specifically, the user will continue to wait for the other $9$ responses $\mathcal{A}_n^{(\theta,\{7\})},\mathcal{A}_n^{(\theta,\{2\})}, \mathcal{A}_n^{(\theta,\{3\})},\mathcal{A}_n^{(\theta,\{16\})}, \mathcal{A}_n^{(\theta,\{11\})}, \mathcal{A}_n^{(\theta,\{12\})}, \mathcal{A}_n^{(\theta,\{13\})}, \mathcal{A}_n^{(\theta,\{8\})},\mathcal{A}_n^{(\theta,\{0\})}$ from the remaining non-straggler
server $n,n\in[0:8)\backslash\{0,1\}$, i.e., the user obtains the following responses:
\begin{IEEEeqnarray}{c}
\left\{
  \begin{array}{@{}lll@{}}
\mathcal{A}_n^{(\theta,\{0,1,2\})}, & \mathcal{A}_n^{(\theta,\{3,4,5\})}, & \mathcal{A}_n^{(\theta,\{6,7,8\})} \\
\mathcal{A}_n^{(\theta,\{9,10,11\})}, & \mathcal{A}_n^{(\theta,\{12,13,14\})}, & \mathcal{A}_n^{(\theta,\{15,16,17\})} \\
\mathcal{A}_n^{(\theta,\{4,8\})}, & \mathcal{A}_n^{(\theta,\{0,17\})}, & \mathcal{A}_n^{(\theta,\{9,13\})} \\
\mathcal{A}_n^{(\theta,\{7\})}, & \mathcal{A}_n^{(\theta,\{2\})}, & \mathcal{A}_n^{(\theta,\{3\})} \\
\mathcal{A}_n^{(\theta,\{16\})}, & \mathcal{A}_n^{(\theta,\{11\})}, & \mathcal{A}_n^{(\theta,\{12\})} \\
\mathcal{A}_n^{(\theta,\{13\})}, & \mathcal{A}_n^{(\theta,\{8\})}, & \mathcal{A}_n^{(\theta,\{0\})}
\end{array}
\right\}_{n\in[0:8)\backslash\{0,1\}}.\notag
\end{IEEEeqnarray}
The decoding is also similar to the case of $S=1$. 
The user first decodes the partial files $\mathbf{W}^{(\theta)}(\{7\},:),\!\mathbf{W}^{(\theta)}(\{2\},:),$ $\mathbf{W}^{(\theta)}(\{3\},:),\mathbf{W}^{(\theta)}(\{16\},:),\mathbf{W}^{(\theta)}(\{11\},:),\mathbf{W}^{(\theta)}(\{12\},:),\mathbf{W}^{(\theta)}(\{13\},:),\mathbf{W}^{(\theta)}(\{8\},:),\mathbf{W}^{(\theta)}(\{0\},:)$ from the received responses, and then decodes
$\mathbf{W}^{(\theta)}(\{4,8\},:),\mathbf{W}^{(\theta)}(\{0,17\},:),\mathbf{W}^{(\theta)}(\{9,13\},:)$ from the received responses along with the previously decoded partial files. Furthermore, having obtained these partial files, the user can decode $\mathbf{W}^{(\theta)}(\{0,\!1,\!2\},:),\mathbf{W}^{(\theta)}(\{3,\!4,\!5\},:),\!\mathbf{W}^{(\theta)}(\{6,\!7,\!8\},:),\mathbf{W}^{(\theta)}(\{9,\!10,\!11\},:),\mathbf{W}^{(\theta)}(\{12,\!13,\!14\},:),\mathbf{W}^{(\theta)}(\{15,\!16,\!17\},:)$ from the received responses and thus complete the decoding of the desired file $\mathbf{W}^{(\theta)}$.


In summary, after sending the queries in \eqref{Example2:query} to the servers, the user will keep waiting for the servers' responses. If the user receives $6$ responses from each of the $8$ servers, $9$ responses from each of any $7$ servers, or $18$ responses from each of any $6$ servers, then the desired file can be successfully decoded. This ensures that the user can adaptively tolerate the presence of $S=0,1,2$ stragglers, even if their identities and numbers are unknown in advance and may change over time.

Table \ref{tab:ASPLC:example} presents the retrieval rate of the adaptive PIR example and compares it to the traditional PIR example with a known number of stragglers \cite{jia2020x,jia2022xFSL}. 
Compared to the traditional PIR example with a known number of stragglers, the proposed adaptive PIR example can simultaneously tolerate the presence of $S=0,1,2$ stragglers while achieving the maximum retrieval rate based on the actual number of stragglers.

\begin{table*}[htbp]
\centering
\caption{Comparison of the retrieval rate between the adaptive PIR example and the traditional PIR example with a known number of stragglers \cite{jia2020x,jia2022xFSL}.}\label{tab:ASPLC:example}
\begin{tabular}{|c|c|c|c|c|}
\hline
Actual Number of Stragglers & PIR  with $0$ Straggler & PIR  with $1$ Straggler & PIR  with $2$ Stragglers  & Our Adaptive PIR \\ \hline
$S=0$ & $\frac{3}{8}$ & $\frac{2}{7}$ & $\frac{1}{6}$ & $\frac{3}{8}$ \\ \hline
$S=1$ & \cmarkk & $\frac{2}{7}$ & $\frac{1}{6}$ & $\frac{2}{7}$ \\ \hline
$S=2$ & \cmarkk & \cmarkk & $\frac{1}{6}$ & $\frac{1}{6}$ \\ \hline
\end{tabular}
\end{table*}

\subsection{General Construction of Adaptive PIR Scheme}
In this subsection, we present the general construction of the adaptive PIR scheme that can tolerate the presence of $S$ stragglers simultaneously for all $0\leq S\leq \lambda-1$. 

\subsubsection*{Query Phase} 
In the query design phase, to provide the adaptive guarantee, we construct a query array with dimensions $\lambda\times P$. This query array is composed of $\lambda$ subarrays, denoted as $\mathbf{U}^{0},\mathbf{U}^{1},\ldots,\mathbf{U}^{\lambda-1}$, with elements drawn from the range $0$ to $P-1$ along with a special symbol ``*''.\footnote{The special symbol ``*'' has no practical significance; it merely serves to fill the array and distinguish itself from the integer elements.}
Given the query array, the private queries are designed to retrieve the desired partial files associated with the integer elements in each column of the query array, following an approach similar to the query design in the \emph{feasible PIR coding framework}. 
The private queries designed using the $h$-th subarray $\mathbf{U}^{h}$ are used to provide robustness against $S=h$ stragglers for any given $h\in[0:\lambda)$. These queries across all the query subarrays are then combined to provide the adaptive guarantee.

More formally, the query array $\mathbf{U}$ with dimensions $\lambda\times P$, composed of the $\lambda$ subarrays  $\mathbf{U}^{0},\mathbf{U}^{1},\ldots,\mathbf{U}^{\lambda-1}$, is defined as
\begin{IEEEeqnarray}{rCl}\label{query:array:subarray}
\mathbf{U}=\left[
  \begin{array}{@{\;}c;{2pt/2pt}c;{2pt/2pt}c;{2pt/2pt}c@{\;}}
    \mathbf{U}^{0} & \mathbf{U}^{1} & \cdots & \mathbf{U}^{\lambda-1}  \\
\end{array}
\right],
\end{IEEEeqnarray}
where the subarray $\mathbf{U}^{h}$ has dimensions $\lambda\times\Gamma^h$ for all $h\in[0:\lambda)$ and $\Gamma^{h}$ represents the number of columns in the subarray $\mathbf{U}^{h}$, given by
\begin{IEEEeqnarray}{c}\label{number:column}
    \Gamma^h\triangleq\left\{
\begin{array}{@{}ll}
\frac{P}{\lambda},   &\text{if}~h=0 \\
\frac{P}{(\lambda-h)(\lambda-h+1)},&\text{if}~h\in[1:\lambda)
\end{array}\right., \quad\forall\,h\in[0:\lambda).
\end{IEEEeqnarray}
Clearly, the dimensions of these subarrays are feasible due to $\lambda|P$ and $(\lambda-h)(\lambda-h+1)|P$ by \eqref{number:value} along with the condition $\sum_{h=0}^{\lambda-1}\Gamma^h=P$.

Our goal is to design the query array $\mathbf{U}$ satisfying the following four conditions. 
These conditions play a crucial role in enabling the private queries designed based on this query array to adaptively tolerate the presence of stragglers.
\begin{enumerate}
\item[C0.] The elements of the array $\mathbf{U}$ are either integers in the range $0$ to $P-1$ or the special symbol ``*''.
\item[C1.] The number of integer elements in each column of the subarray $\mathbf{U}^{h}$ is $\lambda-h$ for any $h\in[0:\lambda)$, and particularly the integer elements in each column of $\mathbf{U}^{h}$ are pairwise distinct under the modulo $\lambda$ operation. More precisely, for any given $h\in[0:\lambda)$ and $j\in[0:\Gamma^h)$, let $\mathcal{R}^{h}_j$ denote the set consisting of the integer elements in the $j$-th column of the subarray $\mathbf{U}^{h}$, and $\widetilde{\mathcal{R}}^{h}_j$ denote the set consisting of those integer elements taken modulo $\lambda$, i.e., 
$\widetilde{\mathcal{R}}^{h}_j=\{(a)_{\lambda}:a\in\mathcal{R}^{h}_j\}$. Then, we have $|\mathcal{R}_j^{h}|=|\widetilde{\mathcal{R}}^{h}_j|=\lambda-h$.
\item[C2.] The integer elements of the subarray $\mathbf{U}^{0}$ include all integers in the range $0$ to $P-1$, i.e., $\bigcup_{j\in[0:\Gamma^{0})}\mathcal{R}_j^{0}=\{0,1,\ldots,P-1\}$.
\item[C3.] In the $j$-th column of the subarray $\mathbf{U}^{h}$ for any $h\in[0:\lambda-1)$ and $j\in[0:\Gamma^h)$,
there exist $\lambda-h-1$ integer elements that appear sequentially in the subsequent $\lambda-h-1$ subarrays $\mathbf{U}^{h+1},\mathbf{U}^{h+2},\ldots,\mathbf{U}^{\lambda-1}$, i.e., there exists a subset $\mathcal{D}^{h}_j=\{a_0,a_1,\ldots,a_{\lambda-h-2}\}\subseteq\mathcal{R}^h_j$ of size $\lambda-h-1$ such that $a_{r-h-1}\in\bigcup_{j\in[0:\Gamma^{r})}\mathcal{R}_j^{r}$ for all $r\in[h+1:\lambda)$.
\end{enumerate}

Next, we provide a construction of the query array $\mathbf{U}$ satisfying the conditions C0--C3. 



Firstly, the subarray $\mathbf{U}^{0}$ of dimensions $\lambda\times\Gamma^0$ is constructed by assigning the $P$ values in $[0:P)$ to all $\lambda\cdot\Gamma^0=P$ elements of $\mathbf{U}^{0}$, given by
\begin{IEEEeqnarray}{c}\label{array:0}
    u_{i,j}^{0}=i+j\cdot\lambda, \quad\forall\,i\in[0:\lambda ),j\in[0:\Gamma^0).
\end{IEEEeqnarray}

Then, the remaining subarrays $\mathbf{U}^{1},\mathbf{U}^{2},\ldots,\mathbf{U}^{\lambda-1}$ are constructed in an iterative approach, meaning that $\mathbf{U}^{h}$ is constructed after completing the constructions of $\mathbf{U}^{0},\mathbf{U}^{1},\ldots,\mathbf{U}^{h-1}$ for all $h\in[1:\lambda)$. 
Given the subarrays $\mathbf{U}^{0},\mathbf{U}^{1},\ldots,\mathbf{U}^{h-1}$, we describe the construction of the subarray $\mathbf{U}^{h}$ with dimensions $\lambda\times\Gamma^h$ in the following two steps for all $h\in[1:\lambda)$:
\begin{itemize}
\item In the $j$-th column of the subarray $\mathbf{U}^{h}$ for any given $j\in[0:\Gamma^h)$, the $h$ elements with row indices $(j)_{\lambda},(j-1)_{\lambda},\ldots,(j-h+1)_{\lambda}$ are all set to ``*'', i.e.,
\begin{IEEEeqnarray}{c}\label{U0:cons}
    u_{\left(j-r\right)_{\lambda},j}^{h}=*,\quad\forall\,r\in[0:h),j\in[0:\Gamma^h).
\end{IEEEeqnarray}
Equivalently, in the $i$-th row of the subarray $\mathbf{U}^{h}$ for any given $i\in[0:\lambda)$, an element is set to ``*'' if its column index belongs to the set $\{(i+r)_{\lambda}+s\cdot\lambda:r\in[0:h),s\in[0:\frac{\Gamma^h}{\lambda})\}$, i.e., the construction in \eqref{U0:cons} can be equivalently expressed as
\begin{IEEEeqnarray}{c}\label{array:h:1}
    u_{i,(i+r)_{\lambda}+s\cdot\lambda}^{h}=*,\quad\forall\,i\in[0:\lambda),r\in[0:h),s\in[0:\frac{\Gamma^h}{\lambda}),
\end{IEEEeqnarray}
where $\lambda|\Gamma^h$ according to \eqref{number:value} and \eqref{number:column}.

Therefore, for the subarray $\mathbf{U}^{h}$, this step completes the construction of its elements with row index $i\in[0:\lambda )$ and column index $j\in\{(i+r)_{\lambda}+s\cdot\lambda:r\in[0:h),s\in[0:\frac{\Gamma^h}{\lambda})\}$. 
\item In the second step, we proceed to complete the construction of the remaining elements with row index $i\in[0:\lambda)$ and column index 
$j\in\{(i+r)_{\lambda}+s\cdot\lambda:r\in[h:\lambda),s\in[0:\frac{\Gamma^h}{\lambda})\}$
for the subarray $\mathbf{U}^{h}$.

We know from \eqref{query:array:subarray}-\eqref{number:column} that after providing the previous subarrays $\mathbf{U}^{0},\mathbf{U}^{1},\ldots,\mathbf{U}^{h-1}$, 
all the elements in the first $\Gamma^0+\Gamma^{1}+\ldots+\Gamma^{h-1}=(\lambda-h)\cdot\Gamma^h$ columns of the array $\mathbf{U}$ (i.e., $\{u_{i,j}:i\in[0:\lambda),j\in[0:(\lambda-h)\cdot\Gamma^h)\}$) are known. 
Therefore, in the $i$-th row of the subarray $\mathbf{U}^{h}$ for any given $i\in[0:\lambda)$, the elements with column indices in $\{(i+r)_{\lambda}+s\cdot\lambda:r\in[h:\lambda),s\in[0:\frac{\Gamma^h}{\lambda})\}$ can be assigned values from the known elements $\{u_{i,(i+h-1)_{\lambda}+t\cdot\lambda}:t\in[0:\frac{(\lambda-h)\cdot\Gamma^h}{\lambda})\}$ in the current row $i$ of the previous subarrays $\mathbf{U}^{0},\mathbf{U}^{1},\ldots,\mathbf{U}^{h-1}$. More precisely, we set
\begin{IEEEeqnarray}{c}\label{array:h:2}
    u_{i,(i+r)_{\lambda}+s\cdot\lambda}^{h}=u_{i,(i+h-1)_{\lambda}+(r-h+s(\lambda-h))\lambda}, \quad\forall\,i\in[0:\lambda),r\in[h:\lambda),s\in[0:\frac{\Gamma^h}{\lambda}).\IEEEeqnarraynumspace
\end{IEEEeqnarray}
\end{itemize}

By iteratively substituting $h=1,2,\ldots,\lambda-1$, we complete the construction of the array $\mathbf{U}$.
In summary, the overall procedures described above are outlined in Algorithm \ref{Query:array}.

\begin{Lemma}\label{lemma:query:array}
The constructed query array $\mathbf{U}$ of dimensions $\lambda\times P$ satisfies the four conditions C0--C3.
\end{Lemma}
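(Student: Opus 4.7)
My plan is to verify the four conditions C0--C3 in turn directly from the construction in \eqref{array:0}--\eqref{array:h:2}.

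Conditions C0 and C2 will follow almost immediately. C0 is by induction on $h$: the base case $\mathbf{U}^0$ contains only integers by \eqref{array:0}, and for $h\geq 1$ each entry of $\mathbf{U}^h$ is either ``*'' (from \eqref{array:h:1}) or a copy, via \eqref{array:h:2}, of an entry of a previously built subarray, which by the inductive hypothesis is an integer in $[0:P)$. For C2, \eqref{array:0} writes each integer $i+j\lambda$ for $i\in[0:\lambda),\,j\in[0:\Gamma^0)$ into $\mathbf{U}^0$ exactly once, which precisely covers $\{0,1,\ldots,P-1\}$.

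For C1, the ``*'' positions in column $j$ of $\mathbf{U}^h$ are exactly rows $(j)_\lambda,(j-1)_\lambda,\ldots,(j-h+1)_\lambda$ by \eqref{array:h:1}; these are $h$ distinct residues modulo $\lambda$, so there are $\lambda-h$ integer entries and $|\mathcal{R}^h_j|=\lambda-h$. To get $|\widetilde{\mathcal{R}}^h_j|=\lambda-h$, I will first establish the row-wise invariant that every non-``*'' entry in row $i$ of any subarray is congruent to $i\pmod{\lambda}$. The base case holds because $u^0_{i,j}=i+j\lambda\equiv i\pmod{\lambda}$, and the inductive step is immediate since \eqref{array:h:2} copies within the same row. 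The $\lambda-h$ integer entries then occupy $\lambda-h$ distinct rows and hence carry $\lambda-h$ distinct residues modulo $\lambda$.

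The bulk of the work is C3. Fix $h\in[0:\lambda-1)$ and $j\in[0:\Gamma^h)$. I will construct $\mathcal{D}^h_j$ by tracing integer values back to their ``$\mathbf{U}^0$-ancestors'' through the chain of copies defined by \eqref{array:h:2}. Because each application of \eqref{array:h:2} stays within a single row, every integer in row $i$ of $\mathbf{U}^k$ is a copy of a unique entry $u^0_{i,j'}=i+j'\lambda$, which I call its $\mathbf{U}^0$-ancestor. For each $r\in[h+1:\lambda)$ I will select a distinct integer row $i_r$ of column $j$ of $\mathbf{U}^h$ together with a column $j_r\in[0:\Gamma^r)$ such that $u^h_{i_r,j}$ and $u^r_{i_r,j_r}$ share the same $\mathbf{U}^0$-ancestor; these two values must then coincide, and the common value joins $\mathcal{R}^r_{j_r}$, giving the desired appearance in $\mathbf{U}^r$. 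Concretely, the offset formula \eqref{array:h:2} expresses the source column of $u^h_{i_r,j}$ as $(i_r+h-1)_\lambda+((j-i_r)_\lambda-h+\lfloor j/\lambda\rfloor(\lambda-h))\lambda$, and a similar formula governs the ancestry in $\mathbf{U}^r$; I match these two by choosing $i_r$ appropriately from among the $\lambda-h$ integer rows. Distinctness of the $\lambda-h-1$ values in $\mathcal{D}^h_j$ is then automatic from the row-wise invariant once the $i_r$'s are chosen pairwise distinct.

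The principal technical obstacle is showing that a pairwise distinct assignment $r\mapsto i_r$ with the shared-ancestor property can always be made. This reduces to a combinatorial matching whose feasibility I intend to verify by an explicit constructive choice; the key auxiliary fact is the cumulative identity $\sum_{k=0}^{h-1}\Gamma^k=(\lambda-h)\Gamma^h=P/(\lambda-h+1)$, obtained by a telescoping computation from \eqref{number:column}, which ensures that the backward ancestry traces stay within the previously constructed subarrays and that the source-column arithmetic at levels $h$ and $r$ can be reconciled.
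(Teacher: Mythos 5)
Your treatment of C0, C1 and C2 matches the paper's and is complete: the induction for C0, the direct count of ``*'' positions from \eqref{U0:cons} for $|\mathcal{R}^h_j|=\lambda-h$, and the row-residue invariant ($u^h_{i,j}\equiv i\pmod{\lambda}$ for every integer entry, preserved because \eqref{array:h:2} copies within a row) are exactly the ingredients the paper uses. Your cumulative identity $\sum_{k=0}^{h-1}\Gamma^k=(\lambda-h)\Gamma^h=P/(\lambda-h+1)$ is also correct and is implicitly used in the paper.

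For C3, however, there is a genuine gap: you have reduced the condition to exhibiting, for each $r\in[h+1:\lambda)$, a row $i_r$ and a column $j_r$ such that $u^h_{i_r,j}$ and $u^r_{i_r,j_r}$ coincide, and you explicitly defer this ``combinatorial matching'' to an unspecified ``explicit constructive choice.'' That matching \emph{is} the content of C3, so the proof is not complete as written. The resolution is not a matching argument at all but a direct computation. First one shows (by unrolling \eqref{array:h:2} and the column bookkeeping \eqref{array:subarray}) that for every $r\in[1:\lambda)$ the set of \emph{all} integer entries of $\mathbf{U}^{r}$ equals
\begin{IEEEeqnarray}{c}
\bigcup_{j'\in[0:\Gamma^r)}\mathcal{R}^{r}_{j'}
=\Big\{u^{h'}_{i,(i+r-1)_{\lambda}+s\lambda}:i\in[0:\lambda),\ s\in[0:\tfrac{\Gamma^{h'}}{\lambda}),\ h'\in[0:r)\Big\},\notag
\end{IEEEeqnarray}
i.e.\ $\mathbf{U}^r$ harvests precisely the entries of each earlier row $i$ sitting in columns congruent to $i+r-1$ modulo $\lambda$. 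Given this, the choice is forced: take $i_r=(j-r+1)_{\lambda}$, so that $(i_r+r-1)_{\lambda}+\lfloor j/\lambda\rfloor\lambda=j$ and the entry $u^h_{(j-r+1)_\lambda,\,j}$ of column $j$ of $\mathbf{U}^h$ literally reappears in $\mathbf{U}^r$; it is an integer (not ``*'') because its row offset $t=r-1$ lies in $[h:\lambda)$, and distinctness over $r$ follows from your row-residue invariant. Without either this characterization of $\bigcup_{j'}\mathcal{R}^r_{j'}$ or the explicit choice $i_r=(j-r+1)_{\lambda}$, your ancestor-tracing plan does not yet establish C3.
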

\begin{proof}
The proof of this lemma is provided in Appendix-\ref{proof:C0-C3}.
\end{proof}

\begin{algorithm}[httb]
\caption{Construction of Query Array}
\label{Query:array}
\begin{algorithmic}[1] 
\REQUIRE System parameters $N,K,X,T$
\ENSURE Query array $\mathbf{U}$ of dimensions $\lambda\times P$ with $\lambda=N-(K+X+T-1)$ and $P=\lambda\cdot\mathrm{lcm}(1,2,\ldots,\lambda)$.
\STATE Let $\mathbf{U}^{0}$ denote a subarray of dimensions $\lambda\times\frac{P}{\lambda}$ and $\mathbf{U}^{h}$ denote another subarray of dimensions $\lambda\times\frac{P}{(\lambda-h)(\lambda-h+1)}$ for all $h\in[1:\lambda)$. Then set  $\mathbf{U}=\big[
  \begin{array}{@{}c;{2pt/2pt}c;{2pt/2pt}c;{2pt/2pt}c@{}}
    \mathbf{U}^{0} & \mathbf{U}^{1} & \cdots & \mathbf{U}^{\lambda-1}  \\
\end{array}
\big]$.
\FORALL{$i\in[0:\lambda ),j\in[0:\frac{P}{\lambda})$}
\STATE $u_{i,j}^{0}=i+j\cdot\lambda$,
\ENDFOR
\FORALL{$h=1,2,\ldots,\lambda-1$}
\FORALL{$i\in[0:\lambda),r\in[0:\lambda),s\in[0:\frac{P}{\lambda(\lambda-h)(\lambda-h+1)})$}
\IF{$r\in[0:h)$}
\STATE $u_{i,(i+r)_{\lambda}+s\cdot\lambda}^{h}=*$,
\ELSE
\STATE $u_{i,(i+r)_{\lambda}+s\cdot\lambda}^{h}=u_{i,(i+h-1)_{\lambda}+(r-h+s(\lambda-h))\lambda}$,
\ENDIF
\ENDFOR
\ENDFOR
\RETURN Array $\mathbf{U}=\big[
  \begin{array}{@{}c;{2pt/2pt}c;{2pt/2pt}c;{2pt/2pt}c@{}}
    \mathbf{U}^{0} & \mathbf{U}^{1} & \cdots & \mathbf{U}^{\lambda-1}  \\
\end{array}
\big]$.
\end{algorithmic}
\end{algorithm}

\begin{Example}
Assume that the parameter $\lambda=3$, which results in $P=18$. The query array $\mathbf{U}$ of dimensions $3\times 18$ is given in \eqref{Example2:query:array}.
In addition, when $\lambda=4$, the query array ${\mathbf{U}}$ of dimensions $4\times 48$ is given by
\begin{IEEEeqnarray}{c}\label{Example:APIR:query:array}
{\mathbf{U}}=\left[
  \begin{array}{c;{2pt/2pt}c;{2pt/2pt}c;{2pt/2pt}c}
    {\mathbf{U}}^{0} & {\mathbf{U}}^{1} & {\mathbf{U}}^{2} & {\mathbf{U}}^{3}  \\
\end{array}
\right]_{4\times 48},
\end{IEEEeqnarray}
where 
\begin{IEEEeqnarray*}{c}
{\mathbf{U}}^{0}=\left[
  \begin{array}{cccccccccccc}
   0 & 4 & 8  & 12 & 16 & 20 & 24 & 28 & 32 & 36 & 40 & 44 \\
   1 & 5 & 9  & 13 & 17 & 21 & 25 & 29 & 33 & 37 & 41 & 45 \\
   2 & 6 & 10 & 14 & 18 & 22 & 26 & 30 & 34 & 38 & 42 & 46 \\
   3 & 7 & 11 & 15 & 19 & 23 & 27 & 31 & 35 & 39 & 43 & 47
\end{array}
\right]_{4\times12},\\
{\mathbf{U}}^{1}=\left[
  \begin{array}{cccc}
 * & 0 & 16 & 32 \\
 5 & * & 21 & 37 \\
 10 & 26 & * & 42 \\
 15 & 31 & 47 & *
\end{array}
\right]_{4\times4}, \quad\quad
{\mathbf{U}}^{2}=\left[
  \begin{array}{cccccccc}
  * & * & 4 & 20 & * & * & 36 & 0  \\
  9 & * & * & 25 & 41 & * & * & 21 \\
  14 & 30 & * & * & 46 & 42 & * & * \\
  * & 3 & 19 & * & * & 35 & 15 & *\\
\end{array}
\right]_{4\times8}, \\
{\mathbf{U}}^{3}=\left[
  \begin{array}{c@{\;\;\,}c@{\;\;\,}c@{\;\;\,}c@{\;\;\,}c@{\;\;\,}c@{\;\;\,}c@{\;\;\,}c@{\;\;\,}c@{\;\;\,}c@{\;\;\,}c@{\;\;\,}c@{\;\;\,}c@{\;\;\,}c@{\;\;\,}c@{\;\;\,}c@{\;\;\,}c@{\;\;\,}c@{\;\;\,}c@{\;\;\,}c@{\;\;\,}c@{\;\;\,}c@{\;\;\,}c@{\;\;\,}c@{\;\;\,}c@{\;\;\,}c@{\;\;\,}c@{\;\;\,}c@{\;\;\,}c@{\;\;\,}c@{\;\;\,}c@{\;\;\,}c@{\;\;\,}c@{\;\;\,}c@{\;\;\,}c@{\;\;\,}c@{\;\;\,}c@{\;\;\,}c@{\;\;\,}c@{\;\;\,}c@{\;\;\,}c@{\;\;\,}c@{\;\;\,}c@{\;\;\,}c@{\;\;\,}c@{\;\;\,}c@{\;\;\,}c@{\;\;\,}c}
   * & * & * & 8 & * & * & * & 24 & * & * & * & 40 & * & * & * & 16 & * & * & * & 4 & * & * & * & 36 \\
   13 & * & * & * & 29 & * & * & * & 45 & * & * & * & 37 & * & * & * & 25 & * & * & * & 21 & * & * & *\\
   * & 2 & * & * & * & 18 & * & * & * & 34 & * & * & * & 10 & * & * & * & 14 & * & * & * & 46 & * & * \\
   * & * & 7 & * & * & * & 23 & * & * & * & 39 & * & * & * & 31 & * & * & * & 3 & * & * & * & 35 & * \\
\end{array}
\right]_{4\times 24}.
\end{IEEEeqnarray*}

It is straightforward to verify that both the query arrays in \eqref{Example2:query:array} and \eqref{Example:APIR:query:array} satisfy the conditions C0--C3.
\end{Example}

After completing the construction of the query array satisfying the conditions C0--C3, we make use of the query array and the \emph{feasible PIR coding framework} to design the adaptive private queries. 

From C0 and C1, we know that $\mathcal{R}_j^h\subseteq[0:P)$ and $\widetilde{\mathcal{R}}_{j}^{h}\subseteq[0:\lambda)$ with $|\mathcal{R}_j^h|=|\widetilde{\mathcal{R}}_{j}^{h}|=\lambda-h$ for any given $h\in[0:\lambda)$ and $j\in[0:\Gamma^h)$.
Therefore, it is reasonable to design private queries to retrieve the partial file $\mathbf{W}^{(\theta)}(\mathcal{R}_j^h,:)$ from the secure storage system in Section \ref{section:storage:system}, for any $h\in[0:\lambda)$ and $j\in[0:\Gamma^h)$. We observe from \eqref{file:symbols}--\eqref{storage:data} and \eqref{Franmework:file:symbols}--\eqref{PIRFramework:storage:evaluation} that the partial file $\mathbf{W}^{(m)}(\mathcal{R}_j^h,:)$ is stored at the distributed system using the same encoding method as the partial file $\widetilde{\mathbf{W}}^{(m)}(\widetilde{\mathcal{R}}_j^h,:)$ for all $m\in[0:M)$. In particular, the \emph{feasible PIR coding framework} is designed regardless of the values of the data files $\widetilde{\mathbf{W}}^{(0)}(\widetilde{\mathcal{R}}_j^h,:),\widetilde{\mathbf{W}}^{(1)}(\widetilde{\mathcal{R}}_j^h,:),\ldots,\widetilde{\mathbf{W}}^{(M-1)}(\widetilde{\mathcal{R}}_j^h,:)$. Thus, following a method similar to that of \eqref{Framework:query:polynomial}-\eqref{Framework:query:design} in the \emph{feasible PIR coding framework}, the encoding functions $\{v_{i,k,0}^{(\widetilde{\mathcal{R}}_j^h)}(x),\ldots,v_{i,k,T}^{(\widetilde{\mathcal{R}}_j^h)}(x)\}_{i\in{\widetilde{\mathcal{R}}}_j^h,k\in[0:K)}$ can be employed to design private queries for retrieving the desired partial file $\mathbf{W}^{(\theta)}(\mathcal{R}_j^h,:)$ for any given $h\in[0:\lambda)$ and $j\in[0:\Gamma^h)$. 

More specifically, for any $m\in[0:M),i\in\mathcal{R}_j^h$, and $k\in[0:K)$, the user locally generates $T$ random noises $z_{(i)_{\lambda},k,0}^{(m,\widetilde{\mathcal{R}}_j^h)},\ldots,z_{(i)_{\lambda},k,T-1}^{(m,\widetilde{\mathcal{R}}_j^h)}$ and then designs the private query function $q_{i,k}^{(m,\mathcal{R}_j^h)}(x)$ as
\begin{IEEEeqnarray}{c}\label{AdaptivePIR:query:function}
q_{i,k}^{(m,\mathcal{R}_j^h)}(x)=\sum\limits_{t=0}^{T-1}{z}_{(i)_{\lambda},k,t}^{(m,\widetilde{\mathcal{R}}_j^h)}\cdot v_{(i)_{\lambda},k,t}^{(\widetilde{\mathcal{R}}_j^h)}(x)
+\left\{
\begin{array}{@{}l@{\;\;}l}
v_{(i)_{\lambda},k,T}^{(\widetilde{\mathcal{R}}_j^h)}(x),
&\mathrm{if}\,\, m=\theta \\
0, & \mathrm{otherwise}
\end{array} \right..
\end{IEEEeqnarray}
The private query $\mathcal{Q}_n^{(\theta,\mathcal{R}_j^h)}$ sent to the server $n$ for retrieving the partial file $\mathbf{W}^{(\theta)}(\mathcal{R}_j^h,:)$ is given by evaluating these query functions at $x=\alpha_n$, i.e.,
\begin{IEEEeqnarray}{c}
\mathcal{Q}_n^{(\theta,\mathcal{R}_j^h)}=\left\{q_{i,k}^{(m,\mathcal{R}_j^h)}(\alpha_n):m\in[0:M),i\in\mathcal{R}_j^h,k\in[0:K)\right\},\quad\forall\,n\in[0:N).
\end{IEEEeqnarray}

To provide the adaptive guarantee, the user will combine the query $\mathcal{Q}_{n}^{(\theta,\mathcal{R}_j^h)}$ across all $h\in[0:\lambda)$ and $j\in[0:\Gamma^h)$, and then sends a total of $F=P=\sum_{h=0}^{\lambda-1}\Gamma^h$ queries to the server $n$, given by
\begin{IEEEeqnarray}{c}\label{APIR:query}
    \mathcal{Q}_{n}^{(\theta)}=\left\{\mathcal{Q}_{n}^{(\theta,\mathcal{R}_j^h)}:h\in[0:\lambda),
  j\in[0:\Gamma^h) \right\},\quad\forall\,n\in[0:N).
\end{IEEEeqnarray}

\subsubsection*{Answer Phase} Due to $\mathcal{R}_j^{h}\subseteq[0:P)$ for any $h\in[0:\lambda)$ and $j\in[0:\Gamma^h)$, the encoded data stored at the server $n$ contains $\{f_{i}^{(m)}(\alpha_n):m\in[0:M),i\in\mathcal{R}_j^{h}\}$ by \eqref{storage:data}. Therefore, upon receiving the query $\mathcal{Q}^{(\theta,\mathcal{R}_j^h)}_{n}$, the server $n$ can generate the response $\mathcal{A}_n^{(\theta,\mathcal{R}_j^h)}$ consisting of $K$ sub-responses in a manner similar to \eqref{Framework:answers:1}-\eqref{Framework:answers:2}, i.e.,
\begin{IEEEeqnarray}{c}\label{APIR:answers}
\mathcal{A}_n^{(\theta,\mathcal{R}_j^h)}=\left\{A_{n,k}^{(\theta,\mathcal{R}_j^h)}:k\in[0:K)\right\}, \quad\forall\, h\in[0:\lambda),  j\in[0:\Gamma^h),
\end{IEEEeqnarray}
where the sub-response $A_{n,k}^{(\theta,\mathcal{R}_j^h)}$ is a linear combination of the encoded data $\{f_{i}^{(m)}(\alpha_n)\}_{m\in[0:M),i\in\mathcal{R}_j^{h}}$ using the received query data $\{q_{i,k}^{(m,\mathcal{R}_j^h)}(\alpha_n)\}_{m\in[0:M),i\in\mathcal{R}_j^h}$ as coefficients, given by
\begin{IEEEeqnarray}{c}\label{answer:data}
A_{n,k}^{(\theta,\mathcal{R}_j^h)}=\sum\limits_{m\in[0:M)}\sum\limits_{i\in\mathcal{R}_j^h}q_{i,k}^{(m,\mathcal{R}_j^h)}(\alpha_n)\cdot f_{i}^{(m)}(\alpha_n),\quad\forall\,k\in[0:K).
\end{IEEEeqnarray}

For the $F=P$ queries in $\mathcal{Q}^{(\theta)}_{n}$ \eqref{APIR:query}, the server $n$ can generate the following $F=P$ responses:
\begin{IEEEeqnarray}{c}\label{answer:APIR}
    \mathcal{A}_{n}^{(\theta)}=\left\{\mathcal{A}_n^{(\theta,\mathcal{R}_j^h)}:h\in[0:\lambda), j\in[0:\Gamma^h) \right\},\quad\forall\,n\in[0:N).
\end{IEEEeqnarray}
In particular, the $P$ responses are calculated and sent to the user in the following order until it can decode the desired file $\mathbf{W}^{(\theta)}$.
\begin{IEEEeqnarray}{c}\label{answer:order}
\left(\mathcal{A}_n^{(\theta,\mathcal{R}_0^0)},\ldots,\mathcal{A}_n^{(\theta,\mathcal{R}_{\Gamma^0-1}^0)},\mathcal{A}_n^{(\theta,\mathcal{R}_0^1)},\ldots,\mathcal{A}_n^{(\theta,\mathcal{R}_{\Gamma^1-1}^1)},\ldots,\mathcal{A}_n^{(\theta,\mathcal{R}_0^{\lambda-1})},\ldots,\mathcal{A}_n^{(\theta,\mathcal{R}_{\Gamma^{\lambda\!-\!1}\!-\!1}^{\lambda-1})}\right).
\end{IEEEeqnarray}
Due to the sequential processing nature of the servers, once the user receives the response $\mathcal{A}_n^{(\theta,\mathcal{R}_j^h)}$ from the server $n$, then it must have received the previous responses $\{\mathcal{A}_n^{(\theta,\mathcal{R}_{j'}^{h'})}:h'\in[0:h),j'\in[0:\Gamma^{h'})\}\bigcup\{\mathcal{A}_n^{(\theta,\mathcal{R}_{j'}^h)}:j'\in[0:j)\}$, for any $h\in[0:\lambda)$ and $j\in[0:\Gamma^h)$.

\subsubsection*{Decoding Phase}
Let $S_{\max}=\lambda-1$, i.e., the proposed adaptive PIR scheme can tolerate the presence of $S$ stragglers simultaneously for all $0\leq S\leq\lambda-1$.
In the presence of $S$ stragglers for all $S\in[0:\lambda)$, we set
\begin{IEEEeqnarray}{c}\notag
F_{S}=\sum\limits_{h=0}^{S}\Gamma^h, 
\end{IEEEeqnarray}
meaning that the user needs to wait for the first $\sum_{h=0}^{S}\Gamma^h$ responses from each of the fastest $N-S$ servers in order to decode the desired file. 
More precisely, in the presence of $S$ stragglers for all $S\in[0:\lambda)$, by \eqref{answer:order} the responses received by the user from the fastest $N-S$ servers are given by
\begin{IEEEeqnarray}{c} \label{APIR:response:sub}
\left\{\mathcal{A}_{n}^{(\theta,\mathcal{R}_{j}^{h})}:n\in[0:N)\backslash\mathcal{S},h\in[0:S],j\in[0:\Gamma^h)\right\},\quad\forall\, \mathcal{S}\subseteq [0:N), |\mathcal{S}|=S,
\end{IEEEeqnarray}
where $\mathcal{S}$ represents the indices of straggler servers. 


To demonstrate that the proposed adaptive PIR scheme can tolerate the presence of $S$ stragglers simultaneously for all $0\leq S\leq\lambda-1$, it suffices to show that the user can decode the partial files $\{{\mathbf{W}}^{(\theta)}(\mathcal{R}^0_j,:)\}_{j\in[0:\Gamma^0)}$ from the received responses in \eqref{APIR:response:sub} for any given $S\in[0:\lambda)$. This is because the desired file $\mathbf{W}^{(\theta)}$ can be recovered from $\{{\mathbf{W}}^{(\theta)}(\mathcal{R}^0_j,:)\}_{j\in[0:\Gamma^0)}$ by C2 and \eqref{file:symbols}. 

Recall from C1 that $\widetilde{\mathcal{R}}_{j}^{h}\subseteq[0:\lambda)$ with $|\widetilde{\mathcal{R}}_{j}^{h}|=\lambda-h$ for any $h\in[0:\lambda)$ and $j\in[0:\Gamma^h)$. Therefore, according to Definition \ref{Definition:feasible:PIR:framwork}, the \emph{feasible PIR coding framework} can be used to privately download one out of the $M$ partial files $\{\widetilde{\mathbf{W}}^{(m)}(\widetilde{\mathcal{R}}_{j}^{h},:)\}_{m\in[0:M)}$ from a secure distributed system, by employing the encoding parameters $\alpha_0,\alpha_1,\ldots,\alpha_{N-1}$ along with the encoding functions $\{b_{i,0}(x),\ldots,b_{i,K+X-1}(x)\}_{i\in\widetilde{\mathcal{R}}_{j}^{h}}$ and $\{v_{i,k,0}^{(\widetilde{\mathcal{R}}_{j}^{h})}(x),\ldots,v_{i,k,T}^{(\widetilde{\mathcal{R}}_{j}^{h})}(x)\}_{i\in\widetilde{\mathcal{R}}_{j}^{h},k\in[0:K)}$.
For any $h\in[0:\lambda)$ and $j\in[0:\Gamma^h)$, we design the adaptive PIR scheme to privately retrieve one out of the $M$ partial files $\{\mathbf{W}^{(m)}(\mathcal{R}_j^h,:)\}_{m\in[0:M)}$ from a secure distributed system, using the same encoding parameters and encoding functions as those in the \emph{feasible PIR coding framework}. 
In terms of retrieving the partial file, the only difference between the adaptive PIR scheme and the PIR coding framework lies in the original data files used. 
Since the parameters and encoding functions of the \emph{feasible PIR coding framework} are designed independently of the values of the original files, the adaptive PIR scheme satisfies the decoding properties similar to those of the constraints $\widetilde{\mathrm{F}}$2--$\widetilde{\mathrm{F}}$3, i.e., for any $h\in[0:\lambda)$ and $j\in[0:\Gamma^h)$,
\begin{enumerate}
\item[F2:] The user can recover the partial file ${\mathbf{W}}^{(\theta)}(\mathcal{R}_j^h,:)$ from any $N-h$ out of the $N$ responses $\mathcal{A}_{[0:N)}^{(\theta,\mathcal{R}_j^h)}$. 
\item[F3:] The partial file ${\mathbf{W}}^{(\theta)}(\mathcal{R}^h_j,:)$ can be decoded from any $N-(h+d)$ out of the $N$ responses $\mathcal{A}_{[0:N)}^{(\theta,\mathcal{R}_j^h)}$ after obtaining any $d$ rows of ${\mathbf{W}}^{(\theta)}(\mathcal{R}_j^h,:)$ for any $d\in[1:\lambda-h)$. 
\end{enumerate}


Next, we use an iterative approach to show that the partial files $\{{\mathbf{W}}^{(\theta)}(\mathcal{R}^0_j,:)\}_{j\in[0:\Gamma^0)}$ can be decoded from the received responses in \eqref{APIR:response:sub}. 

\begin{itemize}
\item In the initial step, the partial file ${\mathbf{W}}^{(\theta)}(\mathcal{R}_j^S,:)$ can be decoded from the $N-S$ received responses $\mathcal{A}_{[0:N)\backslash\mathcal{S}}^{(\theta,\mathcal{R}_{j}^{S})}$ by \eqref{APIR:response:sub} and F2 for all $j\in[0:\Gamma^S)$.
\item In the iterative step, we prove that the user can decode the partial files $\{\mathbf{W}^{(\theta)}(\mathcal{R}_j^h,:)\}_{j\in[0:\Gamma^h)}$ after completing the decoding of $\{{\mathbf{W}}^{(\theta)}(\mathcal{R}_j^r,:):r\in[h+1:S],j\in[0:\Gamma^r)\}$ for any $h\in[0:S)$.

We know from C3 that  for any $h\!\in\![0\!:\!S)$ and $j\!\in\![0\!:\!\Gamma^h)$, there is a subset $\mathcal{D}^{h}_j\!=\!\{a_0,a_1,\ldots,a_{S-h-1}\}\!\subseteq\!\mathcal{R}^h_j$ of size $S-h$ such that $a_{r-h-1}\!\in\!\bigcup_{j\in\Gamma^{r}}\mathcal{R}_j^{r}$ for all $r\!\in\![h+1\!:\!S]$, i.e., the row vector ${\mathbf{W}}^{(\theta)}(\{a_{r-h-1}\},:)$ of the partial file ${\mathbf{W}}^{(\theta)}(\mathcal{R}_j^h,:)$ is known after completing the decoding of $\{{\mathbf{W}}^{(\theta)}(\mathcal{R}_j^r,:)\}_{j\in\Gamma^r}$ for any $r\in[h+1:S]$. Thus, 
the $S-h$ row vectors ${\mathbf{W}}^{(\theta)}(\mathcal{D}_j^h,:)$  of ${\mathbf{W}}^{(\theta)}(\mathcal{R}_j^h,:)$ can be obtained from $\{{\mathbf{W}}^{(\theta)}(\mathcal{R}_j^r,:)\}_{r\in[h+1:S],j\in[0:\Gamma^r)}$.
Then, by applying $d=S-h$ to F3, the partial file ${\mathbf{W}}^{(\theta)}(\mathcal{R}^h_j,:)$ is decoded from the $N-S$ received responses $\mathcal{A}_{[0:N)\backslash\mathcal{S}}^{(\theta,\mathcal{R}_j^h)}$ along with 
the $S-h$ row vectors ${\mathbf{W}}^{(\theta)}(\mathcal{D}_j^h,:)$  of ${\mathbf{W}}^{(\theta)}(\mathcal{R}_j^h,:)$ for all $j\in[0:\Gamma^h)$. This completes the proof of the iterative step.
\end{itemize}
By iteratively setting $h=S-1,S-2,\ldots,0$, the user will recover the partial files $\{{\mathbf{W}}^{(\theta)}(\mathcal{R}^0_j,:)\}_{j\in[0:\Gamma^0)}$ and then obtains the desired file $\mathbf{W}^{(\theta)}$.
Consequently, in the presence of $S$ stragglers for all $S\in[0:\lambda)$, the user can correctly decode the desired file $\mathbf{W}^{(\theta)}$ from the responses in \eqref{APIR:response:sub}.

From \eqref{APIR:answers} and \eqref{answer:data}, each response consists of $K$ sub-responses, with each sub-response being an encoded symbol defined over the finite field $\mathbb{F}_q$. Therefore, in the presence of $S$ stragglers for all $S\in[0:\lambda)$, the communication cost of downloading the responses in \eqref{APIR:response:sub} from the fastest $N-S$ servers is given by
\begin{IEEEeqnarray}{c}\notag
 D_S=(N-S)K\cdot\sum\limits_{h=0}^{S}\Gamma^{h}=\frac{(N-S)KP}{N-S-(K+X+T-1)},
\end{IEEEeqnarray}
where the last equation is due to \eqref{define:lambda} and \eqref{number:column}.
Since each file contains $L=PK$ symbols by \eqref{file:symbols}, the retrieval rate \eqref{def:PPC rate} is given by
\begin{IEEEeqnarray}{c}\notag
  R_S=\frac{PK}{D_S}
  =1-\frac{K+X+T-1}{N-S}.
\end{IEEEeqnarray}

It is straightforward to observe that the secure storage system, private queries, server responses, and decoding processes of the proposed adaptive PIR scheme are all designed using the encoding parameters $\{\alpha_n\}_{n\in[0:N)}$ along with the encoding functions $\{b_{i,0}(x),\ldots,b_{i,K+X-1}(x)\}_{i\in[0:\lambda)}$ and $\{v_{i,k,0}^{(\widetilde{\mathcal{R}}_{j}^{h})}(x),\ldots,v_{i,k,T}^{(\widetilde{\mathcal{R}}_{j}^{h})}(x)\}_{i\in\widetilde{\mathcal{R}}_{j}^{h},k\in[0:K),h\in[0:\lambda),j\in[0:\Gamma^h)}$
from the \emph{feasible PIR coding framework}. 
Therefore, the finite field for the \emph{feasible PIR coding framework} is enough to ensure the achievability of the adaptive PIR scheme.

We summarize the general results of the proposed adaptive PIR scheme in the following theorem.
\begin{Theorem}
For any given system parameters $N,K,M,X,T$ with $N>K+X+T-1$, 
any \emph{feasible PIR coding framework} over a finite field $\mathbb{F}_q$ of size $q$ can be utilized to construct an adaptive PIR scheme that can tolerate the presence of $S$ stragglers simultaneously for all $0\leq S\leq N-(K+X+T)$. In particular, the constructed adaptive PIR scheme achieves a retrieval rate of $1-\frac{K+X+T-1}{N-S}$ simultaneously for all numbers of stragglers $0\leq S\leq N-(K+X+T)$ over the same finite filed.
\end{Theorem}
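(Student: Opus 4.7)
The plan is to prove the theorem by concretely constructing the adaptive PIR scheme from the given \emph{feasible PIR coding framework}, then verifying the four requirements: storage security, query privacy, adaptive correctness, and the claimed rate expression. Let $\alpha_0,\ldots,\alpha_{N-1}$ together with $\{b_{i,k}(x)\}$ and $\{v^{(\mathcal{R})}_{i,k,t}(x)\}$ be the framework parameters over $\mathbb{F}_q$. I would partition each file $\mathbf{W}^{(m)}$ of dimensions $P \times K$ (with $P=\lambda\cdot\mathrm{lcm}(1,\ldots,\lambda)$) into $P/\lambda$ horizontal blocks of $\lambda$ rows, and encode the $i$-th row of each block by the framework's storage polynomial applied with parameters $b_{(i)_\lambda,k}(x)$; the server $n$ stores the evaluations at $\alpha_n$. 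This gives independent parallel copies of the framework storage, so the secrecy constraint \eqref{security} and the reliability constraint \eqref{reconstruction} follow block by block from the framework's conditions $\widetilde{\mathrm{F}}0$ and $\widetilde{\mathrm{F}}1$.

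Next, I would define the query array $\mathbf{U}$ of dimensions $\lambda\times P$ via the iterative construction in Algorithm~\ref{Query:array} and, invoking Lemma~\ref{lemma:query:array}, note that $\mathbf{U}$ satisfies C0--C3. For each column $\mathcal{R}_j^h$ of $\mathbf{U}$, I would run an instance of the framework's query generator, producing the query polynomials $q_{i,k}^{(m,\mathcal{R}_j^h)}(x)$ with fresh independent noise terms, and send to server $n$ the combined query $\mathcal{Q}_n^{(\theta)}$ in \eqref{APIR:query}. Server $n$ returns the sub-responses in \eqref{answer:data} in the ordering \eqref{answer:order}. Because each $\mathcal{Q}_n^{(\theta,\mathcal{R}_j^h)}$ is the framework's query for the partial file indexed by $\widetilde{\mathcal{R}}_j^h \subseteq [0:\lambda)$, privacy against any $T$ colluding servers (i.e. \eqref{Infor:priva cons}) follows from condition $\widetilde{\mathrm{F}}1$ instance by instance, with mutual independence of the noises guaranteeing the joint privacy across columns.

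The crux is the adaptive correctness argument, which I would do by a backward induction on the layer index. In the presence of $S$ stragglers, the user collects, for every non-straggler $n$, the responses $\mathcal{A}_n^{(\theta,\mathcal{R}_j^h)}$ for all $h\in[0:S]$ and $j\in[0:\Gamma^h)$. Each response is an instance of the framework response in \eqref{Framework:answers:2}, so properties F2 and F3 (inherited from $\widetilde{\mathrm{F}}2$ and $\widetilde{\mathrm{F}}3$) apply. The base case $h=S$ uses $\widetilde{\mathrm{F}}2$ to decode every $\mathbf{W}^{(\theta)}(\mathcal{R}_j^S,:)$ from $N-S$ responses, since $|\mathcal{R}_j^S|=\lambda-S$. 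For the inductive step $h<S$, condition C3 of the query array supplies, for each column $j$ of $\mathbf{U}^h$, a subset $\mathcal{D}_j^h\subseteq\mathcal{R}_j^h$ of size $S-h$ whose corresponding rows of $\mathbf{W}^{(\theta)}$ have already been decoded in previously processed layers; applying $\widetilde{\mathrm{F}}3$ with $d=S-h$ yields the remaining rows. Descending to $h=0$ and invoking C2 then reconstructs the entire file $\mathbf{W}^{(\theta)}$. The \emph{main obstacle} here is guaranteeing that the query array actually supplies the right ``recycled'' indices across layers, i.e.\ that Algorithm~\ref{Query:array} meets C3 simultaneously for every $S\in[0:\lambda)$; this is precisely the content of Lemma~\ref{lemma:query:array}, whose proof in Appendix~\ref{proof:C0-C3} I would rely on.

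Finally, I would compute the rate. The download cost in the presence of $S$ stragglers is
\begin{IEEEeqnarray*}{rCl}
D_S &=& (N-S)\,K\sum_{h=0}^{S}\Gamma^h \;=\; \frac{(N-S)KP}{\lambda-S} \;=\; \frac{(N-S)KP}{N-S-(K+X+T-1)},
\end{IEEEeqnarray*}
using \eqref{number:column} and the telescoping identity $\sum_{h=0}^{S}\Gamma^h = P/(\lambda-S)$. Since $L=PK$, the rate is $R_S = L/D_S = 1-\frac{K+X+T-1}{N-S}$ as claimed. Because every encoding polynomial used in the construction (storage, queries, and responses) is drawn directly from the framework over $\mathbb{F}_q$ without introducing new field elements or requiring additional distinctness conditions, the entire scheme operates over the same finite field $\mathbb{F}_q$, which completes the proof.
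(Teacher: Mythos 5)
Your proposal is correct and follows essentially the same route as the paper: the same block-wise reuse of the framework's storage and query polynomials indexed by the query array of Algorithm \ref{Query:array}, the same backward induction on layers using C2--C3 together with F2--F3 (with $d=S-h$) for adaptive correctness, and the same telescoping computation $\sum_{h=0}^{S}\Gamma^h=P/(\lambda-S)$ yielding $R_S=1-\frac{K+X+T-1}{N-S}$. No gaps.
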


By Lemmas \ref{theorem:tield size} and \ref{lemma:feasible:PIR}, the following results are immediately obtained.

\begin{Theorem}
For any given system parameters $N,K,M,X,T$ with $N>K+X+T-1$, there exists an adaptive PIR scheme that achieves a retrieval rate of $1-\frac{K+X+T-1}{N-S}$ in the presence of $S$ stragglers simultaneously for all $0\leq S\leq N-(K+X+T)$. 
Additionally, the adaptive PIR scheme can operate over the finite filed $\mathbb{F}_q$ for any prime power $q$ of size $q\geq N+\max\{K, N-(K+X+T-1)\}$ and achieves the optimal finite field size $q= N+\max\{K, N-(K+X+T-1)\}$ under the constraints P0--P3.
\end{Theorem}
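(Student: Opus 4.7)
The plan is to obtain this theorem as a direct corollary of the preceding theorem combined with Lemmas \ref{theorem:tield size} and \ref{lemma:feasible:PIR}, since all of the real work has already been done in Sections \ref{Section:PIR:framework} and \ref{scheme:APIR}. Specifically, the preceding theorem asserts that every \emph{feasible PIR coding framework} over a finite field $\mathbb{F}_q$ lifts to an adaptive PIR scheme over the same $\mathbb{F}_q$ with retrieval rate $1-\frac{K+X+T-1}{N-S}$ simultaneously for all $0\le S\le N-(K+X+T)$, while Lemma \ref{lemma:feasible:PIR} guarantees the existence of such a framework whenever $q$ is a prime power with $q\ge N+\max\{K,\lambda\}$, where $\lambda=N-(K+X+T-1)$.

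The first step, then, is to fix an arbitrary prime power $q\ge N+\max\{K,\lambda\}$, invoke Lemma \ref{lemma:feasible:PIR} (via the explicit Lagrange-code construction of Section \ref{subsection:implementation:PIR}) to instantiate a \emph{feasible PIR coding framework} over $\mathbb{F}_q$, and then feed this framework into the general construction of Section \ref{scheme:APIR}. The preceding theorem then immediately yields an adaptive PIR scheme over the same $\mathbb{F}_q$ achieving retrieval rate $1-\frac{K+X+T-1}{N-S}$ for every $0\le S\le N-(K+X+T)$, which is the achievability half of the claim.

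For the optimality half, I would appeal directly to Lemma \ref{theorem:tield size}, which states that $N+\max\{K,\lambda\}$ is the smallest field size admitting encoding parameters $\{\beta_{i,k},\alpha_n\}$ satisfying P0--P3. Because the adaptive PIR scheme produced by this pipeline inherits its encoding parameters verbatim from the underlying feasible framework, any further reduction of $q$ would violate one of P0--P3 and therefore break the framework at its source. Combining the achievability at $q=N+\max\{K,\lambda\}$ with this matching lower bound gives the stated optimality.

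The only potential obstacle is a scope issue rather than a technical one: the optimality assertion is conditional on the constraints P0--P3 (that is, on the Lagrange-code style instantiation used here), and not an unconditional information-theoretic lower bound on the field size of arbitrary adaptive PIR schemes. I would explicitly note this scope in the statement's wording so the claim is not overinterpreted, and otherwise the proof reduces to a one-line citation of the three previously established results.
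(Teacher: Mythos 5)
Your proposal is correct and matches the paper's own argument: the theorem is obtained exactly as you describe, by instantiating the feasible PIR coding framework via Lemma \ref{lemma:feasible:PIR}, feeding it into the general construction of the preceding theorem, and citing Lemma \ref{theorem:tield size} for the optimality of $q=N+\max\{K,N-(K+X+T-1)\}$ under P0--P3. Your remark that the optimality is conditional on P0--P3 rather than an unconditional lower bound is also consistent with the paper's own framing (see the remark following the theorem).
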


\begin{Remark}
Although the \emph{feasible PIR coding framework} implemented in Section \ref{subsection:implementation:PIR} achieves the optimal finite field size, it is constrained by the conditions P0--P3. Therefore, it is preferable to relax these constraints and design an alternative framework with a smaller finite field size. This will allow the associated adaptive PIR scheme to operate over a smaller finite field.
Furthermore, different storage codes exhibit varying performance metrics, such as the repair bandwidth when repairing failed servers \cite{dimakis2010network}. Designing \emph{feasible PIR coding framework}s based on diverse storage coding systems can be advantageous \cite{kumar2019achieving,zhu2019new,li2020towards}, as it allows adaptive PIR schemes to function effectively across a wide range of storage systems.
\end{Remark}

\begin{Remark}
We might additionally take into account the existence of certain Byzantine servers with a size of up to $B$, who intentionally return arbitrarily incorrect responses to the user. In the \emph{feasible PIR coding framework} implemented using Lagrange codes, all the server responses for retrieving any partial file constitute Reed-Solomon (RS) codewords \cite{zhu2022symmetric,zhu2022multi}.
By leveraging the decoding properties of RS codes, the adaptive PIR scheme built on the \emph{feasible PIR coding framework} implemented using Lagrange codes can provide robustness against any $B$ Byzantine servers by waiting for responses from an additional $2B$ servers. 
In particular, the additional waiting for responses from $2B$ servers does not change the required finite field size.
Therefore, in the scenario of up to $B$ Byzantine servers, the adaptive PIR scheme achieves a retrieval rate of $1-\frac{K+X+T+2B-1}{N-S}$ simultaneously for all numbers of stragglers $0\leq S\leq N-(K+X+T+2B)$ over any finite filed $\mathbb{F}_q$ of size $q\geq N+\max\{K, N-(K+X+T-1)\}$.
\end{Remark}

\subsection{Secrecy and Privacy}\label{APIR:secrecy:Privacy}
In this subsection, we demonstrate the secrecy of the distributed storage system and the privacy of the proposed adaptive PIR scheme. 


\begin{Lemma}[Generalized Secret Sharing \cite{shamir1979share,jackson1996combinatorial,bitar2017staircase}]\label{secrecy:lemma}
Given any positive integers $n,k,t, m$ with $n\geq k+t$, let $\{s_0^{\ell},\ldots,s_{k-1}^{\ell}\}_{\ell\in[0:m)}$ denote $km$ random secrets on $\mathbb{F}_q$, and $\{z_0^{\ell},\ldots,z_{t-1}^{\ell}\}_{\ell\in[0:m)}$ denote $tm$ random noises chosen independently and uniformly from $\mathbb{F}_q$. Additionally, these noises are selected independently of the $km$ secrets. Let $\alpha_0,\ldots,\alpha_{n-1}$ be $n$ pairwise distinct elements from $\mathbb{F}_q$.
Define a function as
\begin{IEEEeqnarray}{c}
f^{\ell}(x)=s^{\ell}_{0}\cdot u_0^{\ell}(x)+\ldots+s_{k-1}^{\ell}\cdot u_{k-1}^{\ell}(x)+{z}_{0}^{\ell}\cdot v_0^{\ell}(x)+\ldots+{z}^{\ell}_{t-1}\cdot v^{\ell}_{t-1}(x),\quad\forall\,\ell\in[0:m),\notag
\end{IEEEeqnarray}
where $u_0^{\ell}(x),\ldots,u^{\ell}_{k-1}(x),v^{\ell}_0(x),\ldots,v^{\ell}_{t-1}(x)\in\mathbb{F}_q[x]$ are arbitrary deterministic functions of $x$.
If the matrix
\begin{IEEEeqnarray}{c}\notag
\mathbf{V}^{\ell}=
\left[
  \begin{array}{cccc}
    v^{\ell}_0(\alpha_{i_0}) & v^{\ell}_1(\alpha_{i_0}) & \cdots & v^{\ell}_{t-1}(\alpha_{i_0}) \\
    v^{\ell}_0(\alpha_{i_1}) & v^{\ell}_1(\alpha_{i_1}) & \cdots & v^{\ell}_{t-1}(\alpha_{i_1}) \\
    \vdots & \vdots & \vdots & \vdots \\
    v^{\ell}_0(\alpha_{i_{t-1}}) & v^{\ell}_1(\alpha_{i_{t-1}}) & \cdots & v^{\ell}_{t-1}(\alpha_{i_{t-1}}) \\
  \end{array}
\right]_{t\times t}
\end{IEEEeqnarray}
is non-singular over $\mathbb{F}_q$ for all $\ell\in[0:m)$, then the $tm$ values $\{f^{\ell}(\alpha_{i_0}),\ldots,f^{\ell}(\alpha_{i_{t-1}})\}_{\ell\in[0:m)}$  reveal nothing about the $km$ secrets $\{{s}^{\ell}_{0},\ldots,s^{\ell}_{k-1}\}_{\ell\in[0:m)}$, for any subset $\{i_0,\ldots,i_{t-1}\}\subseteq[0:n)$ of size $t$, i.e.,
\begin{IEEEeqnarray}{c}\notag
I(\{{s}^{\ell}_{0},\ldots,{s}^{\ell}_{k-1}\}_{\ell\in[0:m)};\{f^{\ell}(\alpha_{i_0}),\ldots,f^{\ell}(\alpha_{i_{t-1}})\}_{\ell\in[0:m)})=0.
\end{IEEEeqnarray}
\end{Lemma}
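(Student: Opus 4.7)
The plan is to reduce the claim to a block-by-block one-time-pad argument, exploiting the invertibility of $\mathbf{V}^\ell$. First, for each fixed $\ell\in[0:m)$, I would collect the $t$ evaluations $f^\ell(\alpha_{i_0}),\ldots,f^\ell(\alpha_{i_{t-1}})$ into a column vector $\mathbf{f}^\ell$ and write it in matrix form as $\mathbf{f}^\ell=\mathbf{U}^\ell\mathbf{s}^\ell+\mathbf{V}^\ell\mathbf{z}^\ell$, where $\mathbf{U}^\ell$ is the $t\times k$ matrix whose $(p,j)$-entry is $u_j^\ell(\alpha_{i_p})$, $\mathbf{s}^\ell=(s_0^\ell,\ldots,s_{k-1}^\ell)^\top$, and $\mathbf{z}^\ell=(z_0^\ell,\ldots,z_{t-1}^\ell)^\top$. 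By hypothesis $\mathbf{V}^\ell$ is a nonsingular $t\times t$ matrix over $\mathbb{F}_q$.

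Next, since $\mathbf{z}^\ell$ is uniformly distributed on $\mathbb{F}_q^t$ and $\mathbf{V}^\ell$ is an $\mathbb{F}_q$-linear bijection, the random vector $\mathbf{V}^\ell\mathbf{z}^\ell$ is also uniform on $\mathbb{F}_q^t$. Moreover, because the noise blocks $\mathbf{z}^0,\ldots,\mathbf{z}^{m-1}$ are mutually independent and independent of all secrets $\mathbf{s}^0,\ldots,\mathbf{s}^{m-1}$, the stacked vector $(\mathbf{V}^0\mathbf{z}^0,\ldots,\mathbf{V}^{m-1}\mathbf{z}^{m-1})$ is uniform on $\mathbb{F}_q^{tm}$ and independent of $(\mathbf{s}^0,\ldots,\mathbf{s}^{m-1})$. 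This is the classical one-time-pad argument: a uniform independent mask added to any (possibly deterministic) function of the secrets yields a uniform output that leaks nothing about the secrets.

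To finish, I would compute the mutual information through the standard two-line chain. Conditioning on the secrets, $\mathbf{f}^\ell-\mathbf{U}^\ell\mathbf{s}^\ell=\mathbf{V}^\ell\mathbf{z}^\ell$, so
\begin{IEEEeqnarray*}{rCl}
H\!\left(\mathbf{f}^0,\ldots,\mathbf{f}^{m-1}\mid\mathbf{s}^0,\ldots,\mathbf{s}^{m-1}\right)
&=&\sum_{\ell=0}^{m-1} H(\mathbf{V}^\ell\mathbf{z}^\ell)=tm,
\end{IEEEeqnarray*}
where the entropy is measured in base $q$. On the other hand, the unconditional entropy satisfies $H(\mathbf{f}^0,\ldots,\mathbf{f}^{m-1})\le tm$, and the same uniformity of $\mathbf{V}^\ell\mathbf{z}^\ell$ forces equality. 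Hence
\begin{IEEEeqnarray*}{c}
I\!\left(\mathbf{s}^0,\ldots,\mathbf{s}^{m-1};\mathbf{f}^0,\ldots,\mathbf{f}^{m-1}\right)=0,
\end{IEEEeqnarray*}
which is exactly the stated secrecy. There is no real obstacle here beyond setting up the block matrix equation cleanly; the substantive content is entirely contained in the invertibility of each $\mathbf{V}^\ell$ together with the independence and uniformity of the noise blocks, and the multi-block case is handled by the product structure of independent masks so that the argument extends verbatim from the single-block ($m=1$) version.
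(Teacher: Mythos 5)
Your proof is correct. Note that the paper itself does not prove this lemma at all—it is stated as a known result and attributed to the cited references on (ramp) secret sharing—so there is no in-paper argument to compare against. Your reduction to the block form $\mathbf{f}^{\ell}=\mathbf{U}^{\ell}\mathbf{s}^{\ell}+\mathbf{V}^{\ell}\mathbf{z}^{\ell}$, followed by the observation that the invertibility of $\mathbf{V}^{\ell}$ makes $\mathbf{V}^{\ell}\mathbf{z}^{\ell}$ uniform on $\mathbb{F}_q^{t}$ and that the $m$ masks are jointly uniform on $\mathbb{F}_q^{tm}$ and independent of the secrets, is exactly the standard one-time-pad argument one would expect here, and the concluding entropy computation $I(\mathbf{s};\mathbf{f})=H(\mathbf{f})-H(\mathbf{f}\,|\,\mathbf{s})\leq tm-tm=0$ is sound (indeed, the crude bound $H(\mathbf{f})\leq tm$ already suffices; you do not need the unconditional uniformity you mention). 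One point worth making explicit if you write this up: the secrets may have an arbitrary joint distribution and need not be independent of one another—your argument only uses independence of the noise from the secrets, which is precisely what the lemma's hypotheses provide.
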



The proofs for the secrecy and privacy requirements are similar. In the construction of the adaptive PIR scheme, the storage system \eqref{storage:data} and query phase \eqref{APIR:query} are designed using the encoding parameters $\{\alpha_n\}_{n\in[0:N)}$ along with the encoding functions $\{b_{(i)_{\lambda},k}(x)\}_{i\in[0:P),k\in[0:K+X)}$ and $\{v_{(i)_{\lambda},k,t}^{(\widetilde{\mathcal{R}}_j^h)}(x)\}_{h\in[0:\lambda),j\in[0:\Gamma^h),i\in\mathcal{R}_j^h,k\in[0:K)}$ from the feasible adaptive PIR coding framework.
Since the \emph{feasible PIR coding framework} satisfies the condition $\widetilde{\mathrm{F}}$1, both the storage and query designs of the adaptive PIR scheme fulfill the prerequisites of Lemma \ref{secrecy:lemma}, such that a specific number of colluding servers learn nothing about the original confidential data. 
Next, we provide formal proofs.   

\subsubsection*{Secrecy}
For any given $i\in[0:P)$ and any subset $\mathcal{X}=\{n_0,n_1,\ldots,n_{X-1}\}\subseteq[0:N)$ of size $X$, the following matrix $\mathbf{B}_{i,\mathcal{X}}$ of dimensions $X\times X$ is non-singular over $\mathbb{F}_q$ by $\widetilde{\mathrm{F}}$1.
\begin{IEEEeqnarray}{c}\notag
\mathbf{B}_{i,\mathcal{X}}=
\left[
\begin{array}{@{}cccc@{}}
  b_{(i)_{\lambda},K}(\alpha_{n_0}) & b_{(i)_{\lambda},K+1}(\alpha_{n_0}) & \cdots & b_{(i)_{\lambda},K+X-1}(\alpha_{n_0}) \\
  b_{(i)_{\lambda},K}(\alpha_{n_1}) & b_{(i)_{\lambda},K+1}(\alpha_{n_1}) & \cdots & b_{(i)_{\lambda},K+X-1}(\alpha_{n_1}) \\
  \vdots & \vdots & \vdots & \vdots \\
  b_{(i)_{\lambda},K}(\alpha_{n_{X-1}}) & b_{(i)_{\lambda},K+1}(\alpha_{n_{X-1}}) & \cdots & b_{(i)_{\lambda},K+X-1}(\alpha_{n_{X-1}}) \\
\end{array}
\right].
\end{IEEEeqnarray}
Then, by \eqref{Adaptive:PIR:storage:function} and Lemma \ref{secrecy:lemma}, 
\begin{IEEEeqnarray}{c}
I(\{w_{i,0}^{(m)},\ldots,w_{i,K-1}^{(m)}\}_{m\in[0:M),i\in[0:P)};\{f_i^{(m)}(\alpha_{n_0}),\ldots,f_i^{(m)}(\alpha_{n_{X-1}})\}_{m\in[0:M),i\in[0:P)})=0. \notag
\end{IEEEeqnarray}
Furthermore, due to \eqref{file:symbols} and \eqref{storage:data}, we have
\begin{IEEEeqnarray}{rCl}
&&I(\mathbf{W}^{(0)},\ldots,\mathbf{W}^{(M-1)};\mathcal{Y}_{n_0},\ldots,\mathcal{Y}_{n_{X-1}}) \notag\\
&=&I(\{w_{i,0}^{(m)},\ldots,w_{i,K-1}^{(m)}\}_{m\in[0:M),i\in[0:P)};\{f_i^{(m)}(\alpha_{n_0}),\ldots,f_i^{(m)}(\alpha_{n_{X-1}})\}_{m\in[0:M),i\in[0:P)})\notag\\
&=&0. \notag
\end{IEEEeqnarray}
This shows that the storage system satisfies the secrecy requirement in \eqref{security}. 

\subsubsection*{Privacy}
For any given $h\in[0:\lambda),j\in[0:\Gamma^h),i\in\mathcal{R}_j^h,k\in[0:K)$, and any subset $\mathcal{T}=\{n_0,n_1,\ldots,n_{T-1}\}\subseteq[0:N)$ of size $T$, the following matrix $\mathbf{V}_{(i)_{\lambda},k,\mathcal{T}}^{(\widetilde{\mathcal{R}}^{h}_j)}$ of dimensions $T\times T$ is non-singular over $\mathbb{F}_q$ by $\widetilde{\mathrm{F}}$1.
\begin{IEEEeqnarray}{c}\notag
\mathbf{V}_{(i)_{\lambda},k,\mathcal{T}}^{(\widetilde{\mathcal{R}}^{h}_j)}=
\left[
\begin{array}{cccc}
v_{(i)_{\lambda},k,0}^{(\widetilde{\mathcal{R}}^{h}_j)}(\alpha_{n_0}) & v_{(i)_{\lambda},k,1}^{(\widetilde{\mathcal{R}}^{h}_j)}(\alpha_{n_0}) & \cdots & v_{(i)_{\lambda},k,T-1}^{(\widetilde{\mathcal{R}}^{h}_j)}(\alpha_{n_0}) \\
v_{(i)_{\lambda},k,0}^{(\widetilde{\mathcal{R}}^{h}_j)}(\alpha_{n_1}) & v_{(i)_{\lambda},k,1}^{(\widetilde{\mathcal{R}}^{h}_j)}(\alpha_{n_1}) & \cdots & v_{(i)_{\lambda},k,T-1}^{(\widetilde{\mathcal{R}}^{h}_j)}(\alpha_{n_1}) \\
\vdots & \vdots & \vdots & \vdots \\
v_{(i)_{\lambda},k,0}^{(\widetilde{\mathcal{R}}^{h}_j)}(\alpha_{n_{T-1}}) & v_{(i)_{\lambda},k,1}^{(\widetilde{\mathcal{R}}^{h}_j)}(\alpha_{n_{T-1}}) & \cdots & v_{(i)_{\lambda},k,T-1}^{(\widetilde{\mathcal{R}}^{h}_j)}(\alpha_{n_{T-1}}) \\
\end{array}
\right].
\end{IEEEeqnarray}
Then, by \eqref{AdaptivePIR:query:function}--\eqref{APIR:query} and Lemma \ref{secrecy:lemma},
\begin{IEEEeqnarray}{rCl}
&&I(\theta;\mathcal{Q}_{n_0}^{(\theta)},\ldots.\mathcal{Q}_{n_{T-1}}^{(\theta)}) \notag \\
&=&I\big(\theta;\big\{q^{(m,\mathcal{R}_{j}^{h})}_{i,k}(\alpha_{n_0}),\ldots,q^{(m,\mathcal{R}_{j}^{h})}_{i,k}(\alpha_{n_{T-1}})\big\}_{m\in[0:M),h\in[0:\lambda),j\in[0:\Gamma^h),i\in\mathcal{R}_j^h,k\in[0:K)}\big) \quad \notag \\
&=&0.\label{privacy:pppp}
\end{IEEEeqnarray}

In the proposed adaptive PIR scheme, all possible information available to the colluding servers $\mathcal{T}$ contains the stored data $\mathcal{Y}_{\mathcal{T}}$ \eqref{storage:data}, the received query $\mathcal{Q}_{\mathcal{T}}^{(\theta)}$  \eqref{APIR:query}, and the generated responses $\mathcal{A}_{\mathcal{T}}^{(\theta)}$ \eqref{answer:APIR}. Particularly, we have
\begin{IEEEeqnarray*}{rCl}
0&\leq&I(\theta;\mathcal{Q}_{\mathcal{T}}^{(\theta)},\mathcal{A}_{\mathcal{T}}^{(\theta)},\mathcal{Y}_{\mathcal{T}})\\
&=&I(\theta;\mathcal{Q}_{\mathcal{T}}^{(\theta)})+I(\theta;\mathcal{Y}_{\mathcal{T}}|\mathcal{Q}_{\mathcal{T}}^{(\theta)})+
I(\theta;\mathcal{A}_{\mathcal{T}}^{(\theta)}|\mathcal{Q}_{\mathcal{T}}^{(\theta)},\mathcal{Y}_{\mathcal{T}})\\
&\overset{(a)}{=}&I(\theta;\mathcal{Q}_{\mathcal{T}}^{(\theta)})\\
&\overset{(b)}{=}&0,
\end{IEEEeqnarray*}
where $(a)$ follows because the encoded data $\mathcal{Y}_{\mathcal{T}}$ is generated independently of the query $\mathcal{Q}_{\mathcal{T}}^{(\theta)}$ and the index $\theta$ by \eqref{Adaptive:PIR:storage:function}-\eqref{storage:data} and the response $\mathcal{A}_{\mathcal{T}}^{(\theta)}$ is a deterministic function of $\mathcal{Q}_{\mathcal{T}}^{(\theta)}$ and $\mathcal{Y}_{\mathcal{T}}$ by \eqref{APIR:answers}--\eqref{answer:APIR}, such that $I(\theta;\mathcal{Y}_{\mathcal{T}}|\mathcal{Q}_{\mathcal{T}}^{(\theta)})=0$ and $I(\theta;\mathcal{A}_{\mathcal{T}}^{(\theta)}|\mathcal{Q}_{\mathcal{T}}^{(\theta)},\mathcal{Y}_{\mathcal{T}})=0$; $(b)$ is due to \eqref{privacy:pppp}.

This means that all available information to any $T$ colluding servers reveals nothing about the index $\theta$ and thus the privacy requirement in \eqref{Infor:priva cons} is satisfied.

\section{Concluding Remarks}\label{conclusion}
In this paper, we investigate the $T$-colluding adaptive PIR problem with $X$-secure $K$-coded distributed storage. By introducing the concept of the \emph{feasible PIR coding framework}, we proposed a general coding method for designing adaptive PIR schemes that can be capable of tolerating the presence of a varying number of stragglers.
We showed that any \emph{feasible PIR coding framework} over a finite field $\mathbb{F}_q$ of size $q$ can be utilized to construct an adaptive PIR scheme that achieves a retrieval rate of $1-\frac{K+X+T-1}{N-S}$, conjectured to be asymptotically optimal by Jia and Jafar, simultaneously for all numbers of stragglers $0\leq S\leq N-(K+X+T)$ over the same finite field. This emphasizes the importance of constructing a feasible PIR framework with a small finite field size to design efficient adaptive PIR schemes.
We provided an implementation of the \emph{feasible PIR coding framework} and demonstrated that the adaptive PIR scheme can operate over any finite field $\mathbb{F}_q$ with size $q\geq N+\max\{K, N-(K+X+T-1)\}$.

An interesting direction for future research is to determine the optimal finite field size for the \emph{feasible PIR coding framework}. 
Since the storage system must satisfy the MDS property, it is natural to trace the problem of determining the optimal finite field size back to the MDS conjecture \cite[Conjecture 11.16]{roth2006introduction}, which postulates that the optimal finite field size required for linear MDS codes is close to the code length $N$. Compared to the finite field size suggested by the MDS conjecture, the \emph{feasible PIR coding framework} we implement requires an increase in the finite field size by $\max\{K, N-(K+X+T-1)\}$.
This increase can be attributed to two main reasons. The first one is that we use Lagrange interpolating polynomials to construct a secure $(N,K+X)$-coded distributed storage system, which requires $K+X$ distinct interpolation points and $N$ distinct evaluation points. In particular, to guarantee storage security, the $K$ interpolation points corresponding to the confidential data must be distinct from the $N$ evaluation points, resulting in the need for a finite field of size $N+K$. The second one is that, to enhance the retrieval efficiency, 
each file is encoded using $N-(K+X+T-1)$ distinct MDS codes with different coding parameters, which leads to a requirement for a finite field of size $N+(N-(K+X+T-1))$. The key to reducing the finite field size toward the lower bound suggested by the MDS conjecture lies in designing a novel type of private query tailored for secure storage systems in which all files are encoded using $(N,K+X)$ MDS codes over a finite field of size approximately $N$.

Additionally, to provide the adaptive guarantee, our scheme requires the file size to be $K(N-K-X-T+1)\cdot\mathrm{lcm}(1,2,\ldots,N-K-X-T+1)$. An immediate challenge for future work is to reduce the file size to a linear level for the adaptive PIR problem. 
Furthermore, the concept of designing adaptive PIR schemes using the \emph{feasible PIR coding framework} shows significant potential for addressing other variants of the adaptive PIR problem, including those involving symmetric privacy, graph-based storage, and federated submodel learning. These avenues deserve further exploration. 
In particular, determining the (asymptotically) optimal retrieval rate of the $T$-colluding MDS-coded PIR problem, with or without $X$-secure storage, remains one of the most significant open problems in the PIR literature. Given its theoretical importance and practical implications, this problem is highly promising and warrants continued investigation.

\begin{appendix}\label{proof:achevable}

In this appendix, we provide the proofs of Lemmas \ref{theorem:tield size} and \ref{lemma:query:array}, as well as an illustrative example of the \emph{feasible PIR coding framework} based on Lagrange codes.

\subsection{Proof of Lemma \ref{theorem:tield size}}\label{Proof:finite:field:size}

We first demonstrate how to select a group of elements $\{\beta_{i,k},\alpha_n:i\!\in\![0:\lambda),k\!\in\![0:K+X),n\!\in\![0:N)\}$ that satisfies the constraints P0--P3 from a finite field $\mathbb{F}_q$ of size $q\geq N+\max\{K, \lambda\}$. 



Let $(\alpha_{0},\alpha_{1},\ldots,\alpha_{N-1})$ be $N$ pairwise distinct elements from $\mathbb{F}_q$, then set
\begin{IEEEeqnarray}{c}\notag
    (\beta_{i,K},\beta_{i,K+1},\ldots,\beta_{i,K+X-1})=(\alpha_{0},\alpha_{1},\ldots,\alpha_{X-1}),\quad\forall\,i\in[0:\lambda).
\end{IEEEeqnarray}
Depending on whether $K$ is greater than $\lambda$ or not, the remaining elements $\{\beta_{i,k}:i\in[0:\lambda),k\in[0:K)\}$ are assigned values through the following two steps.
\begin{itemize}
  \item If $K>\lambda$, let $(\beta_{0,0},\beta_{0,1},\ldots,\beta_{0,K-1})$ be $K$ distinct elements from $\mathbb{F}_q\backslash\{\alpha_{n}:n\in[0:N)\}$ such that 
  \begin{IEEEeqnarray}{c}\notag
  \{\beta_{0,k}:k\in[0:K)\}\cap\{\alpha_{n}:n\in[0:N)\}=\emptyset.
  \end{IEEEeqnarray}
  Then set
  \begin{IEEEeqnarray}{c}\notag
      (\beta_{i,0},\beta_{i,1},\ldots,\beta_{i,K-1})=(\beta_{0,(0+i)_{K}},\beta_{0,(1+i)_{K}},\ldots,\beta_{0,(K-1+i)_{K}}),\quad\forall\,i\in[1:\lambda).
  \end{IEEEeqnarray}
  \item If $K\leq\lambda$, let $(\beta_{0,0},\beta_{1,0},\ldots,\beta_{\lambda-1,0})$ be $\lambda$ distinct elements from $\mathbb{F}_q\backslash\{\alpha_{n}:n\in[0:N)\}$ such that 
  \begin{IEEEeqnarray}{c}\notag
  \{\beta_{i,0}:i\in[0:\lambda)\}\cap\{\alpha_{n}:n\in[0:N)\}=\emptyset.
  \end{IEEEeqnarray}
  Then set
  \begin{IEEEeqnarray}{c}\notag
      (\beta_{0,k},\beta_{1,k},\ldots,\beta_{\lambda-1,k})=(\beta_{(0+k)_{\lambda},0},\beta_{(1+k)_{\lambda},0},\ldots,\beta_{(\lambda-1+k)_{\lambda},0}),\quad\forall\,k\in[1:K).
  \end{IEEEeqnarray}
\end{itemize}

It is easy to prove that the group of selected elements $\{\beta_{i,k},\alpha_n:i\in[0:\lambda),k\in[0:K+X),n\in[0:N)\}$ satisfies the constraints P0--P3 with $|\{\beta_{i,k},\alpha_n:i\in[0:\lambda),k\in[0:K+X),n\in[0:N)\}|=N+\max\{K,\lambda\}$.

Next, to complete the proof of this lemma, it is sufficient to show that the size of the finite field satisfying the constraints P0--P3 must match the lower bound $q\geq N+\max\{K, \lambda\}$.

Since both $\alpha_0,\ldots,\alpha_{N-1}$ and $\beta_{0,0},\ldots,\beta_{0,K-1}$ are pairwise distinct with $\{\alpha_0,\ldots,\alpha_{N-1}\}\cap\{\beta_{0,0},\ldots,\beta_{0,K-1}\}=\emptyset$ by P0 and P2-P3, the group of elements $\{\beta_{i,k},\alpha_n:i\!\in\![0\!:\!\lambda),k\!\in\![0\!:\!K+X),n\!\in\![0\!:\!N)\}$ must satisfy $|\{\beta_{i,k},\alpha_n:i\in[0:\lambda),k\in[0:K+X),n\in[0:N)\}|\geq N+K$. Additionally, since $\beta_{0,0},\beta_{1,0},\ldots,\beta_{\lambda-1,0}$ are pairwise distinct elements with $\{\alpha_0,\ldots,\alpha_{N-1}\}\cap\{\beta_{0,0},\ldots,\beta_{\lambda-1,0}\}=\emptyset$ by P1 and P3, this group of elements must also satisfy $|\{\beta_{i,k},\alpha_n:i\in[0:\lambda),k\in[0:K+X),n\in[0:N)\}|\geq N+\lambda$. Consequently, we must have $|\{\beta_{i,k},\alpha_n:i\in[0:\lambda),k\in[0:K+X),n\in[0:N)\}|\geq N+\max\{K,\lambda\}$ under the constraints P0--P3.

\subsection{Illustrated Example for Feasible PIR Coding Framework Based on Lagrange Codes}\label{example:lagrange:codes}

In this subsection, we present a detailed example to illustrate the feasibility of the proposed PIR coding framework based on Lagrange codes. For consistency, we revisit the PIR problem described in Example \ref{Flexible:example} and implement the corresponding PIR coding framework using Lagrange codes. 

Before that, we provide a lemma that will be useful for proving the feasibility of the PIR coding framework.
\begin{Lemma}[Lemma 2 in \cite{yu2019lagrange}]\label{g-cauchy matrix}
Let $\alpha_0,\ldots,\alpha_{t-1}$ and $\beta_0,\ldots,\beta_{t-1}$ be the elements from $\mathbb{F}_q$ such that $\alpha_i\neq\alpha_j,\beta_i\neq\beta_j$ for any $i,j\in[0:t)$ with $i\neq j$, and $u_0,\ldots,u_{t-1},c_0,\ldots,c_{t-1}$ be $2t$ nonzero elements from $\mathbb{F}_q$. Denote by $v_i(x)$ a polynomial of degree $t-1$, given by
\begin{IEEEeqnarray}{c}\notag
v_i(x)=\prod\limits_{j\in[0:t)\backslash\{i\}}\frac{x-\beta_j}{\beta_i-\beta_j},\quad\forall\,i\in[0:t).
\end{IEEEeqnarray}
Then the following matrix $\mathbf{V}$ is invertible over $\mathbb{F}_q$.
\begin{IEEEeqnarray}{c}\notag
\mathbf{V}=\mathrm{diag}\left(u_0,u_1,\ldots,u_{t-1}\right)\cdot
\left[
  \begin{array}{cccc}
    v_{0}(\alpha_0) & v_{1}(\alpha_0)& \ldots & v_{t-1}(\alpha_0)  \\
    v_{0}(\alpha_1) & v_{1}(\alpha_1)& \ldots & v_{t-1}(\alpha_1)  \\
    \vdots & \vdots & \vdots & \vdots \\
    v_{0}(\alpha_{t-1}) & v_{1}(\alpha_{t-1})& \ldots & v_{t-1}(\alpha_{t-1})  \\
\end{array}
\right]\cdot\mathrm{diag}\left(c_0,c_1,\ldots,c_{t-1}\right).
\end{IEEEeqnarray}
\end{Lemma}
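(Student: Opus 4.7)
The plan is to reduce the invertibility of $\mathbf{V}$ to the invertibility of the middle factor, and then to recognize this middle factor as a product of two Vandermonde matrices, one of which is inverted. Since $\mathbf{V}$ is written as
\[
\mathbf{V}=\mathrm{diag}(u_0,\ldots,u_{t-1})\cdot \mathbf{M}\cdot \mathrm{diag}(c_0,\ldots,c_{t-1}),
\]
with $\mathbf{M}_{i,j}=v_{j}(\alpha_i)$, and since the flanking diagonal matrices are invertible (their diagonal entries $u_i, c_i$ are all nonzero by hypothesis), it suffices to show that $\det(\mathbf{M})\neq 0$ in $\mathbb{F}_q$.

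The key observation will be a clean factorization of $\mathbf{M}$. Because each $v_j(x)$ is a polynomial of degree at most $t-1$, write $v_j(x)=\sum_{k=0}^{t-1}a_{j,k}\,x^k$ and set $\mathbf{A}=[a_{j,k}]$. Then $\mathbf{M}=\mathbf{V}_{\alpha}\,\mathbf{A}^{\top}$, where $\mathbf{V}_{\alpha}$ is the $t\times t$ Vandermonde matrix with $(\mathbf{V}_{\alpha})_{i,k}=\alpha_i^k$. On the other hand, the defining interpolation property $v_j(\beta_i)=\delta_{i,j}$ of the Lagrange basis yields $\mathbf{V}_{\beta}\,\mathbf{A}^{\top}=\mathbf{I}$, where $\mathbf{V}_{\beta}$ is the Vandermonde matrix with $(\mathbf{V}_{\beta})_{i,k}=\beta_i^k$. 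Hence $\mathbf{A}^{\top}=\mathbf{V}_{\beta}^{-1}$, and therefore
\[
\mathbf{M}=\mathbf{V}_{\alpha}\,\mathbf{V}_{\beta}^{-1}.
\]

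From this factorization, $\det(\mathbf{M})=\det(\mathbf{V}_{\alpha})/\det(\mathbf{V}_{\beta})$, and because both node sets $\{\alpha_i\}$ and $\{\beta_i\}$ are pairwise distinct, the standard Vandermonde determinant formulas give $\det(\mathbf{V}_{\alpha})=\prod_{i<j}(\alpha_j-\alpha_i)\neq 0$ and $\det(\mathbf{V}_{\beta})=\prod_{i<j}(\beta_j-\beta_i)\neq 0$. Consequently $\det(\mathbf{M})\neq 0$, and combining with the two nonzero diagonal contributions,
\[
\det(\mathbf{V})=\Big(\prod_{i=0}^{t-1}u_i\Big)\,\det(\mathbf{M})\,\Big(\prod_{i=0}^{t-1}c_i\Big)\neq 0,
\]
which shows that $\mathbf{V}$ is invertible over $\mathbb{F}_q$.

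The argument is almost entirely mechanical once the factorization $\mathbf{M}=\mathbf{V}_{\alpha}\mathbf{V}_{\beta}^{-1}$ is in hand; the only nontrivial step is recognizing that the Lagrange-evaluation matrix admits this representation, which is immediate from $v_j(\beta_i)=\delta_{i,j}$. I would expect no genuine obstacle in the proof. An alternative route, equally short, would be to argue directly: if $\mathbf{M}\mathbf{x}=0$ for some $\mathbf{x}=(x_0,\ldots,x_{t-1})^{\top}$, then $p(x):=\sum_{j}x_j v_j(x)$ vanishes at the $t$ distinct points $\alpha_0,\ldots,\alpha_{t-1}$; since $\deg p\leq t-1$, this forces $p\equiv 0$, and evaluating at $\beta_i$ gives $x_i=0$ for each $i$.
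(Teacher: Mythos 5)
Your proposal is correct. Note, however, that the paper does not actually prove this lemma: it imports it verbatim as Lemma~2 of \cite{yu2019lagrange}, so there is no in-paper argument to compare against. Your reduction is sound: the flanking diagonal factors are invertible because the $u_i$ and $c_i$ are nonzero, so invertibility of $\mathbf{V}$ is equivalent to invertibility of $\mathbf{M}=[v_j(\alpha_i)]_{i,j}$; the identity $\mathbf{V}_\beta\mathbf{A}^{\top}=\mathbf{I}$ (from $v_j(\beta_i)=\delta_{i,j}$) together with $\mathbf{M}=\mathbf{V}_\alpha\mathbf{A}^{\top}$ gives $\mathbf{M}=\mathbf{V}_\alpha\mathbf{V}_\beta^{-1}$, and both Vandermonde determinants are nonzero since each node set is pairwise distinct. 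Your alternative kernel argument (a degree-$\le t-1$ polynomial vanishing at $t$ distinct points is identically zero, then evaluate at the $\beta_i$) is equally valid and arguably the cleanest. One point worth appreciating in your argument: the lemma's name and the surrounding literature suggest viewing $\mathbf{M}$ as a row- and column-scaled Cauchy matrix $\bigl[\tfrac{1}{\alpha_i-\beta_j}\bigr]$, but that rewriting requires $\{\alpha_i\}\cap\{\beta_j\}=\emptyset$, which is \emph{not} assumed in the statement (and indeed fails for the parameter choices in \eqref{example:feasible:parameter:0}--\eqref{example:feasible:parameter:1}, where some $\beta_{i,k}$ coincide with $\alpha_n$'s). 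Your two arguments avoid this assumption entirely, so they prove the lemma in the generality actually needed.
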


Consider the same PIR problem as in Example \ref{Flexible:example}, where the system parameters satisfy $N=8,K=X=T=2,M=3$, and $\lambda=3$. Similar to \eqref{example:files}, the $3$ files $\widetilde{\mathbf{W}}^{(0)},\widetilde{\mathbf{W}}^{(1)},\widetilde{\mathbf{W}}^{(2)}$ are given by
\begin{IEEEeqnarray}{c}\label{example:feasible:PIR}
\widetilde{\mathbf{W}}^{(0)} =\left[
\begin{array}{@{\;}cc@{\;}}
\widetilde{w}^{(0)}_{0,0} & \widetilde{w}^{(0)}_{0,1}  \\
\widetilde{w}^{(0)}_{1,0} & \widetilde{w}^{(0)}_{1,1}  \\
\widetilde{w}^{(0)}_{2,0} & \widetilde{w}^{(0)}_{2,1}  \\
\end{array}
\right], \quad
\widetilde{\mathbf{W}}^{(1)} =\left[
\begin{array}{@{\;}cc@{\;}}
\widetilde{w}^{(1)}_{0,0} & \widetilde{w}^{(1)}_{0,1}  \\
\widetilde{w}^{(1)}_{1,0} & \widetilde{w}^{(1)}_{1,1}  \\
\widetilde{w}^{(1)}_{2,0} & \widetilde{w}^{(1)}_{2,1}  \\
\end{array}
\right], \quad
\widetilde{\mathbf{W}}^{(2)} =\left[
\begin{array}{@{\;}cc@{\;}}
\widetilde{w}^{(2)}_{0,0} & \widetilde{w}^{(2)}_{0,1}  \\
\widetilde{w}^{(2)}_{1,0} & \widetilde{w}^{(2)}_{1,1}  \\
\widetilde{w}^{(2)}_{2,0} & \widetilde{w}^{(2)}_{2,1}  \\
\end{array}
\right].
\end{IEEEeqnarray}

Let $\sigma_0,\sigma_1,\sigma_2,\sigma_3,\sigma_4,\sigma_5,\sigma_6,\sigma_7,\sigma_8,\sigma_9,\sigma_{10}$ represent $11$ pairwise distinct elements from the finite field $\mathbb{F}_q$, then the group of coding parameters $\{\alpha_n,\beta_{i,k}:n\in[0:8),i\in[0:3),k\in[0:4)\}$ is set as
\begin{IEEEeqnarray}{llll}
\alpha_0=\sigma_0, \quad\quad  & \alpha_1=\sigma_1,\quad\quad & \alpha_2=\sigma_2,\quad\quad  & \alpha_3=\sigma_3, \label{example:feasible:parameter:0} \\
\alpha_4=\sigma_4, & \alpha_5=\sigma_5, & \alpha_6=\sigma_6, & \alpha_7=\sigma_7, \label{example:feasible:parameter:2} \\
\beta_{0,0}=\sigma_8, & \beta_{0,1}=\sigma_9, & \beta_{0,2}=\sigma_{0}, & \beta_{0,3}=\sigma_1, \label{example:feasible:parameter:3} \\
\beta_{1,0}=\sigma_9, & \beta_{1,1}=\sigma_{10}, & \beta_{1,2}=\sigma_0, & \beta_{1,3}=\sigma_1, \label{example:feasible:parameter:4}  \\
\beta_{2,0}=\sigma_{10}, & \beta_{2,1}=\sigma_8, & \beta_{2,2}=\sigma_0, & \beta_{2,3}=\sigma_1. \label{example:feasible:parameter:1}
\end{IEEEeqnarray}

Since the elements $\beta_{i,0},\beta_{i,1},\beta_{i,2},\beta_{i,3}$ are distinct from each other, they can be used as interpolation points to construct the storage function $\tilde{f}_i^{(m)}(x)$ as a Lagrange polynomial of degree $3$, for any given $m\in[0:3)$ and $i\in[0:3)$, given by
\begin{IEEEeqnarray}{rCl}
\tilde{f}_i^{(m)}(x)&=&
\widetilde{w}_{i,0}^{(m)}\cdot\underbrace{\frac{(x-\beta_{i,1})(x-\beta_{i,2})(x-\beta_{i,3})}{(\beta_{i,0}-\beta_{i,1})(\beta_{i,0}-\beta_{i,2})(\beta_{i,0}-\beta_{i,3})}}_{=b_{i,0}(x)}
+\widetilde{w}_{i,1}^{(m)}\cdot\underbrace{\frac{(x-\beta_{i,0})(x-\beta_{i,2})(x-\beta_{i,3})}{(\beta_{i,1}-\beta_{i,0})(\beta_{i,1}-\beta_{i,2})(\beta_{i,1}-\beta_{i,3})}}_{=b_{i,1}(x)} \notag \\
&&+\widetilde{z}_{i,2}^{(m)}\cdot\underbrace{\frac{(x-\beta_{i,0})(x-\beta_{i,1})(x-\beta_{i,3})}{(\beta_{i,2}-\beta_{i,0})(\beta_{i,2}-\beta_{i,1})(\beta_{i,2}-\beta_{i,3})}}_{=b_{i,2}(x)}
+\widetilde{z}_{i,3}^{(m)}\cdot\underbrace{\frac{(x-\beta_{i,0})(x-\beta_{i,1})(x-\beta_{i,2})}{(\beta_{i,3}-\beta_{i,0})(\beta_{i,3}-\beta_{i,1})(\beta_{i,3}-\beta_{i,2})}}_{=b_{i,3}(x)},\IEEEeqnarraynumspace \label{example:feasible:storage:poly}
\end{IEEEeqnarray}
which satisfies 
\begin{IEEEeqnarray}{c}
\tilde{f}_i^{(m)}(\beta_{i,0})=\widetilde{w}_{i,0}^{(m)}, \quad
\tilde{f}_i^{(m)}(\beta_{i,1})=\widetilde{w}_{i,1}^{(m)}, \quad
\tilde{f}_i^{(m)}(\beta_{i,2})=\widetilde{z}_{i,2}^{(m)}, \quad
\tilde{f}_i^{(m)}(\beta_{i,3})=\widetilde{z}_{i,3}^{(m)}. \label{example:feasible:storage:poly:2}
\end{IEEEeqnarray}
Similarly, since the elements $\{\alpha_0,\alpha_1,\beta_{j,k}:j\in\mathcal{R}\}$ are pairwise distinct, the query function $\widetilde{q}_{i,k}^{(m,\mathcal{R})}(x)$ can be constructed as a Lagrange interpolating polynomial of degree $r+1$, for any subset $\mathcal{R}\subseteq[0:3)$ of size $1\leq r\leq 3$ and any $i\in\mathcal{R},k\in[0:2),m\in[0:3)$, given by
\begin{IEEEeqnarray}{rCl}
\widetilde{q}_{i,k}^{(m,\mathcal{R})}(x)&=&\widetilde{z}_{i,k,0}^{(m,\mathcal{R})}\cdot\underbrace{\frac{x-\alpha_{1}}{\alpha_{0}-\alpha_{1}}\cdot\bigg(\prod\limits_{j\in\mathcal{R}}\frac{x-\beta_{j,k}}{\alpha_{0}-\beta_{j,k}}\bigg)}_{=v_{i,k,0}^{(\mathcal{R})}(x)}
+\widetilde{z}_{i,k,1}^{(m,\mathcal{R})}\cdot\underbrace{\frac{x-\alpha_{0}}{\alpha_{1}-\alpha_{0}}\cdot\bigg(\prod\limits_{j\in\mathcal{R}}\frac{x-\beta_{j,k}}{\alpha_{1}-\beta_{j,k}}\bigg)}_{=v_{i,k,1}^{(\mathcal{R})}(x)} \notag\\
&&+\left\{
\begin{array}{@{}ll}
\underbrace{\bigg(\prod\limits_{j\in\mathcal{R}\backslash\{i\}}\frac{x-\beta_{j,k}}{\beta_{i,k}-\beta_{j,k}}\bigg)\cdot\frac{(x-\alpha_{0})(x-\alpha_1)}{(\beta_{i,k}-\alpha_{0})(\beta_{i,k}-\alpha_1)}}_{=v_{i,k,2}^{(\mathcal{R})}(x)},
&\mathrm{if}\; m=\theta \\
0, & \mathrm{otherwise}
\end{array} \right., \label{example:feasible:query:poly}
\end{IEEEeqnarray}
which satisfies
\begin{IEEEeqnarray}{rCll}
\widetilde{q}_{i,k}^{(m,\mathcal{R})}(\beta_{j,k})&=&\left\{
\begin{array}{@{}ll}
1, &\mathrm{if}\,\, m=\theta, j=i\\
0, & \mathrm{otherwise}
\end{array}
\right.,& ~\forall\, j\in\mathcal{R},\label{example:feasible:query:poly:2}\\
\widetilde{q}_{i,k}^{(m,\mathcal{R})}(\alpha_{t})&=&\widetilde{z}_{i,k,t}^{(m,\mathcal{R})},&~\forall\, t\in[0:2). \notag
\end{IEEEeqnarray}

We have now completed the construction of the PIR coding framework based on Lagrange codes.
Next, we verify the feasibility of this PIR coding framework.

\subsubsection*{Verification of Conditions $\widetilde{\mathrm{F}}0$ and $\widetilde{\mathrm{F}}1$}
According to \eqref{example:feasible:parameter:0}--\eqref{example:feasible:parameter:1} and Lemma \ref{g-cauchy matrix}, it is not difficult to verify that the constructed PIR coding framework satisfies the conditions $\widetilde{\mathrm{F}}$0 and $\widetilde{\mathrm{F}}$1.

\subsubsection*{Verification of Conditions $\widetilde{\mathrm{F}}2$ and $\widetilde{\mathrm{F}}3$}
Regarding the conditions $\widetilde{\mathrm{F}}$2 and $\widetilde{\mathrm{F}}$3, it suffices to demonstrate that for any subset $\Delta\subseteq[0:8)$ of size $5+r-d$ and any subset $\mathcal{D}\subseteq\mathcal{R}$ of size $d$ with $d\in[0:r)$, the partial file $\widetilde{\mathbf{W}}^{(\theta)}(\mathcal{R},:)$ can be decoded from the server responses $\widetilde{\mathcal{A}}_{\Delta}^{(\theta,\mathcal{R})}$ along with the $d$ row vectors $\widetilde{\mathbf{W}}^{(\theta)}(\mathcal{D},:)$ of $\widetilde{\mathbf{W}}^{(\theta)}(\mathcal{R},:)$, as given by
\begin{IEEEeqnarray}{c}\label{Framework:subresponse}
\left\{ \widetilde{\mathcal{A}}_{\Delta}^{(\theta,\mathcal{R})},\widetilde{\mathbf{W}}^{(\theta)}(\mathcal{D},:) \right\}
=\left\{ \widetilde{A}_{n,k}^{(\theta,\mathcal{R})},\widetilde{w}^{(\theta)}_{i,k}:n\in\Delta,k\in[0:2),i\in\mathcal{D} \right\}, 
\end{IEEEeqnarray}
where the equation follows from \eqref{Example:PIR:answer:1} and \eqref{example:feasible:PIR}, and the sub-response $\widetilde{A}_{n,k}^{(\theta,\mathcal{R})}$ is given in \eqref{Example:PIR:answer:2} as follows:
\begin{IEEEeqnarray}{c}
\widetilde{A}_{n,k}^{(\theta,\mathcal{R})}=\sum\limits_{m\in[0:3)}\sum\limits_{i\in\mathcal{R}}\widetilde{q}_{i,k}^{(m,\mathcal{R})}(\alpha_n)\cdot \tilde{f}_{i}^{(m)}(\alpha_n). \notag
\end{IEEEeqnarray}

Let $\widetilde{A}_{k}^{(\theta,\mathcal{R})}(x)$ represent an answer polynomial, given by
\begin{IEEEeqnarray}{c}
\widetilde{A}_{k}^{(\theta,\mathcal{R})}(x)=\sum\limits_{m\in[0:3)}\sum\limits_{i\in\mathcal{R}}\widetilde{q}_{i,k}^{(m,\mathcal{R})}(x)\cdot \tilde{f}_{i}^{(m)}(x), \notag
\end{IEEEeqnarray}
which has a degree of $4+r$ by \eqref{example:feasible:storage:poly} and \eqref{example:feasible:query:poly}. Obviously, the sub-response $\widetilde{A}_{n,k}^{(\theta,\mathcal{R})}$ is an evaluation of $\widetilde{A}_{k}^{(\theta,\mathcal{R})}(x)$ at $x=\alpha_n$. In particular, by \eqref{example:feasible:storage:poly:2} and \eqref{example:feasible:query:poly:2}, evaluating $\widetilde{A}_{k}^{(\theta,\mathcal{R})}(x)$ at $x=\beta_{i,k}$ for any $i\in\mathcal{R}$ and $k\in[0:2)$ yields
\begin{IEEEeqnarray}{c}
\widetilde{A}_{k}^{(\theta,\mathcal{R})}(\beta_{i,k})=\sum\limits_{m\in[0:3)}\sum\limits_{j\in\mathcal{R}}\widetilde{q}_{j,k}^{(m,\mathcal{R})}(\beta_{i,k})\cdot \tilde{f}_{j}^{(m)}(\beta_{i,k})
=\tilde{f}_{i}^{(\theta)}(\beta_{i,k}) 
=\widetilde{w}_{i,k}^{(\theta)}.\label{Framework:lagrange:decoding}
\end{IEEEeqnarray}
Therefore, the values in \eqref{Framework:subresponse} can be equivalently written as  
\begin{IEEEeqnarray}{c}\notag
\left\{ \widetilde{A}_{k}^{(\theta,\mathcal{R})}(\alpha_n),\widetilde{A}_{k}^{(\theta,\mathcal{R})}(\beta_{i,k}):n\in\Delta,k\in[0:2),i\in\mathcal{D} \right\}. 
\end{IEEEeqnarray}
Since the $5+r$ elements $\{\alpha_n,\beta_{i,k}:n\in\Delta,i\in\mathcal{D}\}$ are pairwise distinct  by \eqref{example:feasible:parameter:0}--\eqref{example:feasible:parameter:1} for any $k\in[0:2)$,
we can interpolate the polynomial $\widetilde{A}_{k}^{(\theta,\mathcal{R})}(x)$ of degree $4+r$ from the values $\{\widetilde{A}_{k}^{(\theta,\mathcal{R})}(\alpha_n),\widetilde{A}_{k}^{(\theta,\mathcal{R})}(\beta_{i,k})\}_{n\in\Delta,i\in\mathcal{D}}$. 
Then, evaluating $\widetilde{A}_{k}^{(\theta,\mathcal{R})}(x)$ at $x=\beta_{i,k}$ for all $i\in\mathcal{R}$ and $k\in[0:2)$ allows us to decode the partial file $\widetilde{\mathbf{W}}^{(\theta)}(\mathcal{R},:)$ by \eqref{Framework:lagrange:decoding} and \eqref{example:feasible:PIR}.

Consequently, in this example, we demonstrate that the PIR coding framework based on Lagrange codes is feasible over a finite filed of size $q=11$.



\subsection{Proof of Lemma \ref{lemma:query:array}}\label{proof:C0-C3}
From \eqref{array:0}, all elements of the subarray $\mathbf{U}^{0}$ are integers in the range $0$ to $P-1$. Moreover, by \eqref{array:h:1} and \eqref{array:h:2}, all elements of the subarray $\mathbf{U}^{h}$ are assigned values from both the special symbol ``*'' and the elements 
of the previous subarrays $\mathbf{U}^{0},\mathbf{U}^{1},\ldots,\mathbf{U}^{h-1}$ for any $h\in[1:\lambda)$. So, it is easy to iteratively prove that C0 is satisfied.

Next, we proceed to prove C1.
By \eqref{array:0} again, the $\lambda$ elements in each column of the subarray $\mathbf{U}^{0}$ are all integers, and particularly $(u_{i,j}^{0})_{\lambda}=i$ for any $i\!\in\![0\!:\!\lambda)$ and $j\!\in\![0\!:\!\Gamma^0)$. This completes the proof of C1 in the case of $h=0$.

By \eqref{query:array:subarray}-\eqref{number:column}, the elements of the array $\mathbf{U}$ and the subarrays $\mathbf{U}^{0},\mathbf{U}^{1},\ldots,\mathbf{U}^{\lambda-1}$ satisfy the following relationship:
\begin{IEEEeqnarray}{c}\label{array:subarray}
u_{i,j}=\left\{\begin{array}{@{}ll}
u_{i,j}^{0},   &\text{if}~j\in[0:\frac{P}{\lambda}) \\
u_{i,j-\frac{P}{\lambda-h+1}}^{h},&\text{if}~j\in[\frac{P}{\lambda-h+1}:\frac{P}{\lambda-h})~\text{for some $h\in[1:\lambda)$}
\end{array}\right.,\quad\forall\,i\in[0:\lambda),j\in[0:P).
\end{IEEEeqnarray}
Then, in the $i$-th row of the subarray $\mathbf{U}^{h}$ for any $h\in[1:\lambda)$ and $i\in[0:\lambda)$, 
the elements with column indices in $\{(i+r)_{\lambda}+s\cdot\lambda:r\in[h:\lambda),s\in[0:\frac{\Gamma^h}{\lambda})\}$ satisfy
\begin{IEEEeqnarray}{rCl}
&&\Big\{u^{h}_{i,(i+r)_{\lambda}+s\cdot\lambda}:r\in[h:\lambda),s\in[0:\frac{\Gamma^h}{\lambda})\Big\} \notag\\
&\overset{(a)}{=}&\Big\{u_{i,(i+h-1)_{\lambda}+t\cdot\lambda}:t\in[0:\frac{P}{\lambda(\lambda-h+1)})\Big\} \notag\\
&=&\Big\{u_{i,(i+h-1)_{\lambda}+t\cdot\lambda}:t\in[0:\frac{P}{\lambda\lambda})\Big\}
\bigcup\Big\{u_{i,(i+h-1)_{\lambda}+t\cdot\lambda}:t\in[\frac{P}{\lambda(\lambda-h'+1)}:\frac{P}{\lambda(\lambda-h')}),h'\in[1:h)\Big\} \notag\\
&\overset{(b)}{=}&\Big\{u_{i,(i+h-1)_{\lambda}+t\cdot\lambda}^{0}:t\!\in\![0\!:\!\frac{P}{\lambda\lambda})\Big\}
\bigcup\Big\{u^{h'}_{i,(i+h-1)_{\lambda}+t\cdot\lambda-\frac{P}{\lambda-h'+1}}:t\!\in\![\frac{P}{\lambda(\lambda\!-\!h'\!+\!1)}:\frac{P}{\lambda(\lambda\!-\!h')}),h'\in[1:h)\Big\} \notag\\
&=&\Big\{u_{i,(i+h-1)_{\lambda}+s\cdot\lambda}^{0}:s\in[0:\frac{P}{\lambda\lambda})\Big\}\bigcup
\Big\{u_{i,(i+h-1)_{\lambda}+s\cdot\lambda}^{h'}:s\in [0:\frac{P}{\lambda(\lambda-h')(\lambda-h'+1)}),h'\in[1:h)\Big\}\IEEEeqnarraynumspace \notag\\
&\overset{(c)}{=}&
\Big\{u_{i,(i+h-1)_{\lambda}+s\cdot\lambda}^{h'}:s\in [0:\frac{\Gamma^{h'}}{\lambda}),h'\in[0:h)\Big\},\IEEEeqnarraynumspace \label{Proof:C1}
\end{IEEEeqnarray}
where $(a)$ is due to  \eqref{array:h:2} and \eqref{number:column}, $(b)$ follows from \eqref{array:subarray}, and $(c)$ follows by \eqref{number:column} again.


Since the elements $\{u^{0}_{i,(i+r)_{\lambda}+s\cdot\lambda}:r\in[0:\lambda),s\in[0:\frac{\Gamma^0}{\lambda})\}$ of the subarray $\mathbf{U}^{0}$ are all integers, the elements $\{u^{1}_{i,(i+r)_{\lambda}+s\cdot\lambda}:r\in[1:\lambda),s\in[0:\frac{\Gamma^1}{\lambda})\}$ of the subarray $\mathbf{U}^{1}$ are also all integers by setting $h=1$ in \eqref{Proof:C1}, for any $i\in[0:\lambda)$. Furthermore, by iteratively setting $h=2,3,\ldots,\lambda-1$ in \eqref{Proof:C1}, it is not difficult to prove that the elements $\{u^{h}_{i,(i+r)_{\lambda}+s\cdot\lambda}:r\in[h:\lambda),s\in[0:\frac{\Gamma^h}{\lambda})\}$ of the subarray $\mathbf{U}^{h}$ are all integers.
Generally, for any $h\in[1:\lambda)$, the elements $\{u^{h}_{i,(i+r)_{\lambda}+s\cdot\lambda}:i\in[0:\lambda),r\in[h:\lambda),s\in[0:\frac{\Gamma^h}{\lambda})\}$ of the subarray $\mathbf{U}^{h}$ are all integers.


Then, for any given $j\in[0:\Gamma^h)$ and $t\in[h:\lambda)$, we find that $u^{h}_{(j-t)_{\lambda},j}\in\{u^{h}_{i,(i+r)_{\lambda}+s\cdot\lambda}:i\in[0:\lambda),r\in[h:\lambda),s\in[0:\frac{\Gamma^h}{\lambda})\}
$ due to the relationship $(i+r)_{\lambda}+s\cdot\lambda=j$ in the case of $i=(j-t)_{\lambda}$, $r=t$, and $s=\lfloor\frac{j}{\lambda}\rfloor$.
Therefore, in the $j$-th column of the subarray $\mathbf{U}^{h}$, the $\lambda-h$ elements $\{u^{h}_{(j-t)_{\lambda},j}:t\in[h:\lambda)\}$  are all integers,
while the remaining $h$ elements $\{u^{h}_{(j-t)_{\lambda},j}:t\in[0:h)\}$ are all the special symbols ``*'' by \eqref{U0:cons}. 
Accordingly, for any $h\in[0:\lambda)$ and $j\in[0:\Gamma^h)$, we have 
\begin{IEEEeqnarray}{c}\label{Array:integer:element}
\mathcal{R}^h_j=\{u^{h}_{(j-t)_{\lambda},j}:t\in[h:\lambda)\}.
\end{IEEEeqnarray}



Recall from \eqref{array:0} that all elements in the $i$-th row of the subarray $\mathbf{U}^{0}$ are equal to $i$ modulo $\lambda$ for any $i\in[0:\lambda)$, i.e., $(u_{i,j}^{0})_{\lambda}=i$ for all $j\in[0:\Gamma^0)$.
Since the integer elements $\{u^{h}_{i,(i+r)_{\lambda}+s\cdot\lambda}:r\in[h:\lambda),s\in[0:\frac{\Gamma^h}{\lambda})\}$ in the $i$-th row of the subarray $\mathbf{U}^{h}$ are assigned values from the integer elements in the \emph{same} row $i$ of the previous subarrays $\mathbf{U}^{0},\mathbf{U}^{1},\ldots,\mathbf{U}^{h-1}$ by \eqref{array:h:2} and \eqref{Proof:C1}, a simple iterative proof shows that the integer elements $\{u^{h}_{i,(i+r)_{\lambda}+s\cdot\lambda}:r\in[h:\lambda),s\in[0:\frac{\Gamma^h}{\lambda})\}$ are all equal to $i$ modulo $\lambda$, for any $i\in[0:\lambda)$ and $h\in[1:\lambda)$. 
Therefore, for any given $j\in[0:\Gamma^h)$ and $t\in[h:\lambda)$, the integer element $u^{h}_{(j-t)_{\lambda},j}$ satisfies $(u^{h}_{(j-t)_{\lambda},j})_{\lambda}=(j-t)_{\lambda}$ due to $u^{h}_{(j-t)_{\lambda},j}\in\{u^{h}_{i,(i+r)_{\lambda}+s\cdot\lambda}:i=(j-t)_{\lambda},r\in[h:\lambda),s\in[0:\frac{\Gamma^h}{\lambda})\}$. 
Then, by \eqref{Array:integer:element}, we have $\widetilde{\mathcal{R}}^{h}_j=\{(j-t)_{\lambda}:t\in[h:\lambda)\}$, which ensures $|\mathcal{R}_j^{h}|=|\widetilde{\mathcal{R}}^{h}_j|=\lambda-h$.
This completes the proof of C1 in the case of $h\in[1:\lambda)$.

It is straightforward to prove C2 by \eqref{array:0}.
We now begin the proof of C3. 
For any given $h\in[1:\lambda)$,
\begin{IEEEeqnarray}{rCl}
\bigcup\limits_{j\in[0:\Gamma^h)}\mathcal{R}^h_j&\overset{(a)}{=}&\Big\{u^{h}_{(j-t)_{\lambda},j}:t\in[h:\lambda),j\in[0:\Gamma^h)\Big\} \notag\\
&\overset{(b)}{=}&\Big\{u^{h}_{i,(i+r)_{\lambda}+s\cdot\lambda}:i\in[0:\lambda),r\in[h:\lambda),s\in[0:\frac{\Gamma^{h}}{\lambda})\Big\} \notag\\
&\overset{(c)}{=}&\Big\{u_{i,(i+h-1)_{\lambda}+s\cdot\lambda}^{h'}:i\in[0:\lambda),s\in[0:\frac{\Gamma^{h'}}{\lambda}),h'\in[0:h)\Big\},\notag
\end{IEEEeqnarray}
where $(a)$ follows from \eqref{Array:integer:element}, $(b)$ is due to the fact that the set
$\{u^{h}_{(j-t)_{\lambda},j}:t\in[0:h),j\in[0:\Gamma^h)\}$
is equivalent to
$\{u^{h}_{i,(i+r)_{\lambda}+s\cdot\lambda}:i\in[0:\lambda),r\in[0:h),s\in[0:\frac{\Gamma^{h}}{\lambda})\}$ by \eqref{U0:cons}-\eqref{array:h:1},
and $(c)$ follows by \eqref{Proof:C1}.
In other words, for any given $h\in[0:\lambda-1)$ and $r\in[h+1:\lambda)$,
\begin{IEEEeqnarray}{c}
\bigcup\limits_{j\in[0:\Gamma^r)}\mathcal{R}^r_j=\Big\{u_{i,(i+r-1)_{\lambda}+s\cdot\lambda}^{h'}:i\in[0:\lambda),s\in[0:\frac{\Gamma^{h'}}{\lambda}),h'\in[0:r)\Big\}.\notag
\end{IEEEeqnarray}
Then, for any $j\in[0:\Gamma^{h})$ and $r\in[h+1:\lambda)$, the integer element $u^{h}_{(j-r+1)_{\lambda},j}$ in the $j$-th column of the subarray $\mathbf{U}^{h}$ satisfies
\begin{IEEEeqnarray}{c}
u^{h}_{(j-r+1)_{\lambda},j}\in\left\{u_{i,(i+r-1)_{\lambda}+s\cdot\lambda}^{h'}:i=(j-r+1)_{\lambda},s=\big\lfloor \frac{j}{\lambda}\big\rfloor,h'=h\right\}\subseteq\bigcup\limits_{j\in[0:\Gamma^r)}\mathcal{R}^r_j.\notag
\end{IEEEeqnarray}
This completes the proof of C3.
Consequently, the constructed query array satisfies the conditions C0--C3.

\end{appendix}


\bibliographystyle{ieeetr}
\bibliography{reference.bib}

\end{document}